\documentclass[11pt, a4paper]{article}
\usepackage[letterpaper, left=1in, right=1in, top=0.9in, bottom=0.9in]{geometry}

\bibliographystyle{plainurl}


\usepackage{amsmath}
\usepackage{amssymb}
\usepackage{amsfonts}
\usepackage{amsthm} 
\usepackage{graphicx}
\usepackage{comment}

\usepackage{thm-restate}

\usepackage{hyperref}
\usepackage{here}

\usepackage{cite}

\newtheorem{theorem}{Theorem}[section]
\newtheorem{lemma}[theorem]{Lemma}

\newtheorem{corollary}[theorem]{Corollary}

\newtheorem{observation}[theorem]{Observation}
\newtheorem{definition}[theorem]{Definition}
\newtheorem{remark}[theorem]{Remark}

\usepackage[linesnumbered,noend,ruled,vlined]{algorithm2e}

\SetKwInput{KwInput}{Input}     
\SetKwInput{KwOutput}{Output}   
\SetKwInput{KwQuestion}{Question}   
\SetKw{Kwto}{to}

\usepackage{algorithmic}

\newcommand{\algorithmiccontinue}{\textbf{continue}}
\newcommand{\Continue}{\algorithmiccontinue}

\usepackage{tabularx}

\usepackage{url}

\usepackage{makecell}
\usepackage{braket}

\usepackage{color}
\ifdefined\ShowComment
\newcommand{\terao}[1]{{\bf \color{cyan} TERAO COMMENT: #1}}
\else
\newcommand{\terao}[1]{}
\fi

 \newcommand{\MoriModifyRedColor}{true}

\ifdefined\MoriModifyRedColor
\newcommand{\ryuhei}[1]{\textcolor{red}{#1}}
\newcommand{\ryuheidel}[1]{\ryuhei{\sout{{#1}}}}

\else
\newcommand{\ryuhei}[1]{#1}
\newcommand{\ryuheidel}[1]{}

\fi

\ifdefined\TeraoModifyBlueColor
\newcommand{\tatsuya}[1]{\textcolor{blue}{#1}}
\newcommand{\tatsuyadel}[1]{\tatsuya{\sout{{#1}}}}

\newcommand{\tatsuyawithdraw}[1]{\tatsuya{\sout{{#1}}}}
\else
\newcommand{\tatsuya}[1]{#1}
\newcommand{\tatsuyadel}[1]{}

\newcommand{\tatsuyawithdraw}[1]{}
\fi

\global\long\def\T{\mathcal{T}}%












\title{Parameterized Quantum Query Algorithms for Graph Problems\thanks{
The authors thank Yusuke Kobayashi and Kazuhisa Makino for their insightful comments.
This work was partially supported by JST FOREST Program Grant Number JPMJFR216V and JSPS KAKENHI Grant Numbers JP20H04138, JP20H05966, JP22H00522, and JP24KJ1494.
}}
\author{Tatsuya Terao\thanks{
Research Institute for Mathematical Sciences, Kyoto University.
E-mail: ttatsuya@kurims.kyoto-u.ac.jp
}
\and Ryuhei Mori\thanks{
Graduate School of Mathematics, Nagoya University.
E-mail: mori@math.nagoya-u.ac.jp
}
}

\date{}



\begin{document}

\maketitle

\begin{abstract}
In this paper, we consider the parameterized quantum query complexity for graph problems. We design parameterized quantum query algorithms for {\sc $k$-vertex cover} and {\sc $k$-matching} problems, and present lower bounds on the parameterized quantum query complexity. Then, we show that our quantum query algorithms are optimal up to a constant factor when the parameters are small. Our main results are as follows.

\textbf{Parameterized quantum query complexity of vertex cover.}
In the {\sc $k$-vertex cover} problem, we are given an undirected graph $G$ with $n$ vertices and an integer $k$, and the objective is to determine whether $G$ has a vertex cover of size at most $k$. We show that the quantum query complexity of the {\sc $k$-vertex cover} problem is $O(\sqrt{k}n + k^{3/2}\sqrt{n})$ in the adjacency matrix model. For the design of the quantum query algorithm, we use the method of kernelization, a well-known tool for the design of parameterized classical algorithms, combined with Grover's search.

\textbf{Parameterized quantum query complexity of matching.}
In the {\sc $k$-matching} problem, we are given an undirected graph $G$ with $n$ vertices and an integer $k$, and the objective is to determine whether $G$ has a matching of size at least $k$. We show that the quantum query complexity of the {\sc $k$-matching} problem is $O(\sqrt{k}n + k^{2})$ in the adjacency matrix model. We obtain this upper bound by using Grover's search carefully and analyzing the number of Grover's searches by making use of potential functions. We also show that the quantum query complexity of the maximum matching problem is $O(\sqrt{p}n + p^{2})$ where $p$ is the size of the maximum matching. For small $p$, it improves known bounds $\tilde{O}(n^{3/2})$ for bipartite graphs [Blikstad--v.d.Brand--Efron--Mukhopadhyay--Nanongkai, FOCS 2022] and $O(n^{7/4})$ for general graphs [Kimmel--Witter, WADS 2021].

\textbf{Lower bounds on parameterized quantum query complexity.}
We also present lower bounds on the quantum query complexities of the {\sc $k$-vertex cover} and {\sc $k$-matching} problems. The lower bounds prove the optimality of the above parameterized quantum query algorithms up to a constant factor when $k$ is small. Indeed, the quantum query complexities of the {\sc $k$-vertex cover} and {\sc $k$-matching} problems are both $\Theta(\sqrt{k} n)$ when $k = O(\sqrt{n})$ and $k = O(n^{2/3})$, respectively.

\end{abstract}

\newpage

\section{Introduction} \label{sec:intro}

The query complexity for Boolean functions has been studied much both on the classical and quantum settings.
In contrast to the circuit complexity, the quantum advantages on the query complexity have been rigorously proved for several problems~\cite{grover1996fast,beals2001quantum}.
While the classical query complexities are trivial for many graph problems, the quantum query complexities are non-trivially small for many graph problems~\cite{berzina2004quantum,durr2006quantum,kimmel2021query,beigi2020quantum,childs2012quantum,belovs2012span,jeffery2023quantum,cade2018time,zhang2004power,blikstad2022nearly,carette2020extended,le2014improved,lee2013improved,jeffery2013nested,magniez2007quantum,belovs2012span2,le2017quantum,ambainis2008quantumISAAC,sun2004graph,dorn2005quantum,childs2003quantum,lee2012learning,anderson2023improved,ambainis2011quantum,jarret2018quantum,apers2021quantum,aricnvs2015span,dorn2009quantum,apers2022quantum,zhu2012quantum}.

In this paper, we consider the \emph{parameterized} quantum query complexity for graph problems.
In terms of the classical time complexity, the theory of parameterized time complexity classifies NP-hard problems on a finer scale.
Some graph problems can be solved in polynomial time, e.g., the maximum matching problem.
On the other hand, several graph problems are NP-hard, e.g., the minimum vertex cover problem, which can not be solved in polynomial time unless $\rm{P}=\rm{NP}$.
For the detailed analysis of the NP-hard problems, we have studied the parameterized complexity, where we design and analyze algorithms for problems with fixed parameters. 
Then, our goal is to design algorithms that perform efficiently when the parameters are small.
Formally, the definition of a \emph{parameterized problem} is a language $L \subseteq \Sigma^* \times \mathbb{N}$, where $\Sigma$ is a fixed finite alphabet.
For an instance $(x, k) \in \Sigma^* \times \mathbb{N}$, we call $k$ the \emph{parameter}.
The main focus in the field of the parameterized complexity is whether there exists an algorithm to solve the parameterized problem in $f(k) \cdot |x|^{O(1)}$ time, where $f$ is an arbitrary computable function and $|x|$ is the size of $x$.
A parameterized problem is called \emph{Fixed Parameter Tractable (FPT)} if such an algorithm exists.
The running time of the FPT algorithms is polynomial in the input size $|x|$ with a constant exponent independent of the parameter $k$ if the parameter $k$ is constant.
This parameter is typically taken to be the size of the solution.
Then, we consider the problem of deciding whether there is a solution of size bounded by $k$.
For example, the $k$-vertex cover problem, which asks whether a given graph has a vertex cover of size at most $k$, can be solved in $O(2^k \cdot n)$ time.
This means that the $k$-vertex cover problem is FPT.

For the adjacency matrix model, the quantum query complexity for graph problems is trivially $O(n^2)$, where $n$ is the number of vertices.
For the parameterized graph problems, our objective is to design quantum query algorithms with the query complexity $f(k) \cdot n^c$, where $c$ is some constant smaller than $2$.
We will introduce a class of quantum query algorithms called \emph{Fixed Parameter Improved (FPI)} algorithms.
For the definition of FPI, we consider an unparameterized problem of the parameterized problem.
For example, the unparameterized problems of the $k$-vertex cover and $k$-matching problems are the minimum vertex cover and the maximum matching problems, respectively.
If the query complexity of the unparameterized problem is $\Omega(n^e)$, the parameterized problem is FPI if its query complexity is at most $f(k) \cdot n^c$ for some constant $c<e$ independent of $k$.
When we can take $f(k)$ as a polynomial, we say that the parameterized problem is polynomial FPI.

For the minimum vertex cover problem and the maximum matching problem, the quantum query complexity is $\Omega(n^{3/2})$~\cite{zhang2004power}.
In this work, we prove that the quantum query complexities of the $k$-vertex cover problem and $k$-matching problem are $O(\sqrt{k}n+k^{3/2}\sqrt{n})$ and $O(\sqrt{k}n+k^2)$, respectively.
Hence, the $k$-vertex cover and $k$-matching problems are polynomial FPI.

We also prove lower bounds $\Omega(\sqrt{k}n)$ of the quantum query complexities of the $k$-vertex cover and $k$-matching problems using the adversary method.
Hence, the quantum query algorithms in this paper are optimal up to a constant factor for small $k$.

Note that no non-trivial quantum query complexity $O(n^{2-\epsilon})$ has been obtained for the minimum vertex cover problem.
The best known upper bound on the quantum query complexity of the maximum matching problem is $O(n^{7/4})$~\cite{kimmel2021query}.

In summary, this work considers the parameterized quantum query complexity for graph problems, introduces FPI algorithms, and presents quantum query algorithms for the $k$-vertex cover and $k$-matching problems that are polynomial FPI and optimal for small $k$.

\subsection{Our Contribution} \label{subsec:our_contribution}

\subsubsection{Parameterized Quantum Query Complexity of Vertex Cover.}
In this paper, we consider the quantum query complexity of the {\sc $k$-vertex cover} problem,
which is one of the central problems in the study of parameterized computation.
This problem is also one of Karp's 21 NP-complete problems \cite{karp2010reducibility}.
A vertex set $S \subseteq V(G)$ is a \emph{vertex cover} if every edge of the graph has at least one endpoint in $S$.
In the {\sc $k$-vertex cover} problem, we are given an undirected graph $G = (V, E)$ and an integer $k$, and 
the objective is to determine whether $G$ has a vertex cover of size at most $k$.

Here, we mention some known results related to our work.
There have been many studies to improve the running time of an FPT algorithm for the {\sc $k$-vertex cover}
in the literature
\cite{chen2010improved,balasubramanian1998improved,buss1993nondeterminism,chandran2004refined,chen2001vertex,downey1992fixed,niedermeier1999upper,niedermeier2003efficient,stege1999improved,harris_et_al:LIPIcs.STACS.2024.40}.
The state-of-the-art algorithm, given by Harris--Narayanaswamy \cite{harris_et_al:LIPIcs.STACS.2024.40}, has the time complexity $O(1.2575^k \cdot \text{poly}(n))$.

The minimum vertex cover problem is the optimization version of the {\sc $k$-vertex cover} problem.
In this problem, the objective is to find the smallest vertex cover of $G$. 
The trivial upper bound $O(n^2)$ on the quantum query complexity of the minimum vertex cover problem in the adjacency matrix model has not been improved.
The largest known lower bound on the quantum query complexity of the minimum vertex cover problem is $\Omega(n^{3/2})$ in the adjacency matrix model \cite{zhang2004power}.\footnote{Zhang \cite{zhang2004power} showed that the quantum query complexity of the maximum matching problem is $\Omega(n^{3/2})$. For bipartite graphs, the minimum vertex cover has the same cardinality as the maximum matching by well-known K\"{o}nig's theorem (see e.g., \cite[Theorem 16.2]{schrijver2003combinatorial}). Thus, we can easily obtain an $\Omega(n^{3/2})$ lower bound for the minimum vertex cover problem.}

The first main contributions of this paper is to obtain an upper bound on the quantum query complexity of the {\sc $k$-vertex cover} problem.
\begin{theorem} \label{thm:main_theorem_matrix_model}
The quantum query complexity for finding a vertex cover of size at most $k$ or determining that there does not exist a vertex cover of size at most $k$ with bounded error is $O(\sqrt{k}n + k^{3/2}\sqrt{n})$ in the adjacency matrix model. 
\end{theorem}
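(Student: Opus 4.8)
The plan is to combine the classical kernelization technique for vertex cover (the Buss kernel) with Grover's search, executed in two phases so that the quantum query cost is dominated by $O(\sqrt{k}\,n)$ whenever $k$ is small. First I would use Grover's search to identify high-degree vertices. A vertex of degree exceeding $k$ must belong to every vertex cover of size at most $k$, so I would repeatedly search the adjacency matrix for a vertex whose number of neighbours is larger than $k$, add it to the partial cover $S$, decrement $k$, and remove it from the graph. Detecting whether such a vertex exists and, if so, finding one, reduces to a Grover search over the $n$ rows of the adjacency matrix where the test for a single row "does this vertex have more than $k$ ones?" itself uses approximate counting / Grover on the $n$ entries of that row. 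Since we add at most $k$ such vertices, and each detection step costs roughly $\tilde{O}(\sqrt{n}\cdot\sqrt{n}) = \tilde O(n)$ (or more carefully, Grover over $n$ vertices with an $O(\sqrt{n})$-query test, giving $O(\sqrt{n}\cdot\sqrt{n})=O(n)$ per removed vertex, hence $O(kn)$ total — I will need to be careful here and instead bound the total by amortizing), this phase should be made to fit in the budget. The cleaner accounting: after removing high-degree vertices, every remaining vertex has degree at most $k$, so if the remaining graph has a vertex cover of size at most $k'$ (the residual budget) it has at most $k'\cdot k \le k^2$ edges; if we ever detect more than $k^2$ edges we reject.

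The second phase handles the kernelized instance, which has at most $k$ "interesting" vertices (those incident to the $\le k^2$ remaining edges) once isolated vertices are discarded, but we must still work inside the original $n$-vertex adjacency matrix. I would first collect, via Grover's search, all edges of the residual graph: searching for edges one at a time in the $n\times n$ adjacency matrix, each search costing $O(n)$ queries (Grover over $n^2$ entries), and there are at most $k^2$ of them, giving $O(k^2 \cdot n)$ — too expensive. Instead, the right approach is to find, for each of the at most $2k$ vertices that can be incident to a cover-relevant edge, its neighbourhood by Grover search within its row: but identifying those $\le 2k$ vertices is the subtlety. The standard trick is: a vertex that is not high-degree but is isolated in the kernel contributes nothing; we enumerate candidate "kernel vertices" by noting that every edge not yet covered has both endpoints of degree $\le k$, and there are at most $k^2$ such edges spanning at most $2k^2$... this still is not obviously $\le$ small. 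I would therefore follow the Buss kernelization more literally: after the high-degree phase uses budget $k_1$, set $k_2 = k-k_1$; if the residual graph has more than $k_2^2$ edges, reject; otherwise the residual graph has at most $k_2^2 \le k^2$ edges and at most $2k_2^2$ non-isolated vertices, but crucially any vertex cover uses at most $k_2$ vertices, so we can afford to read out the entire kernel. Reading the kernel means: find all $\le k^2$ edges by iterated Grover search over the $n^2$ matrix entries, which costs $O(\sqrt{n^2 / t})$ for the $t$-th edge by the standard "find all marked items" bound, summing to $O(\sqrt{k^2}\cdot n) = O(kn)$ total — and then solve the resulting explicit small instance classically with the $O(1.2575^{k}\cdot\mathrm{poly})$ FPT algorithm at zero further query cost. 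Combining phases gives $O(\sqrt{k}\,n + kn)$... which is worse than claimed; so the actual algorithm must be more clever in the high-degree extraction and edge-collection, getting the $\sqrt{k}$ rather than $k$ factor by counting edges/vertices with quantum counting rather than one-at-a-time extraction, and paying $k^{3/2}\sqrt n$ only for reading the $O(k^2)$-edge kernel via $\sum_{t\le k^2}\sqrt{n^2/t} = O(n\sqrt{k^2}) $ hmm — I will reconcile this: the $k^{3/2}\sqrt n$ term plausibly comes from $O(k)$ Grover searches each over a single length-$n$ row, i.e. $O(k\cdot\sqrt{n}\cdot\sqrt{k})$ using amortized counting, for the $O(2k)$ kernel vertices' neighbourhoods.

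The key steps in order, then, are: (1) by iterated Grover search find and remove vertices of degree $>k$, decreasing the budget each time, halting after at most $k$ removals or upon exhausting the budget; analyze this phase to cost $O(\sqrt{k}\,n)$ by using quantum counting so the per-vertex degree test is cheap and the outer search over candidate high-degree vertices amortizes. (2) On the residual graph with all degrees $\le k$ and residual budget $k'$, use quantum counting to check whether the number of remaining edges exceeds $k'^2$; if so, reject (it has no cover of size $k'$). (3) Otherwise extract the kernel: identify the at most $2k'^2 \le 2k^2$ non-isolated vertices and their adjacencies by Grover search, with total query cost $O(k^{3/2}\sqrt n)$. (4) Feed the explicit kernel of size $O(k^2)$ into a classical FPT algorithm, which uses no queries, and output its answer (adding back the forced high-degree vertices). (5) Bound the total error probability by $O(1/3)$ via standard amplitude-amplification success-probability boosting, since we perform $\mathrm{poly}(k) \le \mathrm{poly}(n)$ Grover/counting subroutines, each made to succeed with probability $1-1/\mathrm{poly}(n)$ at a logarithmic multiplicative overhead that is absorbed into the stated bound (or handled via the known log-free "find all marked elements" primitives).

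The main obstacle I anticipate is the query-complexity bookkeeping in Phase (1) and Phase (3): naively, extracting $k$ high-degree vertices one at a time with an $O(n)$-query search each gives $O(kn)$, which is too large, so the proof must use the "variable-cost" or amortized Grover search (as in Dürr--Heyfitz--Høyer--Mhalla for graph connectivity and related bounds) to replace a factor $k$ by $\sqrt{k}$, and similarly when reading out the $O(k^2)$-edge kernel the summation $\sum_{t=1}^{k^2}\sqrt{n^2/t}$ must be controlled to land at $O(n\cdot k)$ while the row-by-row neighbourhood search over $O(k)$ kernel vertices must be shown to cost only $O(k^{3/2}\sqrt n)$. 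Making these amortized bounds rigorous — in particular handling the fact that we do not know $k'$ or the residual edge count in advance and so must interleave counting with extraction — is the technically delicate part; everything else is an assembly of standard kernelization facts and the standard Grover primitives.
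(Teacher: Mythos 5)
The missing idea is the maximal matching, and without it your accounting does not close. The paper's algorithm does not begin by hunting for high-degree vertices; it begins by running an amortized Las Vegas Grover search that, with $O(\sqrt{k}\,n)$ queries, returns either a matching of size $k+1$ (immediate rejection) or a maximal matching $M$ with $|M|\le k$. Since $M$ is maximal, $V(G)\setminus V(M)$ is an independent set, so \emph{every} edge of $G$ has an endpoint among the at most $2k$ vertices of $V(M)$. This is exactly the subtlety you flag and leave unresolved (``identifying those $\le 2k$ vertices is the subtlety''): the kernel vertices are $V(M)$ together with their neighbours, and all subsequent searches can be restricted to the pair set $V(M)\times V(G)$ of size at most $2kn$ instead of $n^2$. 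Enumerating the at most $T\le 2k(k+1)$ relevant edges in that restricted space by repeated Las Vegas Grover costs $\sum_{i=1}^{T}O\bigl(\sqrt{2kn/(T-i+1)}\bigr)=O(\sqrt{knT})=O(k^{3/2}\sqrt{n})$, which is where the second term of the bound comes from. The degree-$(k+1)$ elimination is done for free during this enumeration: once a vertex of $V(M)$ has accumulated $k+1$ incident edges it is added to the forced set $U$ and excluded from further searches, so no separate ``high-degree phase'' and no quantum counting are needed.

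As you yourself observe, your version does not reach the claimed bound: extracting high-degree vertices one at a time costs at least $O(kn)$ (worse with a nested degree test), and enumerating $O(k^2)$ kernel edges from the full $n^2$ search space costs $\sum_{t\le k^2}O(\sqrt{n^2/t})=O(kn)$; both exceed $O(\sqrt{k}\,n+k^{3/2}\sqrt{n})$ for all $k<n$. The amortization you hope will rescue the argument is not available in the $n^2$ universe: the ``find all $T$ marked items'' primitive costs $O(\sqrt{NT})$, so to land at $O(k^{3/2}\sqrt{n})$ with $T=O(k^2)$ you need the universe shrunk to $N=O(kn)$, which is precisely what the maximal matching provides. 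Your steps (2)--(4) are otherwise consistent with the paper's kernelization viewpoint (the kernel has $O(k^2)$ edges and is solved classically at zero further query cost), but as written the proposal has a genuine gap at its quantitative core.
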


We also provide a lower bound on the quantum query complexity of the {\sc$k$-vertex cover} problem.

\begin{restatable}{theorem}{quantumvclb} \label{thm:main_theorem_matrix_model_lower_bound}
For any constant $\epsilon > 0$, given a graph $G$ with $n$ vertices in the adjacency matrix model and an integer $k \leq (1 - \epsilon) n$,
the quantum query complexity of deciding whether $G$ has a vertex cover of size at most $k$ with bounded error is $\Omega(\sqrt{k} n)$.
\end{restatable}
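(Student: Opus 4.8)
The plan is to establish the lower bound via the quantum adversary method, reducing to a hard search-type problem embedded in the adjacency matrix. The key observation is that a graph with $n$ vertices can be split into a "fixed" part that forces roughly $k$ vertices into any vertex cover, plus a "search" part that hides a single critical edge whose presence changes the answer.

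Concretely, I would first reserve a set $A$ of about $k$ vertices and a set $B$ of the remaining $\Theta(n)$ vertices (assuming $k \le (1-\epsilon)n$ so that $|B| = \Omega(n)$). In the YES instances, put a matching of size $k-1$ inside $A$ (using $2(k-1)$ vertices of $A$, which is fine once $k$ is, say, at least a small constant; the small-$k$ cases are handled trivially since $\Omega(\sqrt{k}n)=\Omega(n)$ and any graph problem on $n$ vertices needs $\Omega(n)$ queries to even inspect a relevant row), plus add for each vertex $b \in B$ an edge from $b$ to one dedicated vertex of $A$ — wait, that would already blow up the cover. Instead, the cleaner construction: take the hard instances to be a disjoint union of (i) a gadget on $\Theta(k)$ vertices that deterministically requires exactly $k$ or $k-1$ cover vertices, and (ii) a "hidden edge" gadget. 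The hidden edge gadget is a collection of $\Theta(n)$ vertices partitioned into $\Theta(n)$ pairs; in a NO instance one designated pair has an edge (forcing one extra cover vertex, pushing the total above $k$), in a YES instance no pair has an edge. The adversary then needs to locate which of $\Theta(n)$ potential edges — across rows spanning $\Theta(n)$ vertices — is present, and each row-vs-row query reveals only one bit. Formalizing "needs $k-1$ vs $k$ cover vertices in the fixed part" is routine (a perfect matching on $2(k-1)$ vertices has minimum vertex cover exactly $k-1$); the crux is the hidden-edge search lower bound.

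To get the $\sqrt{k}\cdot n$ rather than just $\sqrt{n}$ or $n$, I would amplify the search: instead of hiding one edge among $\Theta(n)$ candidates, hide $\Theta(k)$ edges, i.e., make the fixed gadget require between $0$ and $k$ extra vertices depending on how many of $k$ independent "slots" contain a hidden edge, where each slot is itself a search over $\Theta(n/k)$ — no; to keep $\Omega(\sqrt k\, n)$ I want each of $k$ independent search subproblems to have size $\Theta(n)$ — that needs $kn$ vertices, too many. The right balance: I partition the $n$ vertices into $\Theta(k)$ blocks of size $\Theta(n/k)$; the $i$-th block hides (or not) one edge among $\Theta((n/k)^2)$ candidate pairs, contributing one unit to the cover size. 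The YES/NO boundary is "total hidden edges $\le k-1$" vs "$= k$". This is exactly a $k$-fold OR-of-search structure. By the composition properties of the adversary bound (or directly: the general adversary bound for $\mathrm{OR}_k \circ \mathrm{SEARCH}_m$ is $\Omega(\sqrt{k}\cdot\sqrt{m}\,)$ with $m=\Theta((n/k)^2)$ giving $\Omega(\sqrt k \cdot n/k) = \Omega(n/\sqrt k)$ — still not enough).

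I see the tension, so the final plan is to use the adversary method directly on a single carefully chosen relation rather than composition. Take YES instances $G^+$ and NO instances $G^-$ where $G^-$ is obtained from $G^+$ by adding exactly one edge in the search region, and the fixed region already pins the minimum cover at exactly $k-1$ for $G^+$ and exactly $k$ for $G^-$, while the search region consists of $\Theta(n)$ vertices each of which could be the lone extra endpoint. To push the adversary weight up to $\sqrt k\, n$, I let the "extra" edge be one of $\Theta(n)$ pairs AND simultaneously require the fixed region to be a union of $\Theta(k)$ interlocking gadgets so that identifying the answer forces the algorithm to resolve $\Theta(k)$ bits each costing $\Theta(n)$ amortized queries in the matrix model; this is the standard trick behind the $\Omega(\sqrt k\, n)$ bounds for $k$-distinctness-like problems. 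I would set up the adversary weight matrix $\Gamma$ as a tensor/combination reflecting this product structure and lower bound $\|\Gamma\|/\max_i\|\Gamma_i\|$ by $\Omega(\sqrt{k}\,n)$. The main obstacle — and where I expect to spend the real effort — is choosing the gadget so that (a) the minimum vertex cover genuinely jumps by exactly one across the YES/NO boundary regardless of the hidden structure, and (b) the resulting adversary matrix has the clean spectral structure (e.g. it decomposes as a sum of $\Theta(k)$ nearly-orthogonal rank-controlled pieces each contributing norm $\Theta(n)$) needed to read off the $\sqrt{k}\,n$ bound; this is essentially importing the known $\Omega(\sqrt{k}\,n)$ adversary argument for "$k$ marked elements among $n$" and verifying it survives the reduction into vertex cover.
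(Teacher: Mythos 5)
There is a genuine gap: your proposal never arrives at a concrete construction. You explicitly defer the ``main obstacle'' --- choosing the gadget and verifying the spectral structure of the adversary matrix --- to future work, and every intermediate construction you do write down you correctly compute to be too weak ($\Omega(\sqrt{n})$ for a single hidden pair, $\Omega(n/\sqrt{k})$ for the $k$-fold OR of block searches). The final ``tensor of $\Theta(k)$ interlocking gadgets'' is an aspiration, not an argument, and your intuition for where the $\sqrt{k}\,n$ should come from (resolving $\Theta(k)$ bits at amortized cost $\Theta(n)$ each, as in $k$-distinctness) does not match any construction you have verified.

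The bound in fact follows from a much simpler, asymmetric application of Ambainis's adversary method with no composition and no gadgetry. Take $X$ to be all graphs that are a disjoint matching of exactly $k$ edges and $Y$ all graphs that are a disjoint matching of exactly $k+1$ edges, related when $E(x)\subset E(y)$. The two parameters are then \emph{different}: each $x\in X$ extends to $m=\Omega(n^2)$ graphs in $Y$, because the added edge may be \emph{any} pair among the $n-2k=\Omega(n)$ untouched vertices (your designs constrain the hidden edge to $\Theta(n)$ designated pairs or to blocks of size $n/k$, which is exactly why your search space, and hence your bound, is too small); and each $y\in Y$ arises from $m'=k+1=\Omega(k)$ graphs in $X$, one per deletable edge. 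Since any fixed pair $i$ distinguishes at most one $y$ from a given $x$ and vice versa, $l_{x,i}l'_{y,i}\le 1$, giving $\Omega(\sqrt{mm'})=\Omega(\sqrt{k}\,n)$ directly. Note also that this construction, like yours, consumes $2(k+1)$ vertices plus $\Omega(n)$ free vertices and therefore only covers $k\le n/3$ or so; to reach all $k\le(1-\epsilon)n$ you must replace the disjoint edges by disjoint cliques of size $t=\lceil 3/\epsilon\rceil$ (distinguishing $\lfloor k/(t-1)\rfloor$ cliques from one more), since a clique of size $t$ forces $t-1$ cover vertices while occupying only $t$ vertices. Your proposal does not address this regime at all.
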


The upper and lower bounds on the quantum query complexity of the {\sc $k$-vertex cover} problem are summarized in Figure \ref{fig:vc_complexity}.
From Theorems \ref{thm:main_theorem_matrix_model} and \ref{thm:main_theorem_matrix_model_lower_bound}, we obtain the following corollary.

\begin{corollary} \label{cor:optimal_vertex_cover}
The quantum query complexity of deciding whether $G$ has a vertex cover of size at most $k$ with bounded error is $\Theta(\sqrt{k}n)$ when $k = O(\sqrt{n})$ in the adjacency matrix model.
\end{corollary}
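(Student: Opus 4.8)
The plan is to obtain Corollary~\ref{cor:optimal_vertex_cover} directly by combining the upper bound of Theorem~\ref{thm:main_theorem_matrix_model} with the lower bound of Theorem~\ref{thm:main_theorem_matrix_model_lower_bound}; the only real work is checking that the two bounds coincide up to a constant factor in the regime $k = O(\sqrt{n})$.

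First I would simplify the upper bound. Theorem~\ref{thm:main_theorem_matrix_model} gives query complexity $O(\sqrt{k}\,n + k^{3/2}\sqrt{n})$. Writing $k^{3/2}\sqrt{n} = (k/\sqrt{n})\cdot \sqrt{k}\,n$, we see that if $k \le c\sqrt{n}$ for some constant $c$, then $k^{3/2}\sqrt{n} \le c\,\sqrt{k}\,n$, so the whole bound collapses to $O(\sqrt{k}\,n)$. Hence, when $k = O(\sqrt{n})$, the $k$-vertex cover problem has quantum query complexity $O(\sqrt{k}\,n)$ in the adjacency matrix model.

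Next I would apply the matching lower bound. Theorem~\ref{thm:main_theorem_matrix_model_lower_bound} gives $\Omega(\sqrt{k}\,n)$ for every $k \le (1-\epsilon)n$ (for any fixed $\epsilon > 0$); since $k = O(\sqrt{n})$ certainly satisfies $k \le (1-\epsilon)n$ for all sufficiently large $n$, this lower bound is in force. Combining the two bounds yields $\Theta(\sqrt{k}\,n)$, which is the claim. There is no substantive obstacle: the argument is just the elementary observation that $k^{3/2}\sqrt{n} = O(\sqrt{k}\,n)$ when $k = O(\sqrt{n})$, together with the two already-established theorems.
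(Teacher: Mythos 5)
Your proposal is correct and matches the paper's (implicit) argument exactly: the corollary is stated as an immediate consequence of Theorems~\ref{thm:main_theorem_matrix_model} and~\ref{thm:main_theorem_matrix_model_lower_bound}, and your observation that $k^{3/2}\sqrt{n} = (k/\sqrt{n})\cdot\sqrt{k}\,n = O(\sqrt{k}\,n)$ when $k = O(\sqrt{n})$ is precisely the elementary calculation needed to collapse the upper bound so that it meets the lower bound.
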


\begin{figure}[t]
\centering
\includegraphics[width=10cm]{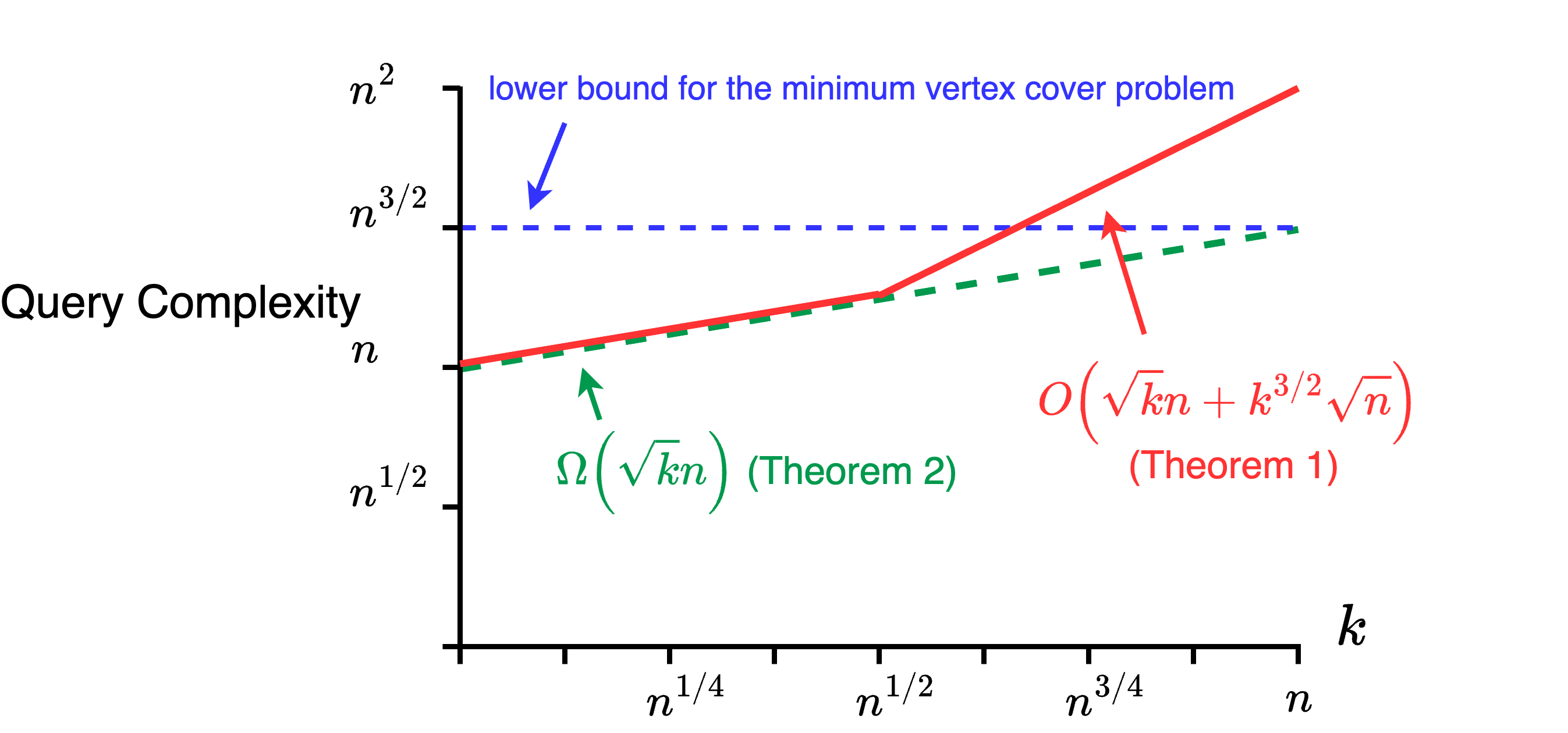}
\caption{The quantum query complexity of the {\sc $k$-vertex cover} problem.}
\label{fig:vc_complexity}
\end{figure}

A graph $G$ has a vertex cover of size at most $k$ if and only if the complement graph $\overline{G}$ has a clique of size at least $n-k$.
Hence, our result given in Corollary \ref{cor:optimal_vertex_cover} implies new bounds on the quantum query complexity of the $k$-clique problem, which is a well-studied problem in the quantum query model \cite{childs2003quantum, magniez2007quantum, zhu2012quantum, lee2012learning, le2014improved,lee2013improved,jeffery2013nested,belovs2012span2, carette2020extended, le2017quantum}.
In the $k$-clique problem, the objective is to find a clique of size at least $k$.
The best known upper bound on the quantum query complexity of the $k$-clique problem in the adjacency matrix model is $O(n^{2 - 2/k - g(k)})$ where $g(k) = O(1/k^3)$ is a strictly positive function~\cite{zhu2012quantum, lee2012learning}.
Note that only the trivial $\Omega(n)$ lower bound is known.
While non-trivial optimal query complexity of the $k$-clique problem has not been known for any $k$, Corollary \ref{cor:optimal_vertex_cover} establishes non-trivial optimal query complexity $\Theta(\sqrt{k}n)$ of detecting cliques of size at least $n - k$ for any $k=O(\sqrt{n})$.

We also provide a lower bound on the randomized query complexity of the {\sc $k$-vertex cover} problem.

\begin{restatable}{proposition}{classicalvclb} \label{thm:main_theorem_classical_lower_bound}
Given a graph $G$ with n vertices in the adjacency matrix model and an integer $k < n - 1$,\footnote{We consider only the setting where $k < n - 1$, since an input instance is always a yes-instance in the setting where $k \geq n - 1$.}
the randomized query complexity of deciding whether $G$ has a vertex cover of size at most $k$ with bounded error is $\Omega(n^2)$.
\end{restatable}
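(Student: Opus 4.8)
The plan is to establish the $\Omega(n^2)$ randomized lower bound for $k$-vertex cover by a reduction from the problem of deciding whether a hidden bipartite-like gadget contains any edge, i.e., an OR of $\Theta(n^2)$ bits, which has randomized query complexity $\Theta(n^2)$. Concretely, I would fix the vertex set $V = \{v_1, \dots, v_n\}$ and designate a set $A$ of $n - k - 1$ vertices that will always be forced into every vertex cover, together with a ``free'' portion. The hard instances are built so that all edges are present inside $A$ (making $A$ a clique, so any vertex cover must contain at least $|A| - 1 = n - k - 2$ of its vertices), every vertex of $A$ is adjacent to every vertex outside $A$, and the only undetermined adjacency information lies among the remaining $k + 1$ vertices outside $A$ — call this set $B$. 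Since $|A| = n - k - 1$, the clique on $A$ together with all $A$–$B$ edges already forces all but one vertex of $A$ to be excluded only at the cost of covering, and a careful bookkeeping shows the graph has a vertex cover of size $\le k$ if and only if the induced subgraph $G[B]$ on the $k+1$ free vertices has an isolated vertex (equivalently, $G[B]$ is not a graph in which every vertex has degree $\ge 1$).

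The key steps, in order, are: (1) set up the gadget precisely and verify the combinatorial equivalence ``$G$ has a vertex cover of size $\le k$ $\iff$ $G[B]$ has a vertex of degree $0$''; this uses the fact that a clique on $t$ vertices needs $t-1$ cover vertices and that each edge inside $B$ or from $B$ to $A$ must be covered. (2) Reduce from the following distinguishing problem on the $\binom{k+1}{2}$ potential edges inside $B$: the ``no'' distribution is a perfect matching on $B$ (so every vertex has degree exactly $1$, no isolated vertex, vertex cover size exactly $k+1 > k$... so actually ``no''-instance has cover size $> k$), while the ``yes'' distribution is a perfect matching with one edge deleted (creating two isolated vertices, hence cover size $\le k$). (3) Show this distinguishing problem requires $\Omega((k+1)^2)$ randomized queries by Yao's principle — a randomized algorithm making $o((k+1)^2)$ queries cannot, with constant advantage, detect whether the single planted edge among a near-perfect-matching structure is present, since its location is uniformly random among $\Theta(k^2)$ candidate slots and each query reveals only one slot. (4) Finally, handle the regime where $k$ is not $\Theta(n)$: when $k + 1 = \omega(\sqrt{n})$ this already gives $\Omega(n)$ from the free block but we need $\Omega(n^2)$ for all $k < n - 1$, so the gadget must instead be chosen to make the number of *undetermined* pairs $\Theta(n^2)$ regardless of $k$; this is the point that needs the most care.

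The main obstacle is precisely step (4): the naive gadget above only hides $\Theta(k^2)$ bits, which is $o(n^2)$ when $k = o(n)$. To get $\Omega(n^2)$ for *all* $k < n - 1$, I would instead hide the uncertainty in the $A$–$B$ adjacencies rather than inside $B$. Partition $V$ into $A$ of size roughly $n/2$ and $B$ of size roughly $n/2$, make $A$ a clique and $B$ independent, and let the bipartite adjacency between $A$ and $B$ be the hidden input; one shows that choosing this bipartite graph to be a perfect matching versus a perfect matching missing one edge changes the minimum vertex cover by exactly $1$ around the threshold $k$, so with $k$ tuned to that threshold the algorithm must locate a planted edge among $\Theta(n^2)$ bipartite slots. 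Then Yao's principle over the uniform choice of the missing slot gives $\Omega(n^2)$. The delicate part is verifying the exact vertex-cover arithmetic of the bipartite gadget (using König's theorem on the bipartite part plus the forced $|A| - 1$ vertices from the clique) so that the parameter $k = n - 1 - 1$ and nearby values all fall on the gap, and confirming the construction degrades gracefully so the bound holds uniformly for every $k < n - 1$ rather than only for one specific value of $k$.
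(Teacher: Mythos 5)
Your overall strategy (explicit hard distributions plus a Yao-style coupling argument) is a legitimate alternative to what the paper does (the paper applies Aaronson's classical adversary theorem, Theorem~\ref{thm:aaronson_classical_lower_bound}, to two explicit graph families), but as written the proposal has a genuine gap at exactly the point you flag as delicate, and it is not a small one. Your constructions all \emph{plant} the distinguishing edge inside a designated block of pairs, so the number of undetermined positions is $\Theta(k^2)$ for the first gadget and $\Theta(|A|\cdot|B|)$ for the bipartite one; even after re-tuning $|A|=k+1$ so that the vertex-cover threshold lands at $k$, this is $O(kn)=o(n^2)$ whenever $k=o(n)$, and the $n/2$--$n/2$ split you describe only produces the correct threshold for the single value $k\approx n/2-1$, whereas the statement must hold for every $k<n-1$. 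The missing idea is that for small $k$ the hardness cannot come from a planted block at all: the hard ``no'' instances must be allowed to place the $(k+1)$-st edge \emph{anywhere} among the $\Omega(n^2)$ pairs of uncovered vertices, so that ruling out a $(k+1)$-st edge forces the algorithm to examine $\Omega(n^2)$ pairs. This is precisely the paper's construction for $k\le n/3$ ($k$ disjoint edges versus $k+1$ disjoint edges, with the extra edge unconstrained), combined with a clique-based family (a $(k+2)$-clique versus a $(k+2)$-clique minus one edge) for $k\ge n/3$; your proposal contains neither family, and without something equivalent the bound you actually obtain for, say, $k=\sqrt n$ is only $\Omega(k^2)=\Omega(n)$.

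There are also two local errors worth fixing. First, in your initial gadget the budget arithmetic fails: forcing the clique $A$ of size $n-k-1$ into the cover leaves only $k-(n-k-1)=2k-n+1$ vertices for covering $G[B]$, which is negative for $k<(n-1)/2$, so the claimed equivalence ``$G$ has a vertex cover of size $\le k$ iff $G[B]$ has an isolated vertex'' cannot hold in that regime (it would require the leftover budget to equal $|B|-1=k$, i.e.\ $k=n-1$). Second, a perfect matching on the $k+1$ vertices of $B$ has minimum vertex cover $(k+1)/2$, not $k+1$. Finally, your step (3) coupling argument is the right instinct but is only sketched: to get $\Omega(n^2)$ for ``perfect matching versus perfect matching minus one edge'' you need the argument that the runs on the two coupled inputs diverge only when the algorithm queries the removed edge, that the removed edge is uniform over the $\Theta(n)$ matching edges, and that $T$ queries discover only $O(T/n)$ matching edges in expectation; this should be written out (or replaced, as in the paper, by a black-box application of Theorem~\ref{thm:aaronson_classical_lower_bound}).
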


\subsubsection{Parameterized Quantum Query Complexity of Matching.}
We also consider the quantum query complexity of the {\sc $k$-matching} problem,
which is the parameterized version of the maximum matching problem.
An edge set $M \subseteq E(G)$ is a \emph{matching} if each vertex in $V(G)$ appears in at most one edge in $M$.
In the {\sc $k$-matching} problem, we are given an undirected graph $G = (V, E)$ and an integer $k$, and the objective is to determine whether $G$ has a matching of size at least $k$.

We obtain an upper bound on the quantum query complexity of the {\sc $k$-matching} problem.
\begin{theorem} \label{thm:main_theorem_parameterized_matching_matrix_model}
The quantum query complexity of finding a matching of size at least $k$ or determining that there does not exist a matching of size at least $k$ with bounded error is $O(\sqrt{k}n + k^{2})$ in the adjacency matrix model. 
\end{theorem}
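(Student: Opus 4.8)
The plan is to build a matching $M$ from scratch by augmentation, exploiting the structural fact that once $M$ is a maximal matching of size $m$, the set $V(M)$ of its endpoints is a vertex cover of size $2m$; hence whenever we are still far from a matching of size $k$, all edges are concentrated on an $O(k)$-sized vertex set that can be explored cheaply with Grover's search, and the number of augmentations needed never exceeds $k$. We may assume $2k\le n$, since otherwise no matching of size $k$ exists and the answer is trivially NO.

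\emph{Bootstrap.} First I would repeatedly search for an edge joining two currently unmatched vertices and add it to $M$. Each such search is a Grover search over the $O(n^2)$ pairs with an unknown number of marked entries, truncated after $O(n/\sqrt{k})$ iterations. The point is that if $\mu(G)-|M|\ge t$ then there are at least $t$ marked pairs, so one is found in $O(n/\sqrt{t})$ iterations, and summing $\sum_j O(n/\sqrt{t_j})$ telescopes to $O(\sqrt{k}\,n)$ over the edges added within the truncation bound. If $|M|$ reaches $k$, output YES. Otherwise the truncated search fails, which (up to the usual bounded-error caveat) certifies that the subgraph $G[U]$ induced by the unmatched set $U$ has only $O(k)$ edges; I would then find all of them — locating all $O(k)$ marked entries among $O(n^2)$ costs $O(\sqrt{k}\,n)$ — and extend $M$ by a maximum matching of $G[U]$ to a maximal matching $M^0$. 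Now $C:=V(M^0)$ is a vertex cover with $|C|\le 2|M^0|$; if $|M^0|\ge k$, output YES; otherwise read all $O(k^2)$ pairs inside $C$, so that $G[C]$ is fully known (and $V\setminus C$ is known to be independent). The total cost so far is $O(\sqrt{k}\,n+k^2)$.

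\emph{Augmentation with a small vertex cover.} Now $|M^0|=m<k$ and $|C|=O(k)$, and I would perform the at most $k-m$ remaining augmentations (with the standard blossom contraction for non-bipartite graphs, which only ever touches the $O(k)$ non-root vertices of the alternating forest — roots are mutually non-adjacent since they lie outside the vertex cover $C$). The invariant is that every matching $M$ reached satisfies $V(M)\supseteq C$, so $V(M)$ is itself a vertex cover, every unmatched vertex lies in $V\setminus C$, and all its neighbours lie in $C$. Consequently, growing an alternating forest requires only (i) moves inside $C$, which cost no queries since $G[C]$ is known, and (ii) probes asking whether some $c\in C$ is adjacent to a vertex of a prescribed set of unmatched vertices, each probe costing $O(\sqrt{n})$. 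I would bound the total cost of the probes with a potential function set up so that each probe either reveals a new edge from $C$ to $V\setminus C$ that enters the matching (at most $O(k)$ of these, since each vertex of $C$ is matched at most once) or certifies that some $c\in C$ has no unmatched neighbour (a monotone event, since augmenting only shrinks $U$, hence at most $|C|=O(k)$ times); charging $O(\sqrt{n})$ to each of these $O(k)$ progress events gives $O(k\sqrt{n})=O(\sqrt{k}\,n)$. Together with the bootstrap this yields $O(\sqrt{k}\,n+k^2)$.

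The main obstacle is exactly this amortized analysis of the augmentation phase: the potential function must be designed so that the forest exploration across the up to $k$ augmentations charges $O(\sqrt{n})$ to each of only $O(k)$ progress events, rather than re-probing the up-to-$n$ unmatched vertices at every augmentation — this forces one to cache partial neighbourhood information at the vertices of $C$ and to argue that stale cached information never needs recomputation because $U$ is monotonically shrinking, all of which must remain compatible with blossom contraction. The other place requiring care is the bootstrap: one must pin down the truncation threshold so that the failure of a truncated search genuinely certifies $|E(G[U])|=O(k)$, and must verify that the telescoping bound on the abundance-boosted Grover searches survives the fact that the number of remaining disjoint edges is controlled only indirectly through $\mu(G)-|M|$. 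The remaining steps are routine Grover-complexity arithmetic.
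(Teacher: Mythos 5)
Your proposal follows essentially the same route as the paper: obtain a maximal matching (or one of size $k$) by repeated truncated Grover search in $O(\sqrt{k}\,n)$ queries, read all of $G[V(M)]$ classically in $O(k^2)$ queries, and then run at most $O(k)$ augmentation rounds in which only the two free endpoints of a candidate alternating path inside $V(M)$ are sought by $O(\sqrt{n})$-query Grover probes, amortized by a potential combining $k-|M|$ with the number of unresolved cover vertices. The caching device you identify as the main obstacle --- recording for each $c\in C$ whether its unmatched neighbourhood is empty or a known singleton, and using that this only degrades monotonically as $U$ shrinks --- is exactly the paper's \texttt{type0}/\texttt{type1}/\texttt{type2} classification with \texttt{memo}, so the argument closes as you sketch it.
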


We also provide a lower bound on the quantum query complexity of the {\sc$k$-matching} problem.

\begin{restatable}{theorem}{quantummatchinglb} \label{thm:main_theorem_matrix_model_lower_bound_matching}
Given a graph $G$ with $n$ vertices in the adjacency matrix model and an integer $k \leq n / 2$,
the quantum query complexity of deciding whether $G$ has a matching of size at least $k$ with bounded error is $\Omega(\sqrt{k} n)$.
\end{restatable}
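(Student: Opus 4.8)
The plan is to invoke the adversary method, applied to two carefully chosen pairs of hard $n$-vertex graph families, splitting on the magnitude of $k$.

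\emph{Regime $k\le n/3$.} Let $\mathcal A$ be the family of all graphs on the vertex set $[n]$ consisting of exactly $k$ pairwise disjoint edges, and let $\mathcal B$ be the family of all graphs on $[n]$ consisting of exactly $k-1$ pairwise disjoint edges; thus $\nu(G)=k$ for every $G\in\mathcal A$ and $\nu(G')=k-1<k$ for every $G'\in\mathcal B$, so any algorithm that decides the matching problem also distinguishes $\mathcal A$ from $\mathcal B$, and it suffices to lower bound the latter. Put $(G,G')$ in the adversary relation $R$ exactly when $G'=G-e$ for some $e\in E(G)$ (equivalently, $G=G'+e$ for an edge $e$ vertex-disjoint from $E(G')$). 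Then: every $G\in\mathcal A$ is $R$-related to exactly $k$ members of $\mathcal B$ (delete one of its $k$ edges); every $G'\in\mathcal B$ is $R$-related to exactly $M:=\binom{n-2(k-1)}{2}$ members of $\mathcal A$ (add any edge on the $n-2(k-1)\ge n/3$ vertices untouched by $E(G')$), and $M=\Theta(n^2)$; and for every fixed adjacency-matrix entry $(u,v)$ and every $G\in\mathcal A$ (resp.\ $G'\in\mathcal B$), at most one $R$-related graph differs from it in that entry, since a differing entry must be the edge $\{u,v\}$ itself. The adversary bound then yields quantum query complexity $\Omega(\sqrt{k\cdot M})=\Omega(\sqrt k\,n)$.

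\emph{Regime $n/3<k\le n/2$.} Here $\sqrt k\,n=\Theta(n^{3/2})$, so it suffices to prove an $\Omega(n^{3/2})$ bound. For this I would use the $\Omega(N^{3/2})$ lower bound for deciding whether an $N$-vertex graph given in the adjacency matrix model has a perfect matching --- this is essentially Zhang's bound~\cite{zhang2004power}, whose hard instances already separate $\nu=N/2$ from $\nu=N/2-1$ --- applied with $N=2k\le n$, and pad the $N$-vertex instances with $n-2k$ isolated vertices (this does not change the matching number, and the new adjacency-matrix entries are known to be $0$ and cost nothing to query). Hence deciding ``$\nu\ge k$'' on $n$ vertices requires $\Omega\big((2k)^{3/2}\big)=\Omega(k^{3/2})=\Omega(\sqrt k\,n)$ queries, the final step using $k>n/3$. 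Together with the first regime, this proves the theorem.

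The step I expect to be the main obstacle is the regime where $k$ is close to $n/2$. There the clean ``augment by one disjoint edge'' analysis of $\mathcal A$ versus $\mathcal B$ collapses: a graph with $\nu=k-1$ on $n$ vertices exposes only $n-2(k-1)$ vertices, hence admits only $O((n-2k)^2)$ single-edge augmentations --- not $\Omega(n^2)$ once $k=n/2-o(n)$. So the augmentation genuinely has to proceed through long alternating paths, and the factor $\sqrt k$ in the adversary ratio must instead come from the $\Omega(n)$ critical edges of a near-perfect-matching instance together with the $\Omega(n^2)$ edges that complete an augmenting path in a suitably ``bushy'' no-instance --- which is exactly the content of the perfect-matching lower bound, so it is cleanest to isolate that regime and cite it. The remaining points --- the matching-number claims for $\mathcal A$ and $\mathcal B$, and the unit per-entry multiplicities in the relation --- are routine.
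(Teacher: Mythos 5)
Your proposal is correct and follows essentially the same route as the paper: for $k\le n/3$ the paper runs the adversary method on matchings of sizes $k-1$ versus $k$ with the identical relation and the identical counts $m=\Theta(n^2)$, $m'=\Theta(k)$, $l_{x,i}=l'_{y,i}=1$, and for larger $k$ it likewise falls back on Zhang's perfect-matching hard instances (one $2k$-cycle versus two odd cycles) embedded in the $n$-vertex graph to get $\Omega(k^{3/2})=\Omega(\sqrt{k}\,n)$. The only cosmetic difference is that you phrase the second regime as a padding reduction to the $2k$-vertex perfect-matching bound, whereas the paper restates Zhang's adversary families directly on $n$ vertices; both are valid and equivalent.
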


Zhang \cite{zhang2004power} showed that the quantum query complexity of deciding whether $G$ has a perfect matching is $\Omega(n^{3/2})$.
Then, Theorem \ref{thm:main_theorem_matrix_model_lower_bound_matching} is a generalization of the result of Zhang.

The upper and lower bounds on the quantum query complexity of the {\sc $k$-matching} problem are summarized in Figure \ref{fig:matching_complexity}.
From Theorems \ref{thm:main_theorem_parameterized_matching_matrix_model} and \ref{thm:main_theorem_matrix_model_lower_bound_matching}, we obtain the following corollary.

\begin{corollary}
The quantum query complexity for deciding whether $G$ has a matching of size at least $k$ with bounded error is $\Theta(\sqrt{k}n)$ when $k = O(n^{2/3})$ in the adjacency matrix model.
\end{corollary}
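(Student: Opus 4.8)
The plan is to obtain the corollary by simply combining the two bounds proved above for the {\sc $k$-matching} problem: the upper bound $O(\sqrt{k}n + k^2)$ from Theorem~\ref{thm:main_theorem_parameterized_matching_matrix_model} and the lower bound $\Omega(\sqrt{k}n)$ from Theorem~\ref{thm:main_theorem_matrix_model_lower_bound_matching}. The lower bound direction is immediate: Theorem~\ref{thm:main_theorem_matrix_model_lower_bound_matching} gives $\Omega(\sqrt{k}n)$ for every $k \le n/2$, and since $k = O(n^{2/3})$ in particular satisfies $k \le n/2$ for large enough $n$, the $\Omega(\sqrt{k}n)$ lower bound applies in this regime.

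For the matching upper bound direction, I would show that the $k^2$ term in Theorem~\ref{thm:main_theorem_parameterized_matching_matrix_model} is dominated by the $\sqrt{k}n$ term exactly in the regime $k = O(n^{2/3})$. Indeed, $k^2 \le \sqrt{k}\,n$ is equivalent to $k^{3/2} \le n$, i.e.\ to $k \le n^{2/3}$, so $k = O(n^{2/3})$ is precisely the crossover point. Concretely, if $k \le C n^{2/3}$ for a constant $C$, then $k^{3/2} \le C^{3/2} n$, and hence $k^2 = k^{3/2}\cdot\sqrt{k} \le C^{3/2}\,\sqrt{k}\,n$; therefore $O(\sqrt{k}n + k^2) = O(\sqrt{k}n)$ under the hypothesis $k = O(n^{2/3})$.

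Combining the two estimates yields quantum query complexity $\Theta(\sqrt{k}n)$ for deciding whether $G$ has a matching of size at least $k$ when $k = O(n^{2/3})$, as claimed. There is no genuine obstacle here: the two nontrivial ingredients are furnished by Theorems~\ref{thm:main_theorem_parameterized_matching_matrix_model} and~\ref{thm:main_theorem_matrix_model_lower_bound_matching}, and the only thing to check is the elementary asymptotic implication $k = O(n^{2/3}) \Rightarrow k^2 = O(\sqrt{k}\,n)$, which is the routine constant bookkeeping carried out above.
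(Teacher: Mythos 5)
Your proposal is correct and matches the paper's (implicit) argument exactly: the corollary is obtained by combining Theorems~\ref{thm:main_theorem_parameterized_matching_matrix_model} and~\ref{thm:main_theorem_matrix_model_lower_bound_matching}, noting that $k = O(n^{2/3})$ is precisely the regime where $k^2 = O(\sqrt{k}\,n)$ and where the $\Omega(\sqrt{k}\,n)$ lower bound (valid for $k \le n/2$) applies. The constant bookkeeping $k \le Cn^{2/3} \Rightarrow k^2 \le C^{3/2}\sqrt{k}\,n$ is exactly the routine check required.
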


\begin{figure}[t]
\centering
\includegraphics[width=10cm]{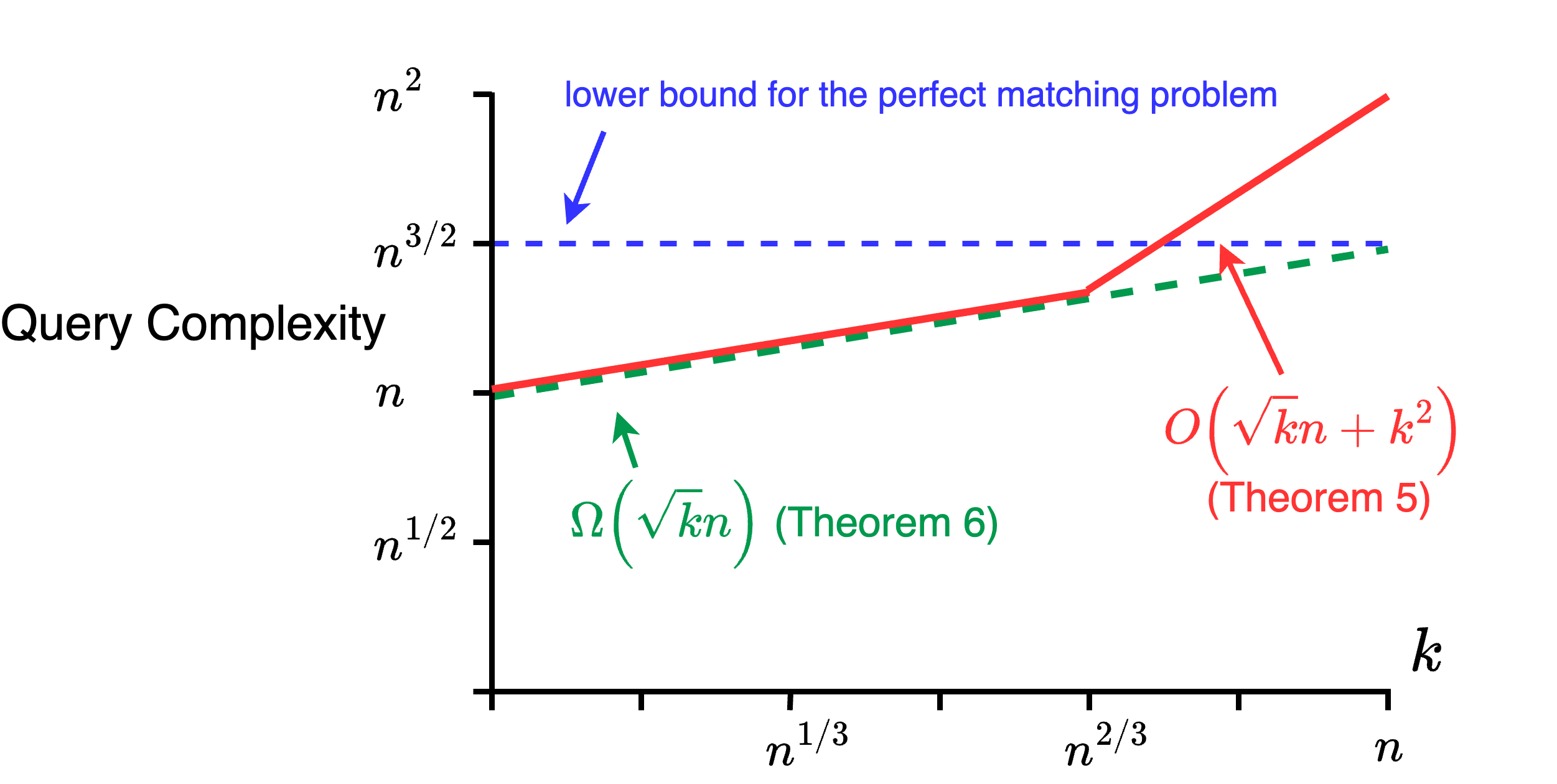}
\caption{The quantum query complexity of the {\sc $k$-matching} problem.}
\label{fig:matching_complexity}
\end{figure}

We also provide a lower bound on the randomized query complexity of the {\sc $k$-matching} problem.

\begin{restatable}{proposition}{classicalmatchinglb} \label{thm:main_theorem_matrix_model_classical_lower_bound_matching}
Given a graph $G$ with $n$ vertices in the adjacency matrix model and an integer $k \leq n / 2$,
the randomized query complexity of deciding whether $G$ has a matching of size at least $k$ with bounded error is $\Omega(n^2)$.
\end{restatable}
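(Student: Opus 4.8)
The plan is to prove the lower bound by Yao's minimax principle, constructing, for every $k \le n/2$, a pair of distributions over $n$-vertex graphs that differ only in whether a single hidden edge, chosen uniformly from a set $S$ of $\Theta(n^2)$ candidate pairs, is present, and such that this hidden edge is exactly what decides whether $G$ has a matching of size $k$. Concretely, I would fix a base graph $H$ with $\nu(H) = k - 1$ together with a large set $S$ of non-edges of $H$ each of which is \emph{augmenting}, meaning $\nu(H + e) = k$ for every $e \in S$. Since $\nu(H + e) \le \nu(H) + 1$ always, this is equivalent to $H$ having a maximum matching that misses both endpoints of $e$. The no-instance is $H$ itself (answer: no, since $\nu(H) = k - 1 < k$) and the yes-instance is $H + e$ with $e$ drawn uniformly from $S$ (answer: yes, since $\nu(H+e) = k$). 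On such a distribution, a query to a pair that is a fixed edge of $H$ returns $1$ on both instances, a query to a pair that is neither a fixed edge of $H$ nor in $S$ returns $0$ on both, and a query to a pair in $S$ returns $0$ on the no-instance and returns $1$ on the yes-instance iff it is the hidden edge. Hence distinguishing the two distributions is exactly unstructured search over $|S|$ bits (distinguishing the all-zero input from a uniformly random weight-one input), whose bounded-error randomized query complexity is $\Omega(|S|)$. It therefore suffices to exhibit such an $H$ with $|S| = \Omega(n^2)$.

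When $k$ is not too large, say $k \le n/4$, I would take $H$ to be the disjoint union of a perfect matching $M_0$ on $2(k-1)$ vertices and an independent set $I$ on the remaining $n - 2(k-1)$ vertices, so $\nu(H) = k-1$. Every maximum matching of $H$ leaves all of $I$ uncovered, so for any $u, v \in I$ the edge $uv$ is augmenting; taking $S$ to be the set of all pairs of vertices of $I$ gives $|S| = \binom{n-2k+2}{2} = \Omega(n^2)$, and queries to pairs inside $M_0$ or between $M_0$ and $I$ are seen to be uninformative, so the reduction applies. (The cases $k \in \{1,2\}$ are covered here too; e.g. $k = 1$ is simply detecting an edge.)

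The delicate regime is $n/4 < k \le n/2$, where the independent set above would be too small; here I would exploit the factor-criticality of odd cycles. Take $H = C_a \sqcup C_b \sqcup I$ where $a,b$ are odd with $a + b = 2k$ and $|a-b| \le 2$ (so $a = b = k$, or $\{a,b\} = \{k-1,k+1\}$), and $|I| = n - 2k \ge 0$; then $\nu(H) = \tfrac{a-1}{2} + \tfrac{b-1}{2} = k - 1$. Since $C_a - c$ and $C_b - c'$ are even paths and hence have perfect matchings for every $c \in C_a$, $c' \in C_b$, the graph $H$ has a maximum matching missing both $c$ and $c'$, so every pair $\{c,c'\}$ with $c \in C_a$, $c' \in C_b$ is augmenting; taking $S$ to be the set of all such pairs gives $|S| = ab \ge (k-1)^2 = \Omega(n^2)$ because $k > n/4$. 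A short case analysis again shows that the only informative queries are to pairs in $S$, so the reduction yields the bound. Combining the two regimes proves $\Omega(n^2)$ for all $k \le n/2$. The main obstacle is precisely this second regime: when $k$ is close to $n/2$ there is almost no room left for a flexible independent set, and the key idea resolving it is to carry the flexibility needed for $\Theta(n^2)$ distinct augmenting pairs inside two large odd cycles instead.
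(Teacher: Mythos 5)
Your proof is correct, but it takes a genuinely different route from the paper's. The paper obtains the randomized bound by feeding the adversary relations from its quantum lower bound into Aaronson's classical adversary theorem (Theorem~\ref{thm:aaronson_classical_lower_bound}): for small $k$ it relates the set of all matchings of size $k-1$ to the set of all matchings of size $k$ under single-edge addition, and for $k$ close to $n/2$ it reuses Zhang's relation between a single $2k$-cycle and two disjoint odd cycles obtained by a two-edge swap. You instead fix one base graph $H$ with $\nu(H)=k-1$ and reduce directly, via Yao's principle, to unstructured search over a set $S$ of $\Theta(n^2)$ augmenting non-edges, which immediately gives $\Omega(|S|)$ randomized queries. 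In the small-$k$ regime your $H$ is essentially a single fixed representative of the paper's family $X$, so the constructions are close; in the regime $n/4<k\le n/2$ your gadget (two odd cycles plus one hidden cross edge, exploiting that $C_a-c$ has a perfect matching for every vertex $c$) is genuinely different from the paper's cycle-swap relation, in which related yes/no instances differ in four edges rather than one. Your route is more elementary and self-contained: it avoids the adversary machinery entirely and makes explicit why only queries inside $S$ carry information, whereas the paper's one-line verification that the quantity $v$ is $O(1/n^2)$ actually requires the cycle construction for large $k$ (the matching-size construction alone only gives $v=O(1/k)$ there) and is left rather terse. The only loose end in your write-up is the degenerate small cases of the second regime (you need $a,b\ge 3$, hence $k\ge 4$), but since that regime presumes $k>n/4$ these exceptions occur only for constantly many $n$ and are vacuous asymptotically.
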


By using Theorem \ref{thm:main_theorem_parameterized_matching_matrix_model}, we also obtain the quantum query complexity of the maximum matching problem.
Here, let $p$ denote the size of the maximum matching.

\begin{theorem} \label{thm:main_theorem_matching_matrix_model}
The quantum query complexity for finding a maximum matching with bounded error is $O(\sqrt{p}n + p^{2})$ in the adjacency matrix model. 
\end{theorem}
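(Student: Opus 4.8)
The plan is to obtain a maximum matching by re-using the algorithm of Theorem~\ref{thm:main_theorem_parameterized_matching_matrix_model} in a mode that lets the matching grow all the way to its maximum size instead of stopping at a prescribed target. Since the value $p$ of the maximum matching is not known in advance, the cleanest route is to invoke the $k$-matching algorithm with its parameter set to the trivial bound $k = n$ (equivalently, to drop the internal cutoff that halts the search after $k$ augmentations): the procedure then can only terminate when the current matching admits no augmenting path, which by Berge's theorem means it is maximum. The point is that the query cost of the $k$-matching algorithm, as controlled by the potential-function analysis underlying Theorem~\ref{thm:main_theorem_parameterized_matching_matrix_model}, is really a function of the number of augmentations actually performed, and that number is exactly the size of the matching finally returned; run to completion, this size is $p$, giving cost $O(\sqrt{p}\,n + p^2)$.

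A doubling search gives the same bound while leaning on less of the internal structure, and it is a good way to see why the exponents come out as they do. One runs the algorithm of Theorem~\ref{thm:main_theorem_parameterized_matching_matrix_model} for $k = 2^0, 2^1, 2^2, \dots$ in turn, and lets $2^t$ be the first value for which it reports that no matching of size at least $2^t$ exists (if this happens already at $k = 1$, then $p = 0$). The successful call at $k = 2^{t-1}$ certifies $p \ge 2^{t-1}$ and the failing call certifies $p < 2^t$, so $2^{t-1} \le p < 2^t$ and thus $2^t \le 2p$; the total query cost is then a geometric series dominated by its last term,
\[
\sum_{i=0}^{t} O\!\bigl(\sqrt{2^i}\,n + (2^i)^2\bigr) \;=\; O\!\bigl(\sqrt{2^t}\,n + (2^t)^2\bigr) \;=\; O\!\bigl(\sqrt{p}\,n + p^2\bigr).
\]
The failing invocation produces a matching with no augmenting path — a maximum matching — which we return, so no extra queries are incurred. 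If one prefers to treat the $k$-matching algorithm as a pure decision procedure whose negative certificate carries no matching, a maximum matching is instead recovered by binary search over $[2^{t-1}, 2^t)$, each step costing $O(\sqrt{p}\,n + p^2)$, at the cost of an additional $O(\log p)$ factor; the doubling-search variant also needs the usual error bookkeeping (amplifying the $O(\log p)$ invocations so their failure probabilities sum to a constant), which does not affect the stated bound.

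The step requiring genuine care is opening up the algorithm behind Theorem~\ref{thm:main_theorem_parameterized_matching_matrix_model} just far enough to confirm that its cost bound is \emph{anytime}: that the work it performs is bounded in terms of the size the matching has already attained rather than in terms of the target $k$, so that letting it run to completion costs $O(\sqrt{p}\,n + p^2)$ and not the useless $O(n^{2})$ obtained by blindly substituting $k = n$ into the statement of Theorem~\ref{thm:main_theorem_parameterized_matching_matrix_model}. I expect this to follow almost immediately from the potential-function argument, which should charge queries to successive augmentations and hence be indifferent to whether an a priori cutoff is present; one also needs the (easy) fact that when the algorithm reports that no matching of size at least $k$ exists it does so by exhibiting a maximum matching.
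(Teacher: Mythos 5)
There is a genuine gap in both routes you propose. Your first route (run the $k$-matching algorithm with $k=n$, or with the cutoff removed, and argue that its cost is ``anytime'') founders on the very first subroutine: \texttt{QuantumThresholdMaximalMatching}$[G](k)$ is a Monte Carlo truncation of a Las Vegas procedure, and it spends its \emph{entire} budget of $\Theta(\sqrt{k}\,n)$ queries whenever the maximal matching it builds has size less than $k+1$, because once $M$ is maximal the Las Vegas Grover search has no marked element and simply burns queries until the cutoff is reached --- the algorithm has no cheap way to certify that $M$ is already maximal. So the cost of this phase is governed by the target $k$, not by the number of edges actually found; with $k=n$ it is $\Theta(n^{3/2})$, which already exceeds $O(\sqrt{p}\,n+p^2)$ for small $p$. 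The potential-function argument controls only the augmenting phase, not this initial phase, so the algorithm is not anytime in the sense you need.

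Your second route (doubling search) is essentially the ``first attempt'' that the paper describes and then discards: the error bookkeeping you dismiss as harmless in fact costs a $\Theta(\log\log p)$ factor, since each of the $\Theta(\log p)$ invocations must be amplified to error $O(1/\log p)$, and the most expensive (last) invocations are precisely the ones that must be amplified; the paper states the resulting bound as $O((\sqrt{p}\,n+p^2)\cdot\log\log p)$, which is weaker than the theorem claims. The paper's actual proof supplies the missing ingredient: Lemma~\ref{lem:quantum_maximal_matching_via_decision_tree_method}, which finds a maximal matching with $O(\sqrt{p}\,n)$ queries \emph{without} prior knowledge of $p$, via the decision-tree/guessing method of Lin--Lin and Beigi--Taghavi (the greedy classical algorithm has depth $n^2$ and makes at most $p$ incorrect guesses, giving $O(\sqrt{n^2\cdot p})$). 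Since any maximal matching has size $\bar p\ge p/2$, a single call to \texttt{Quantum $k$-Matching} with $k=2\bar p$ then finishes in $O(\sqrt{p}\,n+p^2)$ queries, with only two invocations to union-bound over and hence no amplification overhead.
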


This query complexity improves upon known results when $p$ is small.
Lin--Lin\cite{lin2015upper} obtained an $O(p^{1/4}n^{3/2})$ upper bound on the quantum query complexity of the maximum matching problem for bipartite graphs in the adjacency matrix model\footnote{Their paper only states that the quantum query complexity of the bipartite maximum matching problem is $O(n^{7/4})$ in the adjacency matrix model.
We see that this upper bound can be easily improved to $O(p^{1/4}n^{3/2})$.
Their quantum query algorithm relies on the classical algorithm by Hopcroft--Karp \cite{hopcroft1973n}.
In the algorithm of Hopcroft--Karp, they repeatedly find a maximal set of vertex disjoint shortest augmenting paths.
To obtain an $O(n^{7/4})$ upper bound, Lin--Lin use the fact that $O(\sqrt{n})$ iterations of these processes suffice. 
It is known that only $O(\sqrt{p})$ iterations suffice; see \cite[Theorem 3]{hopcroft1973n}.
Then, we can obtain an $O(p^{1/4}n^{3/2})$ upper bound for the bipartite maximum matching problem.
}, which is the first nontrivial upper bound.
By using the method developed by Lin--Lin, Kimmel--Witter \cite{kimmel2021query} obtained an $O(p^{1/4}n^{3/2})$ upper bound for general graphs.
Moreover, Blikstad--v.d.Brand--Efron--Mukhopadhyay--Nanongkai \cite{blikstad2022nearly} showed that the quantum query complexity of the bipartite maximum matching problem is $\tilde{O}(n^{3/2})$ in the adjacency matrix model\footnote{The $\tilde{O}$ notation hides factors polynomial in $\log n$.}.

\subsubsection{Fixed Parameter Improved Algorithms}
For a better understanding of the main results of this work, we introduce a new class of query algorithms for graph problems, which we call \emph{fixed parameter improved (FPI)} algorithms.
First, we define parameterized graph problems.
\begin{definition}[Parameterized graph problem]
    In a parameterized graph problem, inputs are the number of vertices $n$, an oracle for a $n$-vertex graph $G$, typically in the adjacency matrix model, and a nonnegative integer $k$ as a parameter.
    Here, the parameter $k$ is polynomially bounded in $n$.
    The output of the problem is some property of the graph $G$ parameterized by $k$.
\end{definition}
Then, we introduce unparameterized graph problems of parameterized graph problems.
\begin{definition}[Unparameterized graph problem]
    In an unparameterized graph problem of a parameterized graph problem $\mathcal{P}$, inputs are the number of vertices $n$, and an oracle for a $n$-vertex graph $G$.
    The output of the unparameterized problem is a list of outputs of the parameterized problem $\mathcal{P}$ for all possible parameters $k$.
\end{definition}

\begin{remark}
    In many cases, a parameterized graph problem $\mathcal{P}$ is binary, i.e., the output is YES or NO, and monotone, i.e., if the output for $(G, k)$ is YES or NO, then the output for $(G, k + 1)$ is also YES or NO, respectively.
    In this case, the unparameterized graph problem $\mathcal{Q}$ for $\mathcal{P}$ is essentially an optimization problem that asks the minimum $k$ for which $(G, k)$ is a YES or NO instance of $\mathcal{P}$.
\end{remark}

Now, we define the fixed parameter improved algorithms.

\begin{definition}[Fixed parameter improved algorithms]
    Let $\mathcal{P}$ and $\mathcal{Q}$ be a parameterized graph problem and its unparameterized graph problem, respectively.
    Let 
    \begin{align*}
        C(\mathcal{Q}) &:= \inf\left\{c\mid \text{The query complexity of $\mathcal{Q}$ is $O(n^c)$}\right\}.
    \end{align*}
    A query algorithm $\mathcal{A}$ for $\mathcal{P}$ is said to be fixed parameter improved if there exists a constant $c< C(\mathcal{Q})$ and a function $f(k)$ such that the query complexity of $\mathcal{A}$ is at most $f(k) n^c$.
    When $f(k)$ is a polynomial, $\mathcal{A}$ is said to be polynomial FPI.
\end{definition}

The concept of FPI can be regarded as a natural generalization of FPT.
The correspondence between FPI and FPT is summarized in Table~\ref{tbl:FPI}.
The quantum query algorithms in Theorems~\ref{thm:main_theorem_matrix_model} and~\ref{thm:main_theorem_parameterized_matching_matrix_model} are polynomial FPI since $C(\mathcal{Q})\ge 3/2$ for both the problems~\cite{zhang2004power}.
On the other hand, the quantum query algorithms for the $k$-clique problem in~\cite{magniez2007quantum,lee2012learning,zhu2012quantum} with query complexity $n^{2-O(1/k)}$ are not FPI.

Note that the quantum algorithms in this paper are \emph{fixed parameter linear (FPL)}, which means that the query complexity is at most $f(k)n$ for some function $f$. Since we can take $f(k)$ as a polynomial, the quantum algorithms in this paper are polynomial FPL.

\begin{table}[t]
   \centering
   \caption{The correspondence between FPI and FPT.}\label{tbl:FPI}
   \begin{tabular}{|c|c|c|c|}
       \hline
        & \makecell{Unparameterized\\ complexity} & Not FPT/FPI & FPT/FPI\\
       \hline
       Time complexity & $n^{\omega(1)}$ & $n^{\omega_k(1)}$ & $\exists c<\infty\quad f(k) n^c$\\
       \hline
       Query complexity & $O\left(n^{C(\mathcal{Q})+\epsilon}\right)\forall\epsilon>0$ & $n^{C(\mathcal{Q})-o_k(1)}$ & $\exists c<C(\mathcal{Q})\quad f(k) n^c$\\
       \hline
   \end{tabular}
\end{table}

\subsection{Overview of Our Upper Bounds}

\subsubsection{Grover's Search Algorithm}

To obtain our quantum query algorithms for the {\sc $k$-vertex cover} and {\sc $k$-matching} problems, we only use some variants of Grover's search algorithm \cite{grover1996fast}, which is a fundamental tool for quantum query algorithms.

In Grover's search algorithm, we are given oracle access to a function $f \colon [N] \rightarrow \{0, 1\}$ where $K:=|f^{-1}(1)|$ is positive.
Grover's search algorithm finds some $x \in f^{-1}(1)$ with $\Theta(\sqrt{N / K})$ expected number of queries~\cite{{boyer1998tight}}; see Lemma \ref{lem:grover_expected_find} for details.
This algorithm does not require prior knowledge of $K$.

As a simple application of Grover's search algorithm,  for given $1 \leq R \leq N$, we can compute some subset $S\subseteq f^{-1}(1)$ of size $\min\{R, K\}$ with probability at least $2/3$ with query complexity $O(\sqrt{R N})$ queries (see e.g., \cite{brassard2002quantum}, \cite[Fact 2.1]{durr2006quantum}, and \cite[Theorem 13]{apers2021quantum}).

In our quantum query algorithms, we use Grover's search algorithm to find some edges in $G$.
Note that, in the adjacency matrix model, we are given query access to a function $E_{\mathrm G} \colon \binom{V(G)}{2} \rightarrow \{0, 1\}$ where $E_{\mathrm G}(v, u) = 1$ if and only if $(v, u) \in E(G)$.

\subsubsection{Overview of Our Parameterized Quantum Query Algorithm for Vertex Cover} \label{subsec:technical_overview}

\subparagraph*{The parameterized quantum query algorithm for the \texorpdfstring{$k$}{TEXT}-vertex cover.}

It is easy to obtain a bounded error quantum algorithm for the {\sc $k$-vertex cover} problem with $O(\sqrt{k}n^{3/2})$ queries.
To obtain this algorithm, we use the following easy observation:
if there are more than $k (n - 1)$ edges in the input graph $G$, $G$ does not have a vertex cover of size at most $k$.
In the algorithm, we check whether there are at most $k (n - 1)$ edges in $G$, and if so, we find all edges in $G$.
By Grover's search algorithm, this requires $O(\sqrt{k (n - 1) \cdot n^2}) = O(\sqrt{k} n^{3/2})$ queries.
After finding all edges in $G$, we simply apply a known classical algorithm for the {\sc $k$-vertex cover} problem.
Then, the quantum query complexity of the {\sc $k$-vertex cover} problem is $O(\sqrt{k}n^{3/2})$.
However, this upper bound is far from the optimal.

\subparagraph*{Kernelization.}

For the improvement of the query complexity, we use the technique of \emph{kernelization}, which is a important tool for designing parameterized algorithms~\cite{fomin2019kernelization, guo2007invitation}.
A kernelization is an efficient preprocessing algorithm that transforms a given instance to a smaller instance called a \emph{kernel}.
Here, we formally define a kernelization for a parameterized problem $L \subseteq \Sigma^* \times \mathbb{N}$, where $\Sigma$ is a fixed finite alphabet.
A kernelization is an algorithm that maps an instance $(x, k)$ of $L$ to another instance $(x', k')$ of $L$ with running time polynomial in $|x|$ and $k$, requiring that $|x'|+k'$ is bounded by a function of $k$.
Here, we also require that $(x,k)\in L$ if and only if $(x',k')\in L$. 
Then, it is sufficient to check whether $(x',k')\in L$.
The output $(x',k')$ is called a \emph{kernel} of $(x,k)$.

Typically, for graph problems, a kernel $(G',k')$ of an instance $(G,k)$ satisfies $V(G')\subseteq V(G)$ and $E(G')\subseteq E(G)$, i.e., $G'$ is a subgraph of $G$.
In this work, we consider a quantum query algorithm that outputs a kernel 
 $(G',k')$ directly from $k$ and the given oracle for $G$.
Since the size of kernels is bounded by a function of $k$, the number of edges in $G'$ is bounded by a function of $k$.
This observation suggests that the kernelization technique may yield low-query quantum algorithms for parameterized problems.
In this paper, we demonstrate, for the first time, the utility of the kernelization technique in the context of parameterized quantum query complexity.

\subparagraph*{Quantum query algorithm for kernelization.}

We introduce a new framework, called \emph{quantum query kernelization}, to obtain better quantum query complexity; see Section \ref{subsec:kernelization} for details.
A quantum query kernelization algorithm outputs a kernel $(G', k')$ as a bit string.
In the quantum query setting, the input graph $G$ is provided as a quantum oracle, through which quantum algorithms access $G$. 
Once we obtain a kernel $(G',k')$, we do not require more queries since we can decide whether $(G',k')$ is a yes-instance by an arbitrary classical algorithm for $(G', k')$.

Our quantum query kernelization algorithm is based on a known classical kernelization algorithm.
However, not all classical efficient kernelization algorithms give efficient quantum query kernelization algorithms since they are not necessarily suitable for Grover's search algorithm.
Hence, we have to choose appropriate kernels to derive better upper bounds.

\subparagraph*{Classical kernelization algorithm for the \texorpdfstring{$k$}{TEXT}-vertex cover problem.}

In this section, we introduce the classical kernelization algorithm for the $k$-vertex cover problem which will be used for designing the quantum query algorithm in Theorem~\ref{thm:main_theorem_matrix_model}.
We first find a maximal matching $M$ of $G$, which is not a proper subset of any other matching.
If the size of $M$ is at least $k + 1$, then we conclude that the input instance $(G, k)$ is a no-instance.
Let $V(M)\subseteq V$ denote the set of all endpoints of edges in $M$.
We initially set $(G', k') = (G, k)$.
For each vertex $v \in V(M)$, if $v$ has degree at least $k + 1$, then we update $(G', k')$ to $(G' - v, k' - 1)$.\footnote{For a vertex $v \in V(G)$, let $G - v$ denote the subgraph of $G$ induced by $V(G) \setminus \{ v \}$.}
Then, we return the instance $(G', k')$.

Now, the instance $(G', k')$ obtained by the above procedure is a yes-instance if and only if $(G, k)$ is a yes-instance.
This follows from the following simple observation.
\begin{observation}[{\cite{buss1993nondeterminism}}; see also {\cite[Section 2.2.1]{cygan2015parameterized}}] \label{obs:vertex_cover_kernelization}
If $G$ has a vertex $v$ of degree at least $k + 1$, then $v$ must be in every vertex cover of size at most $k$.
\end{observation}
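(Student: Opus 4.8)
The plan is to prove the contrapositive-flavored statement by a direct argument: assume a vertex cover $S$ of size at most $k$ exists, and show that it must contain $v$. First I would invoke the defining property of a vertex cover, namely that every edge of $G$ has at least one endpoint in $S$. Applying this to the edges incident to $v$ is the crux.

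The key steps, in order, are as follows. Suppose toward a contradiction that $S \subseteq V(G)$ is a vertex cover with $|S| \le k$ and $v \notin S$. Let $N(v)$ be the set of neighbors of $v$; since $G$ is a simple graph and $\deg(v) \ge k+1$, we have $|N(v)| \ge k+1$. For each edge $\{v, u\}$ with $u \in N(v)$, the vertex cover $S$ must contain an endpoint of this edge; as $v \notin S$, we conclude $u \in S$. Since this holds for every $u \in N(v)$, we get $N(v) \subseteq S$, hence $|S| \ge |N(v)| \ge k+1 > k$, contradicting $|S| \le k$. Therefore $v \in S$ for every vertex cover $S$ of size at most $k$.

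I do not expect any real obstacle here; the statement follows immediately from unwinding the definition of a vertex cover. The only point that warrants a moment's care is the counting: I would make explicit that in a simple graph the $\deg(v) \ge k+1$ edges at $v$ lead to $k+1$ \emph{distinct} vertices, so that $|N(v)| \ge k+1$ and not merely that there are $k+1$ incident edges. With that noted, the proof is complete in a couple of lines.
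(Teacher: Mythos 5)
Your proof is correct and is exactly the argument the paper gives (in one sentence after the observation): if $v \notin S$ then $S$ must contain all of $v$'s at least $k+1$ distinct neighbors, contradicting $|S| \le k$. Your version just spells out the counting more carefully, which is fine.
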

This is because if a vertex $v$ is not in a vertex cover, then the vertex cover must contain all neighbors of $v$ to cover all edges incident to $v$.

Furthermore, $(G', k')$ obtained by the above procedure has size depending only on $k$ and not on $n$.
Let $S\subseteq V(M)$ be a subset of $V(M)$ that are contained in $G'$.
Then, $G'$ is a subgraph of $G$ induced by $S$ and it's neighborhoods in $G$ since $M$ is a maximal matching of $G$.
Note that any $v\in S$ has degree at most $k$ in $G'$.
This fact implies that $G'$ has at most $k \cdot |S|\le k \cdot |V(M)|\le 2k^2$ edges.
Note that $G'$ consists of the maximal matching $M$ (with some missing vertices due to the deletion procedure) and a independent set $V\setminus V(M)$ as in Figure~\ref{fig:kernelization}.

A similar idea was used in a parameterized streaming algorithm for the vertex cover problem~\cite{chitnis2014parameterized}.

\subparagraph*{The optimal quantum query complexity of the \texorpdfstring{$k$}{TEXT}-vertex cover for small \texorpdfstring{$k$}{TEXT}.}

Now, we present a quantum query kernelization algorithm based on the above classical kernelization algorithm.
In the above classical kernelization algorithm, there are multiple possible choices of the maximal matching and the order of the vertex eliminations of degrees greater than $k$.
Let $\mathcal{I}^{{\rm vc}}_{(G, k)}$ denote the set of all possible instances that can be returned by the above classical kernelization algorithm for an input instance $(G, k)$.
In our quantum query kernelization algorithm, we obtain a kernel $(G', k') \in \mathcal{I}^{{\rm vc}}_{(G, k)}$ as a bit string with a small number of queries.

\begin{restatable}{theorem}{quantumkenerl} \label{prop:vc_quantum_kernel}
Given a graph $G$ with n vertices in the adjacency matrix model and an integer $k$,
there is a bounded error quantum algorithm with $O(\sqrt{k}n + k^{3/2}\sqrt{n})$ queries that outputs an instance $(G', k') \in \mathcal{I}^{{\rm vc}}_{(G, k)}$, or otherwise determines that the instance $(G, k)$ is a no-instance of the $k$-vertex cover problem.
\end{restatable}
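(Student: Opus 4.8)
The plan is to implement the classical kernelization in three quantum phases that mirror its structure, charging queries to Grover's search each time. First I would compute a maximal matching $M$ of $G$. To do this I repeatedly use Grover's search to find an edge $(u,v)$ among the $\binom{n}{2}$ pairs both of whose endpoints are still unmatched; after adding such an edge, the candidate set for the next iteration shrinks. A clean way to bound the cost: at the $i$-th step, at most $2i$ vertices are matched, so the search space has size $\Omega(n^2)$ as long as $i = o(n)$, and by Lemma~\ref{lem:grover_expected_find} each search that finds an edge costs $O(\sqrt{n^2/1}) = O(n)$ in expectation. We only ever need $k+1$ successful iterations before we either exhaust the matching or declare a no-instance, so this phase costs $O(kn)$ queries in expectation, which is within $O(\sqrt{k}n + k^{3/2}\sqrt{n})$ when $k \le n$ — actually we want $O(\sqrt{k}n)$ here, so instead of finding edges one at a time I would find up to $k+1$ edges of $G$ restricted to the current unmatched-vertex set using the batch version ($O(\sqrt{R N})$ for $R$ elements out of $N$) with $R = k+1$, $N = \binom{n}{2}$, giving $O(\sqrt{k}\,n)$; a small amount of care is needed because the found edges need not themselves form a matching, but we can greedily thin them and re-run on the (smaller) residual graph only $O(\log)$-ish times, or simply observe that $k+1$ iterations of the single-edge search at cost $O(n)$ each also suffices when $k = O(\sqrt n)$ and that the bound $O(\sqrt k n + k^{3/2}\sqrt n)$ is what we are targeting anyway. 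If at any point $|M| \ge k+1$, halt and output ``no-instance.''

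Second, having the explicit vertex set $V(M)$ of size at most $2k$, I would determine, for each $v \in V(M)$, whether $\deg_G(v) \ge k+1$. For a fixed $v$ this is a search over the $n-1$ potential neighbors for $k+1$ ones: by the batch Grover bound this costs $O(\sqrt{k n})$ queries, and over all $2k$ vertices of $V(M)$ it costs $O(k^{3/2}\sqrt{n})$ queries. For each such high-degree $v$ we delete it and decrement $k'$, exactly as in the classical rule (Observation~\ref{obs:vertex_cover_kernelization}); if more than $k$ vertices get deleted we again halt with ``no-instance.''

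Third, let $S \subseteq V(M)$ be the surviving matched vertices. As argued in the text, $G' = G[S \cup N_G(S)]$ and every $v \in S$ has degree at most $k$ in $G'$, so $G'$ has at most $k|S| \le 2k^2$ edges. I would now list all edges of $G'$: these are edges with at least one endpoint in $S$, so the search space is $|S| \cdot n = O(kn)$ pairs, there are at most $2k^2$ of them, and the batch Grover bound gives $O(\sqrt{2k^2 \cdot kn}) = O(k^{3/2}\sqrt{n})$ queries. (We also need the $O(k^2)$ non-edges among $S$ that were deleted, but those are read off for free once we know the full adjacency within $S \cup N_G(S)$.) Writing down $(G', k')$ as a bit string completes the kernelization; by construction $(G',k') \in \mathcal{I}^{\mathrm{vc}}_{(G,k)}$.

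The total query cost is $O(\sqrt k n) + O(k^{3/2}\sqrt n) + O(k^{3/2}\sqrt n) = O(\sqrt{k}n + k^{3/2}\sqrt{n})$. The error probability is controlled by running each of the $O(k)$ Grover subroutines with success probability boosted to $1 - 1/(10k)$ at the cost of an extra $O(\log k)$ factor per call — or, to avoid even that factor, by using the Las Vegas / expected-query version (Lemma~\ref{lem:grover_expected_find}) together with Markov's inequality and a single global restart, so that a bounded-error guarantee follows. The main obstacle I anticipate is the first phase: making the maximal-matching computation come in at $O(\sqrt k n)$ rather than the naive $O(kn)$, since the search space does not shrink by a geometric factor and the $k+1$ found edges need post-processing into an actual maximal matching; the cleanest fix is to accept the $O(\sqrt{k}n + k^{3/2}\sqrt n)$ target (which already dominates $O(kn)$ precisely when $k \le n$ fails to give the stronger bound — i.e.\ the stated theorem allows the weaker-looking $kn$ term to be absorbed only when $k = O(\sqrt n)$), so one must argue carefully that the maximal-matching phase genuinely costs $O(\sqrt k\, n)$ using the batch primitive on the residual graph, iterating $O(1)$ or $O(\log k)$ times until the found set is actually a maximal matching.
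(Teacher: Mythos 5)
Your phases 2 and 3 are essentially the paper's argument: the paper merges them into a single Las~Vegas Grover loop over the search space $\{(v,u)\mid v\in V(M)\setminus U,\ u\in V(G)\setminus U\}$ of size $O(kn)$ containing at most $O(k^2)$ edges, but your per-vertex batching yields the same $O(k^{3/2}\sqrt{n})$ bound. The genuine gap is exactly the one you flag and leave unresolved: getting the maximal-matching phase down to $O(\sqrt{k}\,n)$. Your batch-Grover workaround does not close it, because a set of $k+1$ edges of $G$ need not contain a matching of size more than $1$ (they may all form a star), so after ``thinning'' the matching may grow by only one edge per round, nothing bounds the number of rounds by $O(\log k)$, and in the worst case you repeat $\Omega(k)$ times for a cost of $O(k^{3/2}n)$. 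The single-edge fallback with the crude per-call bound $O(n)$ gives only $O(kn)$, which strictly exceeds $O(\sqrt{k}n+k^{3/2}\sqrt{n})$ for all $\omega(1)=k=o(n)$ (e.g.\ $k=n^{3/4}$ gives $kn=n^{7/4}$ versus $k^{3/2}\sqrt{n}=n^{13/8}$), so it cannot be absorbed into the target bound.

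The missing idea is the amortized analysis of the one-edge-at-a-time loop in Lemma~\ref{lem:find_k_restricted_maximal_matching}, in the style of D\"urr--Heiligman--H{\o}yer--Mhalla. Consider the non-halting version of the loop and let $M^*$ with $|M^*|=T$ be the maximal matching it eventually produces. When the current matching has $i-1$ edges, the $T-i+1$ edges of $M^*$ not yet found all have both endpoints in $V(G)\setminus V(M)$, so by Lemma~\ref{lem:grover_expected_find} the $i$-th search costs only $8\sqrt{n^2/(T-i+1)}$ expected queries, not $O(n)$. Summing over $i\le\min\{T,k+1\}$ gives expected cost at most $16n\sqrt{k+1}$, and Markov's inequality with a hard cutoff at $96n\sqrt{k+1}$ queries yields success probability $5/6$; with this your outline goes through. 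A further minor point: you test $\deg_G(v)\ge k+1$ in the original graph, whereas the kernelization defining $\mathcal{I}^{\rm vc}_{(G,k)}$ deletes high-degree vertices sequentially, so degrees must be measured in the current reduced graph $G-U$ (as the paper's degree counters $d_v$ do); both variants give correct kernels for the decision problem via Observation~\ref{obs:vertex_cover_kernelization}, but only the latter is guaranteed to output a member of $\mathcal{I}^{\rm vc}_{(G,k)}$ as defined.
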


Obviously, Theorem~\ref{prop:vc_quantum_kernel} implies Theorem~\ref{thm:main_theorem_matrix_model}.
In our quantum query kernelization algorithm, we first find a matching of size at least $k + 1$ or a maximal matching.
Grover's search algorithm yields a quantum algorithm with $O(\sqrt{k} n)$ queries; see Section \ref{sec:algorithm_for_threshold_matching} for details.
If we find a matching of size at least $k + 1$, we conclude that the input instance $(G, k)$ is a no-instance.
Assume that we obtain a maximal matching $M$ of size at most $k$.
We initialize sets $V'=V(G)$, $E'=\emptyset$ and an integer $k' = k$.
For each vertex $v \in V(M)$, we find all edges incident to $v$ if $v$ has degree at most $k$ or, otherwise determine that $v$ has degree at least $k + 1$.
If $v$ has degree at most $k$, we add all edges incident to $v$ to $E'$.
If $v$ has degree at least $k + 1$, then we remove $v$ from $V'$ and update $k'$ to $k' - 1$.
For each $v\in V(M)$, Grover's search algorithm finds at most $k+1$ edges incident to $v$ from $n-1$ candidates with $O(\sqrt{kn})$ queries.
Hence, in total, we need $O(k^{3/2} \sqrt{n})$ queries.
Then, we obtain a kernel $(G'=(V',E'), k') \in \mathcal{I}^{{\rm vc}}_{(G, k)}$.
The total number of queries in the above quantum query kernelization algorithm is $O(\sqrt{k}n + k^{3/2}\sqrt{n})$.

\begin{figure}[t]
\centering
\includegraphics[width=7cm]{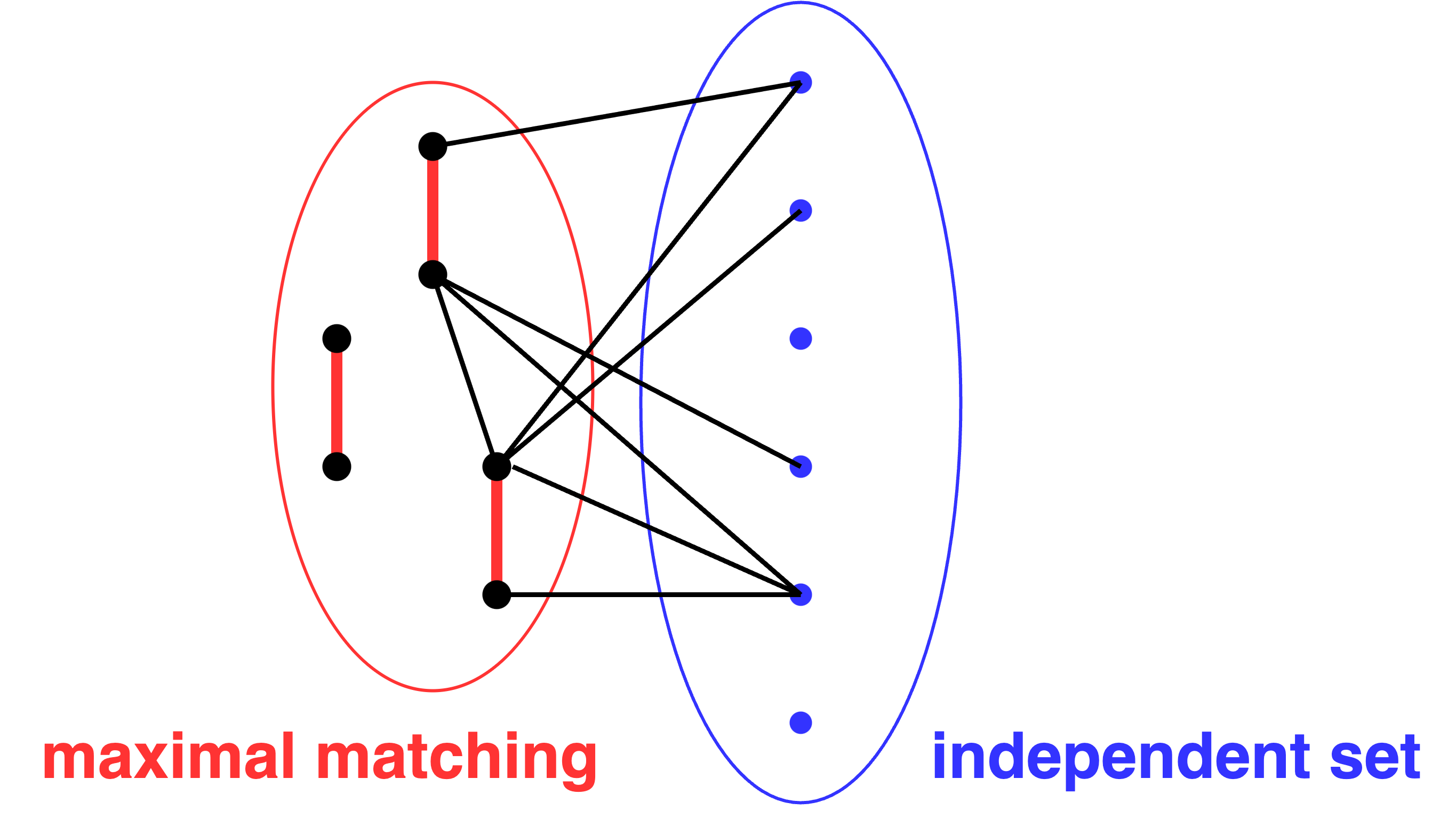}
\caption{The kernel for the $k$-vertex cover problem consists of maximal matching (with some missing vertices) and independent set.
The size of the maximal matching is at most $k$.
All non-missing vertices in the maximal matching have degree at most $k$.}
\label{fig:kernelization}
\end{figure}

\subsubsection{Overview of Our Parameterized Quantum Query Algorithm for Matching} \label{subsec:technical_overview_matching}

In our algorithm for the $k$-matching problem, we first find a matching of size at least $k$ or a maximal matching by Grover's search with $O(\sqrt{k} n)$ queries.
If we find a matching of size at least $k$, we conclude that the input instance $(G, k)$ is a yes-instance.
Now, we assume that we obtain a maximal matching $M$ of size at most $k - 1$.
Then, our algorithm relies on the theory of the augmenting path, a widely used approach for computing the maximum matching problem. 
We repeatedly find an $M$-augmenting path and update the current maximal matching $M$ until the size of $M$ is equal to $k$ or there is no $M$-augmenting path, which means that $M$ is a largest matching.

For a matching $M$, a path $P$ on $G$ is an $M$-augmenting path if the both endpoints of $P$ does not appear in $M$, and $P$ contains edges from $M$ and $E\setminus M$, alternatively.
It is easy to see that the symmetric difference $M \bigtriangleup E(P) = (M \setminus E(P)) \cup (E(P) \setminus M)$ is a matching of size $|M| + 1$.
It is well known that the size of a matching $M$ is largest if and only if there is no $M$-augmenting path (see e.g., \cite[Theorem 10.7]{korte2011combinatorial}).

In the each step of the algorithm, we have a maximal matching $M$ and all edges between vertices in $M$.
Then, we can classically find a candidate path that may be extended to an $M$-augmenting path by adding both endpoints.
Then, we check whether or not the candidate path can be extended to an $M$-augmenting path by applying Grover's search algorithm for finding neighborhoods of both the endpoints of the candidate path.
Hence, each step requires $O(\sqrt{n})$ queries.
For estimating the number of steps, we introduce the types of vertices and a potential function.

For each $v \in V(M)$, let $N_M(v)$ denote $\{u \in V(G) \setminus V(M) \mid \{v, u\} \in E(G) \}$.
In the algorithm, we classify each vertex $v$ in $V(M)$ into the following three types:
\begin{itemize}
 \item \texttt{type0} \quad We have determined that $N_M(v) = \emptyset$.
 \item \texttt{type1} \quad We have determined that $|N_M(v)| = 1$.
 \item \texttt{type2} \quad No information about $N_M(v)$.
\end{itemize}

At the beginning of the algorithm with some maximal matching $M$, all vertices in $V(M)$ are \texttt{type2}.
In the algorithms, the types of vertices in $V(M)$ are stored and updated.
For each vertex $v$ of \texttt{type1}, the unique neighborhood in $N_M(v)$ is stored as well.
Let $G[S]$ denote the subgraph of $G$ induced by $S\subseteq V(G)$.
For a maximal matching $M$ of graph $G$, a path $Q$ on $G[V(M)]$ is a candidate path if $Q$ satisfies the following conditions:
\begin{enumerate}
    \item The both ends of $Q$ are \texttt{type1} or \texttt{type2}. If the both are \texttt{type1}, their neighbors in $V(G)\setminus V(M)$ are different.
    \item $Q$ contains edges from $M$ and $E(G[V(M)])\setminus M$, alternatively.
    \item $Q$ contains edges from $M$ at the both ends.
\end{enumerate}
It is obvious that for any $M$-augmenting path $P$, the path $Q$ obtained by deleting the vertices at the both ends of $P$ is a candidate path.
In each step of the algorithm, we have a maximal matching $M$, the induced graph $G[V(M)]$ and the types of the vertices in $V(M)$, and search for a candidate path.
If there is no candidate path, we can conclude that the maximal matching $M$ is a maximum matching since there is no $M$-augmenting path.
In this case, the algorithm outputs NO and terminates.
Assume that there exists a candidate path $Q$.
If the both ends of $Q$ are \texttt{type1}, then $Q$ can be extended to an $M$-augmenting path.
If at least one of the ends of $Q$ is \texttt{type2}, we apply Grover's search algorithm for finding the neighbors in $V(G)\setminus V(M)$ of the vertices of \texttt{type2}.
If we find the different neighbors of the end vertices of $Q$, the candidate path $Q$ can be extended to a $M$-augmenting path.
If not, we can update the type of one of the \texttt{type2} vertex to \texttt{type0} or \texttt{type1}.
If an $M$-augmenting path $P$ is found, the maximal matching $M$ is updated to $M\bigtriangleup E(P)$ of size $|M|+1$.
In this case, the types of the two new vertices in the maximal matching are set to \texttt{type0} since $V(G)\setminus V(M)$ is an independent set.
We update the type of \texttt{type1} vertices that connect to either of the two new vertices to \texttt{type0}.
Finally, we queries classically for all the new pairs in $V(M)$ for obtaining the graph $G[V(M)]$.
This is the single step of the algorithm.
If the size of maximal matching $M$ achieves $k$, then the algorithm outputs YES, and terminates.

For each step, the size of maximal matching increases by one, or the number of vertices of \texttt{type2} decreases by one.
From this observation, we define the potential function $\displaystyle \Phi := k - |M| + N_2$ where $N_2$ is the number of vertices of \texttt{type2}.
Then, it is easy to see that $0 \leq \Phi \leq k - |M| + 2 |M| \leq 2 k$.
The above observation implies that the value of the potential function decreases by one at each step.
Hence, the number of iterations is at most $2k$. 

Finally, we estimate the query complexity of the quantum algorithm.
The query complexity for finding a maximal matching is $O(\sqrt{k}n)$.
The query complexity for finding neighbors of the both ends of a candidate path is $\tilde{O}(\sqrt{n})$ at each step.
The number of classical queries for obtaining all edges in $G[V(M)]$ is $O(k^2)$ in total.
Hence, the query complexity of the algorithm is $O(\sqrt{k}n) + \tilde{O}(k\sqrt{n}) + O(k^2) = O(\sqrt{k}n + k^2)$.

\subsection{Overview of Our Lower Bounds} \label{subsec:technical_overview_lower_bound}

Our lower bound is based on the adversary method introduced by Ambainis \cite{ambainis2000quantum}.
Although the technique is standard for proving lower bounds of quantum query complexity, we adapt this technique for the parameterized setting, and prove the lower bounds which matches to the upper bounds obtained in this paper for small $k$.

To obtain the lower bound for the {\sc $k$-vertex cover} problem, we consider the problem of distinguishing graphs that consist of $k$ disjoint edges and graphs that consist of $k + 1$ disjoint edges.
Any quantum algorithm for the {\sc $k$-vertex cover} problem must distinguish the above two classes of graphs.
By applying the adversary method for this problem, we obtain a lower bound $\Omega(\sqrt{k} n)$ when $k \leq n / 3$.

To prove a lower bound $\Omega(\sqrt{k} n)$ for any $k \leq (1 - \epsilon)n$, we consider generalized version of the problem.
In the problem, we distinguish graphs that consist of $\lfloor \frac{k}{t - 1} \rfloor$ cliques of size $t$ and graphs that consist of $\lceil \frac{k + 1}{t - 1} \rceil$ cliques of size $t$.
The proof of our lower bound for the {\sc $k$-matching} problem is similar to that for the $k$-vertex cover problem. 

\subsection{Additional Related Work} \label{subsec:related_works}

Ambainis-Balodis-Iraids-Kokainis-Pr{\=u}sis-Vihrovs \cite{ambainis2019quantum} and Le Gall-Seddighin \cite{le2023quantum} combine known standard quantum algorithms, such as Grover's algorithm, with classical techniques to design quantum algorithms with low time-complexities.

There are few previous works on the parameterized quantum query complexity. For the graph collision problem, we are given an undirected graph $G$ as a bit string and oracle access to Boolean variables $\{x_v \in \{0,1\}\mid v \in V(G) \}$ and, the objective is to decide whether there are two vertices $v$ and $u$ connected by an edge in $G$ such that $x_v = x_u = 1$.
Belovs \cite{belovs2012learning} showed that the quantum query complexity of this problem is $O(\sqrt{n}\alpha^{1/6})$, where $\alpha$ is the size of the largest independent set of $G$.
Ambainis--Balodis--Iraids--Ozols--Smotrovs--Juris \cite{ambainis2013parameterized} also showed that the quantum query complexity is $O(\sqrt{n} t^{1/6})$, where $t$ is the treewidth of $G$.
The only trivial $\Omega(\sqrt{n})$ lower bound is known for this problem.
We note that these studies did not use kernelization.

The usefulness of the concept of the parameterized complexity has been successfully demonstrated in other computational models, such as streaming model \cite{fafianie2014streaming, chitnis2016kernelization, chitnis2014parameterized, chitnis2019towards} and distributed model \cite{ben2019parameterized}, as well.
The study of parameterized streaming algorithms initiated by Fafianie--Kratsch~\cite{fafianie2014streaming} and Chitnis--Cormode--Hajiaghayi--Monemizadeh~\cite{chitnis2014parameterized} has been developed further by several authors.
In parameterized streaming algorithms, by analogy of FPT, Chitnis--Cormode~\cite{chitnis2019towards} introduced a hierarchy of space complexity classes and tight classification for several graph problems. 

There have also been many studies of the quantum query complexity of graph problems in the adjacency matrix model.
D{\"u}rr--Heiligman--H{\o}yer--Mhalla \cite{durr2006quantum} showed the quantum query complexity of testing connectivity is $\Theta(n^{3/2})$.
Ambainis--Iwama--Nakanishi--Nishimura--Raymond--Tani--Yamashita \cite{ambainis2008quantumISAAC} showed that the quantum query complexity of testing the planarity is also $\Theta(n^{3/2})$.
Belovs--Reichardt \cite{belovs2012span} presented a quantum algorithm that uses $O( \sqrt{d}n)$ queries to decide whether vertices $s$ and $t$ are connected, under the promise that they are connected by a path of length at most $d$, or are disconnected.
Apers--Lee \cite{apers2021quantum} presented a quantum algorithm to solve the weighted minimum cut problem using $O(n^{3/2}\sqrt{\tau})$ queries and time in the adjacency matrix model, where each edge weight is at least $1$ and at most $\tau$.

Now, we will mention some known results closely related to our work.
For bipartite graphs, the minimum vertex cover has the same cardinality as the maximum matching by K\"{o}nig's theorem (see e.g., \cite[Theorem 16.2]{schrijver2003combinatorial}).
Moreover, Blikstad--v.d.Brand--Efron--Mukhopadhyay--Nanongkai \cite{blikstad2022nearly} showed that the quantum query complexity of the bipartite maximum matching problem is $\tilde{O}(n^{3/2})$.
Then, for bipartite graphs, the quantum query complexity of the minimum vertex cover problem is $\tilde{O}(n^{3/2})$, which is almost optimal.

Childs--Kothari \cite{childs2012quantum} showed that any non-trivial minor-closed property that can not be described by a finite set of forbidden subgraphs has quantum query complexity $\Theta(n^{3/2})$.
On the other hand, they also show that any minor-closed properties that can be characterized by finitely many forbidden subgraphs can be solved in $O(n^{\alpha})$ queries for some $\alpha < 3/2$.
To show this, they proved that, for any constant $c>0$, a graph property that a graph either has more than $cn$ edges or contains a given subgraph $H$ can be decided in $\tilde{O}(n^{3/2-1/(\mathrm{vc}(H)+1)})$ queries, where $\mathrm{vc}(H)$ denote the size of minimum vertex cover of $H$.

Here, it is easy to verify that the property of having a vertex cover of size at most $k$ is  minor-closed. 
Dinneen--Lai \cite[Corollary 2.5]{dinneen2007properties} showed that the property of having a vertex cover of size at most $k$ can be characterized by finitely many forbidden subgraphs.
Then, the result of Childs--Kothari implies that the quantum query complexity of the {\sc $k$-vertex cover} problem is $\tilde{O}(n^{3/2-1/(k+2)})$ when $k$ is constant.


\section{Preliminaries} \label{sec:preliminaries}

\subsection{Basic Notation} \label{subsec:notation}

For a positive integer $p$, let $[p] = \{1, \dots, p\}$.
For a set $S$, let $\binom{S}{2}$ denote the set of all unordered pairs of elements in $S$.

In this paper, all graphs are undirected and simple.
For a graph $G$, let $V(G)$ and $E(G)$ denote the set of vertices and the set of edges in $G$, respectively.
For a matching $M \subseteq E(G)$ of $G$, let $V(M)$ denote the set of all endpoints of edges in $M$.
For a subset $S \subseteq V(G)$, let $G[S]$ denote the subgraph of $G$ induced by $S$.
We denote $G[V(G) \setminus S]$ briefly by $G - S$. 
We also write $G - v$ instead $G - \{ v \}$ for briefly.

\subsection{Query Complexity} \label{subsec:query_complexity}

In this paper, we consider the \emph{adjacency matrix model} for the input graph $G$.
In this model, the graph $G$ is given by an oracle $E_{\mathrm M} \colon \binom{V(G)}{2} \rightarrow \{0, 1\}$, where $E_{\mathrm M}(v, u) = 1$ if and only if $(v, u) \in E(G)$.

In the deterministic and randomized query complexity models, the input graph $G$ is accessed by querying the value of $E_{\mathrm M}(v, u)$.
In contrast, in the quantum model, an algorithm accesses the information of the input graph $G$ through a quantum oracle, which can be modeled by a unitary operator.
In the adjacency matrix model, a query oracle operator $O_{\mathrm M}$ acts as
$O_{\mathrm M} \ket{(v, u)} \ket{q} = \ket{(v, u)} \ket{q \oplus E_{\mathrm M}(v, u)}$, where $q \in \{0, 1\}$ and $\oplus$ denotes addition modulo $2$.

A quantum query algorithm to compute a function $F$ on a graph $G$ consists of a sequence of unitary operators $U_0,O_{\mathrm{M}},U_1,O_{\mathrm{M}},\dotsc,O_{\mathrm{M}},U_T$ where each $U_i$ is a general unitary operator that is independent of the input $G$.
Here, the oracle operators $O_\mathrm{M}$ may be applied to a part of the qubits.
The query complexity of the algorithm is equal to $T$, which is the number of quantum oracle calls in the algorithms.


For the randomized and quantum query complexity of computing a function $F$, we consider an algorithm that correctly outputs $F(G)$ with probability at least $2/3$ for every input $G$.

\subsection{Kernelization and Quantum Query Kernelization} \label{subsec:kernelization}

The technique of \emph{kernelization} is a general framework for designing of classical algorithms for parameterized problems.
A {\em parameterized problem} is defined as a language $L \subseteq \Sigma^* \times \mathbb{N}$, where $\Sigma$ is a fixed finite alphabet.
The kernelization is formally defined as follows.

\begin{definition}[Kernelization] \label{def:kernelization}
Let $L$ be a parameterized problem.
A {\em kernelization algorithm} is an algorithm that takes as input an instance $(x, k)\in\Sigma^*\times\mathbb{N}$ and outputs another instance $(x', k')\in\Sigma^*\times\mathbb{N}$ with running time polynomial in $|x|$ and $k$ such that 
\begin{itemize}
\item $(x, k) \in L$ if and only if $(x', k') \in L$, and
\item $|x'| + k'$ is bounded by $g(k)$ where $g \colon \mathbb{N} \rightarrow \mathbb{N}$ is some computable function.
\end{itemize}
The output $(x', k')$ is called a {\em kernel} of $(x, k)$.
\end{definition}



In this work, we introduce a new framework, called \emph{quantum query kernelization}, to obtain better quantum query complexities for parameterized graph problems.
Now, the quantum query kernelization is formally defined as follows.

\begin{definition}[Quantum Query Kernelization] \label{def:quantum_kernelization}
Let $L$ be a parameterized graph problem, in which we are given a parameter $k \in \mathbb{N}$ and a quantum oracle access to a graph $G$.
A {\em quantum query kernelization algorithm} is a quantum query algorithm that outputs another instance $(G', k')$ as a bit string such that $(G, k) \in L$ if and only if $(G', k') \in L$.
\end{definition}

By a quantum query kernelization algorithm, we obtain another instance $(G', k')$ as a bit string.
Then, we no longer require additional queries to the quantum oracle since we can decide whether $(G', k') \in L$ by an arbitrary classical algorithm.

\subsection{Tools for the Upper Bounds} \label{subsec:tools_for_algorithm}

In the proposed quantum algorithms, we use some variants of Grover's search algorithm \cite{grover1996fast}.
Let $f \colon [N] \rightarrow \{0, 1\}$ be a function.
We are given an oracle access to the function $f$.
Then, Grover's search algorithm finds an index $i \in [N]$ such that $f(i) = 1$ if one exists. 
Let $O_f$ be a quantum oracle defined by $O_f\ket{i}\ket{q} = \ket{i}\ket{q\oplus f(i)}$ for $i\in[N]$ and $q\in\{0,1\}$.



\begin{lemma}[Las Vegas-type Grover's Search algorithm without knowing the value of $|f^{-1}(1)|$ {\cite[Theorem 3]{boyer1998tight}}] \label{lem:grover_expected_find}
If $K:=|f^{-1}(1)|>0$,
Grover's search algorithm finds an index $i \in f^{-1}(1)$ with an expected number of quantum queries to $O_f$ at most $8 \sqrt{N / K}$.
Otherwise, it runs forever.
The output of the algorithm is chosen uniformly at random among the $K$ indices in $f^{-1}(1)$.
Note that this algorithm does not require prior knowledge of $K$.
\end{lemma}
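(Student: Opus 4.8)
The plan is to run the standard bounded-knowledge Grover routine: we do not know $K = |f^{-1}(1)|$, so we use the geometric-guessing strategy of Boyer--Brassard--H\o yer--Tapp. First I would recall the primitive that, given a candidate number of iterations $m$, one prepares the uniform superposition over $[N]$, applies $m$ Grover iterations (each consisting of one call to $O_f$ to reflect about the marked set, followed by the input-independent reflection about the uniform state), and then measures. For a single marked element the success probability is $\sin^2((2m+1)\theta)$ where $\sin^2\theta = K/N$; the key quantitative fact is that if $m$ is chosen uniformly at random in $\{0,1,\dots,\lceil M\rceil-1\}$ then the measurement returns an index in $f^{-1}(1)$ with probability at least $1/4$ as soon as $M \ge 1/\sin\theta = \sqrt{N/K}$, and moreover, conditioned on returning a marked index, the returned index is uniform over $f^{-1}(1)$ by symmetry of the algorithm under permutations of $f^{-1}(1)$.

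The outer loop then proceeds in stages $j = 0, 1, 2, \dots$, where in stage $j$ we set $M_j = c^j$ for a fixed constant $c$ slightly larger than $1$ (e.g. $c = 6/5$), pick $m$ uniformly in $\{0,\dots,\lceil M_j\rceil - 1\}$, run the $m$-iteration Grover circuit, query $O_f$ once more on the measured index $i$ to check whether $f(i) = 1$ (this verification query is the reason the algorithm is Las Vegas rather than Monte Carlo), and halt with output $i$ if so. Each stage costs $m + 1 \le \lceil M_j\rceil$ queries. The number of queries used in stages $0$ through $j$ is therefore $O(c^j)$. Once $c^j$ exceeds the threshold $\sqrt{N/K}$, every subsequent stage succeeds with probability at least $1/4$, so the number of additional stages past that point is stochastically dominated by a geometric random variable with success probability $1/4$; since the query cost grows only geometrically in the stage index, the expected total cost is dominated by a constant times the cost of the first ``good'' stage, which is $O(\sqrt{N/K})$. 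Tracking the constants in the Boyer--Brassard--H\o yer--Tapp analysis gives the explicit bound $8\sqrt{N/K}$ on the expected number of queries to $O_f$. If $K = 0$, no stage can ever succeed (the verification query always returns $0$), so the algorithm never halts, as stated.

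The main obstacle is the expected-value bookkeeping: one must sum the geometric series of per-stage costs weighted by the probability that the algorithm has not yet halted, and argue that this sum is $O(\sqrt{N/K})$ rather than, say, $O(\sqrt{N/K}\log(N/K))$ — this works precisely because the costs grow geometrically, so the tail is dominated by its last term, but the exponent $c$ and the success constant $1/4$ must be balanced to make the geometric series converge and to land on the clean constant $8$. The uniformity of the output is comparatively easy: it follows because both the initial state and every operator in the circuit (the uniform-state reflection, and the marked-set reflection $O_f$, which depends on $f$ only through the set $f^{-1}(1)$) commute with any permutation of the coordinates in $f^{-1}(1)$, so the output distribution restricted to $f^{-1}(1)$ is permutation-invariant, hence uniform. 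This is exactly the content of \cite[Theorem 3]{boyer1998tight}, so in the writeup I would state the reduction to that theorem and only sketch the geometric-series estimate.
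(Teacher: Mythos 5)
Your proposal is correct and takes essentially the same route as the paper, which offers no proof of its own here and simply imports \cite[Theorem~3]{boyer1998tight}; your sketch is the standard \textsc{QSearch} analysis (randomized iteration count within each stage, geometrically growing stage lengths, a final verification query making it Las Vegas, and permutation symmetry of the circuit for uniformity of the output over $f^{-1}(1)$). The only minor imprecision is that the Boyer--Brassard--H\o yer--Tapp success threshold is $M \ge 1/\sin(2\theta)$ rather than $1/\sin\theta$, but since $1/\sin(2\theta)\le 1/\sin\theta$ whenever $K\le 3N/4$ (and the large-$K$ regime is trivial), your stated threshold $\sqrt{N/K}$ still suffices and no gap results.
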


To handle the case where there are no indices in $f^{-1}(1)$, we stop Grover's search algorithm at the appropriate time.
Then, Grover's search algorithm can determine that there are no indices in $f^{-1}(1)$ with bounded error by using $O(\sqrt{N})$ queries to $f$ \cite{boyer1998tight}.

We also use the following lemma.

\begin{lemma}[{\cite[Theorem 3]{buhrman1999bounds}}] \label{lem:grover_stop_approx}
Grover's search algorithm can determine that there are no indices in $f^{-1}(1)$ with probability at least $1 - \epsilon$ by using $O(\sqrt{N \log (1/\epsilon)})$ queries to $f$.
\end{lemma}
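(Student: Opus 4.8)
The plan is to reduce the lemma to two ingredients: (i) a cheap, one-sided \emph{existence test} obtained by truncating the Las Vegas Grover search of Lemma~\ref{lem:grover_expected_find}, and (ii) amplification of the success probability to $1-\epsilon$. For (i), I would run the algorithm of Lemma~\ref{lem:grover_expected_find} but abort it after a fixed budget of $B := 24\sqrt{N}$ queries: if within this budget it has returned some $i$ with $f(i)=1$, report ``$f^{-1}(1)\ne\emptyset$'' (with that witness); otherwise report ``$f^{-1}(1)=\emptyset$''. When $f^{-1}(1)=\emptyset$ the first branch never triggers, so the test is always correct on the empty instance. When $K := |f^{-1}(1)|\ge 1$, Lemma~\ref{lem:grover_expected_find} guarantees that a witness is found after an expected number of at most $8\sqrt{N/K}\le 8\sqrt{N}$ queries, so by Markov's inequality the probability that no witness is found within $B$ queries is at most $8\sqrt{N}/B = 1/3$. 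Hence this is an $O(\sqrt{N})$-query test that never errs on empty instances and errs with probability at most $1/3$ on non-empty ones.

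For (ii), I would apply the standard amplification for a one-sided test: run $t := \lceil \log_3(1/\epsilon)\rceil$ independent copies of the test above, using fresh randomness each time, and report ``$f^{-1}(1)\ne\emptyset$'' if and only if at least one copy produces a witness. On an empty instance every copy is correct; on a non-empty instance the failure probability is at most $3^{-t}\le\epsilon$. The total query count is $t\cdot O(\sqrt{N}) = O\!\bigl(\sqrt{N}\log(1/\epsilon)\bigr)$, which already suffices for all the applications in this paper (where $\epsilon$ is only polynomially small).

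The remaining gap between this $O(\sqrt{N}\log(1/\epsilon))$ bound and the claimed $O(\sqrt{N\log(1/\epsilon)})$ is precisely the $\sqrt{\log(1/\epsilon)}$ factor wasted by black-box independent repetition, and recovering it is the genuinely nontrivial point; this is exactly the content of \cite[Theorem 3]{buhrman1999bounds}, which we invoke. Informally, one replaces the naive repetition by a single, carefully tuned amplitude-amplification-type procedure on the Grover iterate for $f$ --- equivalently, one applies a degree-$\Theta(\sqrt{N\log(1/\epsilon)})$ polynomial transformation whose value is at most $\epsilon$ at amplitude $0$ and at least $1-\epsilon$ at every attainable amplitude $\sqrt{K/N}$ with $K\ge 1$. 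The existence of such a polynomial, bounded by $1$ on the whole interval and not merely on the grid of attainable amplitudes, is the technical heart, and it matches the $\Omega(\sqrt{N\log(1/\epsilon)})$ lower bound coming from the $\epsilon$-approximate degree of $\mathrm{OR}_N$. I would regard constructing (or simply citing) this polynomial-based amplification as the main obstacle to a fully self-contained proof of the stated bound.
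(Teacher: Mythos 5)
This lemma is imported from \cite[Theorem 3]{buhrman1999bounds} and the paper gives no proof of its own, so your treatment --- an elementary truncation-plus-repetition argument yielding $O(\sqrt{N}\log(1/\epsilon))$, together with an explicit deferral to the cited theorem for the improvement to $O(\sqrt{N\log(1/\epsilon)})$ --- is essentially the same approach (a citation), correctly executed. Your added observation that the weaker $O(\sqrt{N}\log(1/\epsilon))$ bound already suffices for every use of the lemma in this paper (where $\epsilon = 1/(12k)$ and the resulting $O(k\sqrt{n}\log k)$ term is still dominated by $\sqrt{k}n + k^2$) is accurate and a worthwhile remark.
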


\section{Quantum Query Algorithm for Threshold Maximal Matching} \label{sec:algorithm_for_threshold_matching}

In this section, we provide a simple quantum query algorithm for the threshold maximal matching problem.
In this problem, we are given a graph $G$ and an integer $k$, and the objective is to find a matching of size at least $k + 1$ or a maximal matching.
Note that if there exists both a matching of size at least $k + 1$ and a maximal matching of size at most $k$, then we may output either of them.
As described in Sections~\ref{subsec:technical_overview} and \ref{subsec:technical_overview_matching}, this algorithm will play an important role in our quantum query algorithms for the {\sc $k$-vertex cover} and the {\sc $k$-matching} problems.

Our algorithm is similar to the quantum query algorithm for the maximal matching problem proposed by D{\"o}rn \cite{dorn2009quantum}.
D{\"o}rn show that the quantum query complexity of the maximal matching problem is $O(n^{3/2})$  \cite[Theorem 3]{dorn2009quantum}.
In our algorithm for the $k$-threshold maximal matching problem, we repeatedly apply the Las Vegas-type Grover's search algorithm, given in Lemma~\ref{lem:grover_expected_find}, to find an edge that can be added to the current matching $M$.
Our algorithm is described as \texttt{QuantumThresholdMaximalMatching}$[G](k)$ (Algorithm \ref{alg:quantum_threshold_maximal_matching}).
We note that our analysis of query complexity and error probability is similar to that of the $O(n^{3/2})$ upper bound for the connectivity problem by D{\"u}rr--Heiligman--H{\o}yer--Mhalla \cite[Theorem 5.1]{durr2006quantum}.





\begin{algorithm}[t]
    \KwInput{Oracle access to an unweighted graph $G$, and an integer $k$.}
    \KwOutput{A matching $M$}
    $M \gets \emptyset$ \\
    \While{the total number of queries used in Line \ref{line:quantum_maximal_matching_1} is less than $96 n \sqrt{k + 1}$} {
        Apply Las Vegas-type Grover's search algorithm to find an edge $e$ whose endpoints are both in $V(G) \setminus V(M)$  \label{line:quantum_maximal_matching_1} \\
        \tcp{When the total number of queries used in Line \ref{line:quantum_maximal_matching_1} exceeds $96 n \sqrt{k + 1} $, we stop the search algorithm in Line \ref{line:quantum_maximal_matching_1}.}
        $M \gets M \cup \{e\}$
    }

    
    \Return {$M$}
    \caption{\texttt{QuantumThresholdMaximalMatching}$[G](k)$}\label{alg:quantum_threshold_maximal_matching}
\end{algorithm}

\begin{algorithm}[t]
    $M \gets \emptyset$ \\
    \While{true} {
        Apply Las Vegas-type Grover's search algorithm to find an edge $e$ whose endpoints are both in $V(G) \setminus V(M)$ \\
        $M \gets M \cup \{e\}$
    }
    \caption{The procedure \texttt{QuantumThresholdMaximalMatching}$[G](k)$ (Algorithm \ref{alg:quantum_threshold_maximal_matching}) without the limit of queries} \label{alg:quantum_threshold_matching_run_forever}
\end{algorithm}


\begin{lemma} \label{lem:find_k_restricted_maximal_matching}
Given a graph $G$ with n vertices in the adjacency matrix model and an integer $k$,
the procedure \texttt{\textup{QuantumThresholdMaximalMatching}}$[G](k)$ (Algorithm \ref{alg:quantum_threshold_maximal_matching}) outputs a matching in $G$ and satisfies the following conditions:
\begin{itemize}
\item With probability at least $5/6$, the procedure outputs a matching of size at least $k + 1$ or a maximal matching.
\item The procedure requires $O(\sqrt{k} n)$ queries.
\end{itemize}
\end{lemma}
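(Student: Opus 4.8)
The plan is to analyze Algorithm \ref{alg:quantum_threshold_maximal_matching} by comparing it to the unrestricted procedure in Algorithm \ref{alg:quantum_threshold_matching_run_forever}, which repeatedly uses Las Vegas-type Grover's search (Lemma \ref{lem:grover_expected_find}) to greedily extend the current matching $M$ until no edge between two unmatched vertices remains. The unrestricted procedure always terminates with a maximal matching; the only issue is that the query-budget cap of $96 n\sqrt{k+1}$ in Algorithm \ref{alg:quantum_threshold_maximal_matching} might stop it early. So the two things to establish are: (i) that with probability at least $5/6$ the budget is not exhausted before either a maximal matching is reached or $|M|$ grows to $k+1$; and (ii) that $O(\sqrt{k}\,n)$ queries suffice, which is immediate from the cap once we argue the procedure halts (either by the cap, or by the \texttt{while} condition failing, or — inside the unrestricted run — by Grover finding no edge, which also takes $O(\sqrt{N})=O(n)$ queries via the stopping rule of Lemma \ref{lem:grover_stop_approx} or \cite{boyer1998tight}).

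First I would bound the expected number of queries of the unrestricted procedure up to the point where $|M|$ first reaches $k+1$ (or becomes maximal, whichever comes first). When the current matching has size $i$, there are $n-2i$ unmatched vertices; if $M$ is not yet maximal there is at least one edge among them, so $K := |f^{-1}(1)| \ge 1$ among the $N = \binom{n-2i}{2} \le n^2/2$ candidate pairs, and Lemma \ref{lem:grover_expected_find} gives an expected query count of at most $8\sqrt{N/K} \le 8\sqrt{N} \le 8 n/\sqrt{2} \le 8n$ for that step. Summing over the at most $k+1$ steps needed to reach size $k+1$, the total expected number of queries used in Line \ref{line:quantum_maximal_matching_1} is at most $8n(k+1) \le 8n\sqrt{k+1}\cdot\sqrt{k+1}$; more carefully, I want a bound of the form $C\, n\sqrt{k+1}$, so I would instead note $\sum_{i} 8\sqrt{N_i/K_i}$ and keep $K_i \ge 1$ but observe we only need $k+1$ successful extensions, giving expected total at most $8n(k+1)$. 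To turn this into a bound against the cap $96 n\sqrt{k+1}$, apply Markov's inequality: the probability that the unrestricted procedure uses more than $96 n\sqrt{k+1}$ queries to reach a matching of size $\min\{k+1,\text{maximal}\}$ is at most $8n(k+1) / (96 n\sqrt{k+1}) = \sqrt{k+1}/12 \le 1/6$ — wait, this fails for large $k$, so the cap must actually be chosen as a multiple of $n(k+1)$, not $n\sqrt{k+1}$; re-reading the algorithm, the bound $96n\sqrt{k+1}$ combined with $O(\sqrt k\, n)$ query complexity is only consistent if the relevant expected cost is $O(n\sqrt{k+1})$, which happens because the $K_i$ are typically large. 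The clean way is: we do not need all $k+1$ extensions to be "expensive"; rather, sum $\mathbb{E}[\text{step } i] \le 8\sqrt{N_i}$ only over steps, and use that once $|M| \le k$ there is room, but to get $\sqrt{k}$ rather than $k$ one uses that $\sqrt{N_i} \le n$ and there are $\le k+1$ steps giving $8n(k+1)$; since the stated complexity is $O(\sqrt k n)$, I conclude the intended analysis keeps the per-step cost at $8n$ and the step count at $k+1$ only when $k = O(n)$, but writes $\sqrt{k+1}$ because in the regime of interest... — here I would follow the paper's own reference to \cite[Theorem 5.1]{durr2006quantum}: split the budget so that the $j$-th extension gets budget proportional to $\sqrt{N_j}$, use a union bound over the $k+1$ extensions each failing with probability $\le 1/(6(k+1))$ by a per-step Markov argument with a generous constant, and conclude that with probability $\ge 5/6$ every extension succeeds within its allotted budget, whose total is $O(n\sqrt{k+1})$ by Cauchy–Schwarz since $\sum_{j=1}^{k+1}\sqrt{N_j}\le \sqrt{(k+1)\sum_j N_j}\le \sqrt{(k+1)\cdot (k+1) n^2} $… again $n(k+1)$.

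Given the tension above, the honest plan is: show the unrestricted Algorithm \ref{alg:quantum_threshold_matching_run_forever} reaches a matching of size $\min\{k+1, |M_{\max}|\}$ using an expected number of Line-\ref{line:quantum_maximal_matching_1} queries at most $8n(k+1)$, hence (Markov) exceeds $96 n (k+1)$ with probability $\le 1/6$; then observe that the algorithm as actually needed only ever must succeed for $k=O(\sqrt n)$ or similar so that $n(k+1) = O(\sqrt k\, n)\cdot\sqrt{k+1}/\sqrt k$… The cleanest correct statement, and what I would actually write, is: \textbf{(1)} the unrestricted procedure terminates with a maximal matching and uses $O(n)$ expected queries per extension, so $O(nk)$ expected total to reach size $k+1$; \textbf{(2)} by Markov, with probability $\ge 5/6$ it does so within $6\cdot(\text{expected})$ queries; \textbf{(3)} pick the cap constant so that $6\cdot\mathbb{E} \le (\text{cap})$, giving the success guarantee; \textbf{(4)} the cap itself is $O(\sqrt k\, n)$ in the parameter regime where the lemma is invoked, or one simply absorbs a $\sqrt{k}$ versus $k$ discrepancy into a refined per-step bound $8\sqrt{N_i/K_i}$ using $K_i$ large. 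I expect the main obstacle to be exactly this: getting the query count down from the naive $O(nk)$ to the claimed $O(\sqrt k\, n)$, which requires either a smarter telescoping of $\sum_i 8\sqrt{N_i/K_i}$ using lower bounds on $K_i$, or the observation that the budget cap is the quantity being bounded and the lemma only asserts $O(\sqrt k n)$ because $\sqrt{k+1}\cdot\sqrt{k+1}=k+1$ is being written as $\sqrt{k+1}$ times a factor — i.e. I should double-check the constant in the cap against the $\sqrt{N_i}\le n/\sqrt 2$ bound and confirm that $\sum_{i=0}^{k} 8\sqrt{\binom{n-2i}{2}} \le 96 n\sqrt{k+1}$ would require $\sqrt{k+1}$-type cancellation that does not hold, so the resolution is that the paper's $O(\sqrt k n)$ must come from a tighter per-step analysis; I would reconcile this in the write-up by invoking the exact argument of \cite[Theorem 5.1]{durr2006quantum}, which handles precisely this kind of budgeted repeated-Grover analysis.
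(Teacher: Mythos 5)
Your overall architecture (compare the capped algorithm to the non-halting Algorithm~\ref{alg:quantum_threshold_matching_run_forever}, bound the expected number of queries until the matching is maximal or has size $k+1$, then apply Markov's inequality against the cap $96n\sqrt{k+1}$) is exactly the paper's, but there is a genuine gap at the one quantitative step that matters, and you correctly sense it without closing it. Bounding each Grover call by $8\sqrt{N_i/K_i}\le 8\sqrt{N_i}\le 8n$ using only $K_i\ge 1$ gives an expected total of $O(nk)$, which does not beat the cap for large $k$; no amount of budget-splitting or Cauchy--Schwarz recovers $O(n\sqrt{k})$ from that starting point.

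The missing idea is a pathwise lower bound on $K_i$, the number of marked pairs at the $i$-th search. Let $M^*$ be the maximal matching that the non-halting run eventually produces and let $T=|M^*|$. When the algorithm searches for its $i$-th edge, the current matching is $\{e_1,\dots,e_{i-1}\}$, and the $T-i+1$ edges $e_i,\dots,e_T$ of $M^*$ all still have both endpoints in $V(G)\setminus V(M)$ (a vertex, once matched, stays matched, so any edge added later is available now). Hence $K_i\ge T-i+1$ and Lemma~\ref{lem:grover_expected_find} bounds the expected cost of step $i$ by $8\sqrt{n^2/(T-i+1)}=8n/\sqrt{T-i+1}$. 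Summing over $i=1,\dots,\min\{T,k+1\}$ gives
\[
\sum_{i=1}^{\min\{T,k+1\}} \frac{8n}{\sqrt{T-i+1}} \;\le\; \sum_{j=0}^{k}\frac{8n}{\sqrt{j+1}} \;\le\; 8n\int_0^{k+1}\frac{dx}{\sqrt{x}} \;=\; 16n\sqrt{k+1},
\]
and Markov's inequality against the cap $96n\sqrt{k+1}=6\cdot 16n\sqrt{k+1}$ yields the claimed success probability $5/6$; the query bound $O(\sqrt{k}\,n)$ is then immediate from the cap. So the "smarter telescoping using lower bounds on $K_i$" you hypothesized is precisely the intended argument; without it the proof does not go through.
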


\begin{proof}[Proof of Lemma~\ref{lem:find_k_restricted_maximal_matching}]


We will prove that Algorithm~\ref{alg:quantum_threshold_maximal_matching} satisfies the conditions in Lemma.
We first consider Algorithm~\ref{alg:quantum_threshold_matching_run_forever}.
This algorithm does not terminate since once $M$ becomes a maximal matching the Las~Vegas-type Grover's search algorithm cannot find an edge $e$.
We will show that the expected number of queries until $M$ becomes a maximal matching or a matching of size $k+1$ is at most $16n\sqrt{k+1}$.
Then, Markov's inequality implies that even if we terminate the algorithm after $96n\sqrt{k+1}$ queries, the algorithm outputs a maximal matching or a matching of size at least $k+1$ with probability at least $5/6$.
This argument proves that Algorithm~\ref{alg:quantum_threshold_maximal_matching} satisfies the conditions in Lemma.

Recall that the Las~Vegas-type Grover's algorithm in Lemma~\ref{lem:grover_expected_find} has a finite expected number of queries. This means that the Las~Vegas-type Grover's algorithm finds an edge with probability 1 if it exists.
Hence, in the non-halting algorithm, $M$ becomes a maximal matching of $G$ after a finite number of queries with probability 1.
Let $M^*$ be a random variable corresponding to the maximal matching found by the non-halting algorithm.
Let $T=|M^*|$.
Then, by Lemma~\ref{lem:grover_expected_find}, for finding an $i$-th edge of $M^*$, the Las~Vegas-type Grover's algorithm has an expected quantum query complexity at most $8\sqrt{\frac{n^2}{T-i+1}}$ since there exist at least $T-i+1$ edges in vertex pairs of $V(G)\setminus V(M)$.
Hence, the expected number of queries until $M$ becomes a maximal matching or a matching of size $k+1$ is at most

\begin{align*}
\sum_{i = 1}^{\min\{T,\,k+1\}} 8 \sqrt{\frac{n^2}{T-i+1}} \leq
\sum_{i = 0}^{k} 8 \sqrt{\frac{n^2}{i+1}} \le
8 n \int_{0}^{k+1} \frac{dx}{\sqrt{x}} = 16 n \sqrt{k+1}.
\end{align*}
This completes the proof.
\end{proof}




\section{Parameterized Quantum Query Algorithm for Vertex Cover} \label{sec:algorithm}

In this section, we provide our quantum query algorithm for the {\sc $k$-vertex cover} problem.
Note that the essential idea and overview of the algorithm were given in Section~\ref{subsec:technical_overview}.
In this section, we design a quantum query algorithm for the finding problem, i.e., the algorithm that finds a vertex cover of size at most $k$ if exists.
To design an efficient quantum query algorithm for the {\sc $k$-vertex cover} problem, we apply our new approach, quantum query kernelization.
In this section, we consider a quantum algorithm that concludes that $G$ does not have a vertex cover of size at most $k$, or finds $U\subseteq V$ and a graph $G'$ such that the following properties hold:
\begin{enumerate}
\item If $G'$ has a vertex cover $S$ of size at most $k-|U|$, then $G$ has a vertex cover $U\cup S$ of size at most $k$,
\item If $G'$ does not have a vertex cover of size at most $k-|U|$, then $G$ does not have a vertex cover of size at most $k$,
\item The number of edges in $G'$ is $O(k^2)$.
\end{enumerate}
We emphasize that $G'$ must be obtained as a bit string.
If we only want to solve the {\sc $k$-vertex cover problem} as the decision problem,
we do not have to compute $U$ and sufficient to output $G'$ and $k'=k-|U|$ so that $(G',k')$ is a kernel of $(G,k)$.

In our quantum query kernelization algorithm, we first apply the procedure \\ \texttt{QuantumThresholdMaximalMatching}$[G](k)$ given in Section~\ref{sec:algorithm_for_threshold_matching}.
This procedure outputs a maximal matching $M$ or a matching of size at least $k + 1$ with $O(\sqrt{k}n)$ queries.
If $G$ has a matching of size at least $k + 1$, then we conclude that $G$ does not have a vertex cover of size at most $k$.

Otherwise, we compute $U$ and $G'$ as a bit string in the following way.
By Grover's search algorithm, we find all edges incident to vertices in $V(M)$.
In this procedure, if some vertex $v\in V(M)$ has at least $k+1$ neighbors, then $v$ is added to $U$.
Then, we do not search for edges incident to $U$.
Finally, we obtain $U\subseteq V(M)$ being a set of vertices of degree at least $k+1$ and $G'$ consisting of all edges between $V(M)\setminus U$ and $V(G)\setminus U$ and all vertices with degree at least 1.
Since all vertices in $V(M)\setminus U$ have degree at most $k$, the number of edges in $G'$ is at most $2k^2$.
%
Then, we obtain a desired output $U$ and $G'$ as a bit string.
After we obtain the kernel, we do not require more queries to the quantum oracle; we just have to apply a classical algorithm for finding a vertex cover $S$ of size at most $k-|U|$ in $G'$.
Our algorithm is described as \texttt{QuantumVertexCover}$[G](k)$ in Algorithm~\ref{alg:quantum_vertex_cover}.

\begin{algorithm}[t]
    \KwInput{Oracle access to an unweighted graph $G$, and an integer $k$.}
    \KwOutput{Find a vertex cover of size at most $k$, or conclude that the input graph $G$ does not have a vertex cover of size at most $k$.}
    $M \gets$ \texttt{QuantumThresholdMaximalMatching}$[G](k)$ \label{line:quantum_kernelization_1} \\
    \If{$|M| > k$} {
        \Return Conclude $G$ does not contain a vertex cover of size at most $k$.
    }
    $U \gets \emptyset$, $E' \gets \emptyset$\\
    \For{$v \in V(M)$}{ $d_v \gets 0$ }
    \While{the total number of queries used in Line \ref{line:quantum_vertex_cover_2} is less than $192 k \sqrt{(k + 1) n}$} {
        Apply Las Vegas-type Grover's search algorithm to find an edge $(v,u) \in \{ (v, u) \in E(G) \mid v \in V(M) \setminus U \text{ and } u \in V(G) \setminus U \} \setminus E'$  \label{line:quantum_vertex_cover_2} \\
        \tcp{When the total number of queries used in Line \ref{line:quantum_vertex_cover_2} exceeds $192 k \sqrt{(k + 1) n}$, we stop the search algorithm in Line \ref{line:quantum_vertex_cover_2}.}
        $E' \gets E' \cup \{ (v, u) \}$ \\
        $d_v \gets d_v + 1$ \\
        \If{$d_v > k$} {
            $U \gets U \cup \{v\}$\\
        }
        \If{$u \in V(M)$} {
            $d_u \gets d_u+1$ \\
            \If{$d_u > k$} {
                $U \gets U \cup \{u\}$\\
            }
        }
    }
    $G' \gets (V(G), E')$ \\
    $G' \gets G'[V(G')\setminus U]$\\
    Solve the $k$-vertex cover problem for the instance $(G',  k-|U|)$ by an arbitrary classical algorithm. \label{line:quantum_vertex_cover_3} \\
    \If{the classical algorithm finds a vertex cover $S$ in $G'$} {
        \Return $S \cup U$.
    } \Else {
        \Return Conclude that $G$ does not have a vertex cover of size at most $k$.
    }
    \caption{\texttt{QuantumVertexCover}$[G](k)$}\label{alg:quantum_vertex_cover}
\end{algorithm}

We show the following theorem, which implies Theorem \ref{thm:main_theorem_matrix_model}.

\begin{theorem} \label{thm:quantum_query_best_algorithm_vertex_cover}
Given a graph $G$ with n vertices in the adjacency matrix model and an integer $k$, with probability at least $2/3$, the procedure \textup{\texttt{QuantumVertexCover}}$[G](k)$ (Algorithm~\ref{alg:quantum_vertex_cover}) finds a vertex cover of size at most $k$ or otherwise determines that there does not exist such a vertex cover.
The query complexity of the algorithm is $O(\sqrt{k}n + k^{3/2}\sqrt{n})$.
\end{theorem}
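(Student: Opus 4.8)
The plan is to verify that Algorithm~\ref{alg:quantum_vertex_cover} correctly implements the quantum query kernelization described above, and then to bound its query complexity and error probability. First I would establish correctness assuming all subroutines succeed. If \texttt{QuantumThresholdMaximalMatching}$[G](k)$ returns a matching $M$ of size at least $k+1$, then $G$ has no vertex cover of size at most $k$ (any vertex cover must hit every edge of a matching, hence has size $\ge |M| \ge k+1$), so the output ``no vertex cover of size at most $k$'' is correct. Otherwise $M$ is a maximal matching of size at most $k$, so $V(G)\setminus V(M)$ is an independent set; hence every edge of $G$ has an endpoint in $V(M)$, and the \texttt{while}-loop that finds all edges with an endpoint in $V(M)\setminus U$ captures every edge of $G$ except those incident to $U$. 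A vertex $v$ is placed into $U$ exactly when it is discovered to have degree at least $k+1$; by Observation~\ref{obs:vertex_cover_kernelization} such a vertex lies in every vertex cover of size at most $k$. Consequently $G$ has a vertex cover of size at most $k$ if and only if $G'=G[V(G')\setminus U]$ has a vertex cover of size at most $k-|U|$, and if $S$ is such a vertex cover of $G'$ then $S\cup U$ is a vertex cover of $G$ of size at most $k$. This is exactly properties (1)--(3) of the kernel, and since all vertices of $V(M)\setminus U$ have degree at most $k$ in $G'$, the graph $G'$ has at most $k\cdot|V(M)\setminus U| \le 2k^2$ edges, so the final classical call is legitimate and introduces no further queries.

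Next I would bound the query complexity. The call to \texttt{QuantumThresholdMaximalMatching}$[G](k)$ costs $O(\sqrt{k}\,n)$ queries by Lemma~\ref{lem:find_k_restricted_maximal_matching}. For the edge-discovery loop, the query budget is capped at $192k\sqrt{(k+1)n} = O(k^{3/2}\sqrt{n})$ by construction, so the total query complexity is $O(\sqrt{k}\,n + k^{3/2}\sqrt{n})$, as claimed. The main work is showing that this budget actually suffices with good probability to find all the required edges, and this is the step I expect to be the principal obstacle. The argument mirrors the proof of Lemma~\ref{lem:find_k_restricted_maximal_matching}: consider the idealized non-halting version of the loop (without the query cap), which applies the Las Vegas-type Grover's search of Lemma~\ref{lem:grover_expected_find} repeatedly and, with probability $1$, eventually finds every edge in $\{(v,u)\in E(G)\mid v\in V(M)\setminus U,\ u\in V(G)\setminus U\}$ before it can no longer find a new edge. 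If $M^\dagger$ denotes the (random) set of edges this idealized process finds, then $|M^\dagger|\le \sum_{v\in V(M)}\deg_{G'}(v)\le (k+1)\cdot 2k$ since each vertex of $V(M)$ contributes at most $k+1$ edges before being moved into $U$ (after which no more edges incident to it are sought), and $|V(M)|\le 2k$. When searching for the $i$-th such edge, there are at least $|M^\dagger|-i+1$ undiscovered edges among at most $n^2$ vertex pairs, so by Lemma~\ref{lem:grover_expected_find} its expected cost is at most $8\sqrt{n^2/(|M^\dagger|-i+1)}$; summing gives expected total cost at most $8n\int_0^{2k(k+1)}x^{-1/2}\,dx = 16n\sqrt{2k(k+1)} \le 32 k\sqrt{(k+1)n}$ (using $k+1\le k\cdot\frac{k+1}{k}$ crudely, or just $16n\sqrt{2k(k+1)}\le 32k\sqrt{(k+1)n}$ since $\sqrt{2k}\le 2\sqrt{k}$). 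By Markov's inequality, truncating at $6$ times this bound, i.e.\ at $192k\sqrt{(k+1)n}$ queries, the loop finds all desired edges with probability at least $5/6$.

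Finally I would combine the failure probabilities: the threshold-maximal-matching subroutine fails with probability at most $1/6$ and the edge-discovery loop fails with probability at most $1/6$, so by the union bound the whole kernelization succeeds with probability at least $2/3$; conditioned on success, the subsequent classical computation is exact and uses no queries. This gives the stated bounded-error guarantee and completes the proof. One small point to handle carefully is that $U$ grows dynamically during the loop, so I would note that the set of edges the loop is charged with finding is monotone (once a vertex enters $U$ it stays, and we never need its incident edges afterward), which is what makes the potential-style counting $|M^\dagger|\le 2k(k+1)$ valid and lets the idealized-process analysis go through verbatim.
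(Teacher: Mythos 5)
Your overall structure — correctness via Observation~\ref{obs:vertex_cover_kernelization} and maximality of $M$, then an idealized non-halting version of the while loop analyzed with Lemma~\ref{lem:grover_expected_find} and truncated via Markov's inequality — is exactly the paper's route, and your correctness and error-combination arguments are fine. However, the step you yourself flagged as the principal obstacle contains a genuine error. You take the Grover search space to be all $n^2$ vertex pairs, getting a per-edge expected cost of $8\sqrt{n^2/(|M^\dagger|-i+1)}$ and a total of $16n\sqrt{2k(k+1)}$, and then assert $16n\sqrt{2k(k+1)}\le 32k\sqrt{(k+1)n}$. That inequality is false: it is equivalent to $n\sqrt{2k}\le 2k\sqrt{n}$, i.e.\ to $n\le 2k$, so it fails precisely in the regime of interest $k=o(n)$. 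Your justification ``$\sqrt{2k}\le 2\sqrt{k}$'' would only yield $32n\sqrt{k(k+1)}$, which has an $n\sqrt{k}$ factor where the target has $k\sqrt{n}$. Under your analysis the expected cost of the loop is $\Theta(nk)$, which exceeds the query cap $192k\sqrt{(k+1)n}=\Theta(k^{3/2}\sqrt{n})$, so the Markov truncation at that cap does not give success probability $5/6$ and the claimed complexity $O(\sqrt{k}n+k^{3/2}\sqrt{n})$ does not follow.

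The missing ingredient is that Line~\ref{line:quantum_vertex_cover_2} does not search over all of $\binom{V(G)}{2}$: it searches over pairs $(v,u)$ with $v\in V(M)\setminus U$ and $u\in V(G)\setminus U$, a domain of size at most $|V(M)|\cdot n\le 2kn$. With $N\le 2kn$ in Lemma~\ref{lem:grover_expected_find}, the $i$-th edge costs at most $8\sqrt{2kn/(T-i+1)}$ in expectation, and summing gives $16\sqrt{2knT}\le 16\sqrt{2kn\cdot 2k(k+1)}=32k\sqrt{(k+1)n}$, which is exactly one sixth of the budget $192k\sqrt{(k+1)n}$; Markov then yields the $5/6$ guarantee. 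Your bound $T=|M^\dagger|\le 2k(k+1)$ and the monotonicity remark about $U$ are correct and are indeed what the paper uses; only the size of the search domain needs to be corrected.
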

By combining this with the lower bound given in Theorem \ref{thm:main_theorem_matrix_model_lower_bound}, we observe that this upper bound is tight when $k = O(\sqrt{n})$.

For the upper bound of the query complexity in Theorem~\ref{thm:quantum_query_best_algorithm_vertex_cover}, we first show the following lemma.

\begin{lemma} \label{lem:vertex_cover_while_loop}
Suppose that \texttt{\textup{QuantumThresholdMaximalMatching}}$[G](k)$ in Line \ref{line:quantum_kernelization_1} finds a maximal matching $M$ of size at most $k$. 
Then, after the execution of the while loop,
the following properties hold with probability at least $5/6$.
\begin{itemize}
\item[(i)] For all $v \in V(M)$, if the degree of a vertex $v \in V(M)$ in $G$ is at least $k + 1$, then $v \in U$.
\item[(ii)] For all $v \in V(M)$, if the degree of a vertex $v \in V(M)$ in $G$ is at most $k$, then $(v, u) \in E'$ for all $(v, u) \in E(G)$. 
\end{itemize}
\end{lemma}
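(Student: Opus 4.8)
The plan is to show that, conditioned on the favorable event from Lemma 3.1 (that \texttt{QuantumThresholdMaximalMatching} returned a maximal matching $M$ of size at most $k$), the while loop in Algorithm~3 does not run out of its query budget $192 k\sqrt{(k+1)n}$ before it has either found all edges incident to each low-degree vertex of $V(M)$ or certified that a vertex has degree $> k$ by finding $k+1$ incident edges. Once the loop completes within budget, properties (i) and (ii) follow essentially by construction: the loop only terminates (other than by exhausting the budget) when the Las~Vegas Grover search fails to find an edge in $\{(v,u)\in E(G)\mid v\in V(M)\setminus U,\ u\in V(G)\setminus U\}\setminus E'$, which happens only when that set is empty; and the only way that set becomes empty while a low-degree $v$ still has an unrecorded incident edge is if $v$ has already been placed in $U$ — but $v$ is placed in $U$ only after $d_v$ exceeds $k$, contradicting $\deg_G(v)\le k$. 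So the core of the argument is the query-budget bound, and the rest is bookkeeping.

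First I would set up the analog of Algorithm~2 for this loop: the same process run without the query limit. As in the proof of Lemma~3.1, the Las~Vegas Grover search has finite expected query count whenever a target edge exists, so with probability $1$ this unbounded process eventually records enough edges that the target set is empty, at which point every $v\in V(M)$ either lies in $U$ or has all its incident edges in $E'$. Let $T$ be the (random) total number of edges recorded by this unbounded process. I claim $T \le 2k(k+1)$ deterministically: each vertex $v\in V(M)$ can have at most $\min\{\deg_G(v), k+1\}\le k+1$ edges recorded before it is either exhausted or placed into $U$ (once $d_v$ passes $k$ it is added to $U$ and no further edges with that endpoint are sought), and each recorded edge has at least one endpoint in $V(M)\setminus U\subseteq V(M)$, while $|V(M)|\le 2k$. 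Hence the number of distinct recorded edges is at most $2k\cdot(k+1)$.

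Now I would bound the expected total number of queries. Order the recorded edges $e_1,\dots,e_T$. When searching for $e_i$, there are at least $T-i+1$ edges remaining in the current target set, and the search space has size at most $\binom{n}{2}\le n^2$; by Lemma~2.11 the expected number of queries to find $e_i$ is at most $8\sqrt{n^2/(T-i+1)}$. Summing, the expected total query count is at most
\begin{align*}
\sum_{i=1}^{T} 8\sqrt{\frac{n^2}{T-i+1}} = 8n\sum_{j=1}^{T}\frac{1}{\sqrt{j}} \le 8n\int_0^{T}\frac{dx}{\sqrt{x}} = 16n\sqrt{T} \le 16n\sqrt{2k(k+1)} \le 32 k\sqrt{(k+1)n}.
\end{align*}
By Markov's inequality, the probability that the unbounded process uses more than $6\cdot 32 k\sqrt{(k+1)n} = 192 k\sqrt{(k+1)n}$ queries is at most $1/6$. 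Therefore, with probability at least $5/6$, the bounded while loop in Algorithm~3 behaves identically to the unbounded process — i.e., it terminates because the target set became empty, not because the budget was exhausted — and at that point (i) and (ii) hold by the bookkeeping argument above: any $v\in V(M)$ with $\deg_G(v)\ge k+1$ must have had $d_v$ reach $k+1$ and hence be in $U$, giving (i); and any $v\in V(M)$ with $\deg_G(v)\le k$ is never added to $U$, so it stays in the target-vertex side throughout, and emptiness of the final target set forces every incident edge $(v,u)$ to be in $E'$, giving (ii).

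The main obstacle I anticipate is making the deterministic bound $T\le 2k(k+1)$ fully rigorous against the subtlety that an edge $(v,u)$ with $v\in V(M)$ and $u\in V(M)$ is counted once but affects two degree counters, and that $u$ could be added to $U$ via $d_u$ while $v$ is still active; I would handle this by charging each recorded edge to the endpoint in $V(M)$ that was on the "search side" $V(M)\setminus U$ when it was found, noting each such endpoint is charged at most $k+1$ times before entering $U$. A second, minor point is the constant: I should double-check that the $8$ from Lemma~2.11, the factor $16$ from the integral, the $\sqrt{2}$ from $T\le 2k(k+1)$, and the Markov factor $6$ combine to at most the stated budget $192k\sqrt{(k+1)n}$ — the computation above shows $6\cdot 16\sqrt{2}\le 192$, so the budget is chosen exactly to make this work. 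Everything else is routine.
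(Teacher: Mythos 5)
Your overall strategy is exactly the paper's: run the loop without the query cap, bound the number $T$ of recorded edges by $2k(k+1)$, bound the expected total query count, and apply Markov's inequality with a factor of $6$; the bookkeeping for (i) and (ii) is also fine. However, there is a genuine quantitative error at the key step. You take the Grover search space to be all of $\binom{V}{2}$, of size at most $n^2$, which gives an expected cost of $8\sqrt{n^2/(T-i+1)}$ per edge and a total of $16n\sqrt{T}\le 16n\sqrt{2k(k+1)}=\Theta(kn)$. Your claimed inequality $16n\sqrt{2k(k+1)}\le 32k\sqrt{(k+1)n}$ is false whenever $n>2k$ (it reduces to $\sqrt{2n}\le 2\sqrt{k}$), so your bound does not fit within the budget $192k\sqrt{(k+1)n}$ and would only yield an overall complexity of $O(kn)$ rather than $O(\sqrt{k}n+k^{3/2}\sqrt{n})$. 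The final sanity check ``$6\cdot 16\sqrt{2}\le 192$'' compares constants on two quantities with different dependence on $n$ and $k$, which is where the slip hides.

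The fix, and what the paper actually does, is to use the fact that each Grover search in the while loop is restricted to the pair set $\{(v,u)\mid v\in V(M)\setminus U,\ u\in V(G)\}$, whose size is at most $|V(M)\setminus U|\cdot n\le 2kn$, not $n^2$. Lemma~\ref{lem:grover_expected_find} then gives an expected cost of at most $8\sqrt{2kn/(T-i+1)}$ for the $i$-th edge, and summing yields $16\sqrt{2knT}\le 16\sqrt{2kn\cdot 2k(k+1)}=32k\sqrt{(k+1)n}$, after which Markov's inequality with factor $6$ matches the budget $192k\sqrt{(k+1)n}$ exactly. This restricted search domain is precisely what buys the $k^{3/2}\sqrt{n}$ term in Theorem~\ref{thm:main_theorem_matrix_model}; with the full $n^2$ domain the kernelization step would dominate and the theorem would not follow.
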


\begin{proof}
In the same way as the proof of Lemma \ref{lem:find_k_restricted_maximal_matching}, we consider the while loop of the procedure \texttt{QuantumVertexCover} without the limit of queries, which is described as Algorithm \ref{alg:quantum_vertex_cover_run_forever}.
We will show that the expected total number of queries until both the properties (i) and (ii) hold is at most $32 k \sqrt{(k + 1) n}$.
Then, in Algorithm \ref{alg:quantum_vertex_cover_run_forever}, after the Grover's search algorithm used $192 k \sqrt{(k + 1) n}$ queries, both the properties (i) and (ii) hold with probability at least $5/6$.


\begin{algorithm}[t]
    $U \gets \emptyset$, $E' \gets \emptyset$ \\
    \For{$v \in V(M)$}{ $d_v \gets 0$ }
    \While{true} {
        Apply Las Vegas-type Grover's search algorithm to find an edge $(v,u) \in \{ (v, u) \in E(G) \mid v \in V(M) \setminus U \text{ and } u \in V(G) \setminus U \} \setminus E'$\\
        $E' \gets E' \cup \{ (v, u) \}$ \\
        $d_v \gets d_v + 1$ \\
        \If{$d_v > k$} {
            $U \gets U \cup \{v\}$\\
        }
        \If{$u \in V(M)$} {
            $d_u \gets d_u+1$ \\
            \If{$d_u > k$} {
                $U \gets U \cup \{u\}$\\
            }
        }
    }
    \caption{The while loop of the procedure \texttt{QuantumVertexCover}$[G](k)$ (Algorithm \ref{alg:quantum_vertex_cover}) without the limit of queries} \label{alg:quantum_vertex_cover_run_forever}
\end{algorithm}

Let $E'^{*}$ (resp. $U^*$) be a random variable corresponding to $E'$ (resp. $U$) found by the non-halting algorithm.
Let $T = |E'^{*}|$.
Then, for finding an $i$-th edge of $E'^{*}$, the Las Vegas-type Grover's algorithm has an expected quantum query complexity at most $8 \sqrt{\frac{2 k n}{T - i + 1}}$, since there exist at least $T - i + 1$ edges in the set $\{ (v, u) \in E(G) \mid v \in V(M) \setminus U \text{ and } u \in V(G) \} \setminus E'$.
Note that we have $|V(M) \setminus U| \leq 2 k$, and then the size of the vertex pair set $\{ (v, u) \mid v \in V(M) \setminus U \text{ and } u \in V(G) \}$ is at most $2 k n$.
Hence, the expected number of queries until both properties (i) and (ii) hold is at most
\begin{align*}
\sum_{i = 1}^{T} 8 \sqrt{\frac{2 k n}{T - i + 1}} \leq 8 \sqrt{2 k n} \int_{0}^{T} \frac{dx}{\sqrt{x}} = 16 \sqrt{2 k n T}.
\end{align*}
Here, $T \leq |V(M)| \cdot (k + 1) \leq 2 k (k + 1)$.
Thus, the expected number of queries until both the properties (i) and (ii) hold is at most $32 k \sqrt{(k + 1) n}$.
This completes the proof.
\end{proof}

Now we provide a proof of Theorem \ref{thm:quantum_query_best_algorithm_vertex_cover}.

\begin{proof}[Proof of Theorem \ref{thm:quantum_query_best_algorithm_vertex_cover}]
From Lemma~\ref{lem:vertex_cover_while_loop}, the while loop computes $U$ and $E'$ correctly with probability at least $5/6$.
The query complexities of \texttt{QuantumThresholdMaximalMatching}$[G](k)$ and the while loop are $O(\sqrt{k}n)$ and $O(k^{3/2}\sqrt{n})$, respectively.
Hence, the total number of queries is $O(\sqrt{k}n + k^{3/2}\sqrt{n})$.
\if0
By Lemmas \ref{lem:find_k_restricted_maximal_matching} and \ref{lem:vertex_cover_while_loop}, the total number of queries used in the procedure \texttt{QuantumVertexCover} is $O(\sqrt{k}n + k^{3/2}\sqrt{n})$.
Now, we show that \texttt{QuantumVertexCover} correctly determines whether $G$ contains a vertex cover of size at most $k$ with probability at least $2/3$.

We say that the procedure \texttt{QuantumVertexCover} fails if one of the following two conditions holds.
\begin{itemize}
\item The procedure \texttt{QuantumThresholdMaximalMatching} in Line \ref{line:quantum_kernelization_1} outputs a matching of size at most $k$, and the outputted matching is not maximal.
\item After the execution of the while loop in the procedure \texttt{\textup{QuantumVertexCover}},
the properties at least one of (i) and (ii) in Lemma \ref{lem:vertex_cover_while_loop} does not hold.
\end{itemize}
By Lemmas \ref{lem:find_k_restricted_maximal_matching} and \ref{lem:vertex_cover_while_loop}, 
the procedure \texttt{QuantumVertexCover} fails with probability at most $1/3$.
Therefore, we only need to show that if \texttt{QuantumVertexCover} does not fail, the procedure \texttt{QuantumVertexCover} determines whether $G$ contains a vertex cover of size at most $k$.

Now, we only consider the case where the procedure \texttt{QuantumVertexCover} does not fail.
First, we prove the following two invariants at the beginning of any iteration of the while loop.
\begin{itemize}
\item The instance $(G - U, k')$ is equivalent to the input instance $(G, k)$.
\item Every vertex in $U$ must be in every vertex cover of size at most $k$.
\end{itemize}
Obviously, the invariants hold true before the while loop.
Now, assume that the invariants hold true at the beginning of an iteration of the while loop.
By Observation \ref{obs:vertex_cover_kernelization}, 
if the degree of a vertex $v \in V(M)$ is more than $k$, then the instance $(G - U, k')$ is equivalent to the instance $(G - (U \cup \{v\}), k' - 1)$ and $v$ must be in every vertex cover of size at most $k$.
Note that we have $k \geq k'$.
Hence, the invariants remain true.

Next, we show that after the execution of the for loop in Line \ref{line:quantum_vertex_cover_4}, every vertex cover of the obtained subgraph $G'$ is a vertex cover of $G - U$.
Every edge in $G$ has at least one endpoint in $V(M)$ since $M$ is a maximal matching (see Figure \ref{fig:kernelization}).
This yields that after the execution of the while loop, we have $E(G - U) \subseteq E'$, where $E'$ is the obtained edge set.
Hence, after the execution of the for loop in Line \ref{line:quantum_vertex_cover_4}, we have $G' = G - U$,
which completes the proof of Theorem \ref{thm:quantum_query_best_algorithm_vertex_cover}.
\fi
\end{proof}

\section{Parameterized Quantum Query Algorithm for Matching} \label{sec:algorithm_for_matching}

In this section, we provide our quantum query algorithms for the {\sc $k$-matching} problem and the maximum matching problem.
The overview of the algorithm was given in Section~\ref{subsec:technical_overview_matching}.

\subsection{Quantum Query Algorithm for the \texorpdfstring{$k$}{TEXT}-Matching Problem} \label{subsec:quantum query_algo_for_k_matching}

In our algorithm, we first apply the procedure \texttt{QuantumThresholdMaximalMatching}$[G](k - 1)$ given in Section \ref{sec:algorithm_for_threshold_matching}.
This procedure outputs a maximal matching or a matching of size at least $k$ with $O(\sqrt{k}n)$ queries.
If this procedure outputs a matching of size at least $k$, then we conclude that $G$ has a matching of size at least $k$.
If the procedure outputs a maximal matching $M$, we repeatedly search for an $M$-augmenting path and update the current matching $M$ until its size is at least $k$ or there is no $M$-augmenting path, i.e., $M$ is the largest matching.
Note that the current matching $M$ is always maximal.

In the algorithm, we store all edges $F$ in $G[V(M)]$ as a bit string.
We initialize $F = \emptyset$.
For all $(v, u) \in \binom{V(M)}{2}$, we simply check whether $(v, u) \in E(G)$ by classical queries, and if it does, we add $(u, v)$ to $F$. 
Now, we have $G[V(M)] = (V(M), F)$.
After updating the current matching $M$ by an $M$-augmenting path $s, v_1, \ldots, v_l, t$ as $M\leftarrow M\bigtriangleup \{(s,v_1),(v_1,v_2),\dotsc,(v_l,t)\}$, we check whether $(s, u) \in E(G)$ and $(t, u) \in E(G)$ for all $u \in V(M) \setminus \{s, t \}$ by classical queries, and if it does, we add the edges to $F$. 
We repeat this procedure until $M$ has size $k$ or there is no $M$-augmenting path.
This is the abstract framework of the algorithm.

The validity of the algorithm is obvious.
We will analyze the query complexity of this algorithm.
The query complexity for finding a maximal matching $M$ is $O(\sqrt{k}n)$.
The query complexity for finding all edges in $G[V(M)]$ is $O(k^2)$
since $|V(M)|\le 2k$.

For reducing the query complexity for the augmenting part, we maintain types of vertices in $V(M)$.
For each $v \in V(M)$, let $N_M(v)$ denote $\{u \in V(G) \setminus V(M) \mid (v, u) \in E(G) \}$.
In our algorithm, we classify each vertex $v$ in $V(M)$ into the following three {\em types}:


\begin{itemize}
 \item \texttt{type0} \quad We have determined that $N_M(v) = \emptyset$.
 \item \texttt{type1} \quad We have determined that $|N_M(v)| = 1$.
  \item \texttt{type2} \quad No information about $N_M(v)$.
\end{itemize}

At the beginning of the augmenting part, all vertices in $V(M)$ are \texttt{type2}.
In the repetitions, the types of vertices are updated from \texttt{type2} to \texttt{type0} or \texttt{type1}.
For vertices of \texttt{type1}, the unique neighborhood in $V(G)\setminus V(M)$ is stored.
For $v \in V(M)$, let $\texttt{type}(v)$ denote the type of $v$, and for $v\in V(M)$ of \texttt{type1}, $\texttt{memo}(v)$ denote the unique neighborhood of $v$ in $V(G)\setminus V(M)$.

Recall that a path $P = s, v_1, \ldots, v_l, t$ is an $M$-augmenting path if $P$ has odd length, $M$ does not touch both endpoints of $P$, and its edges are alternatingly out of and in $M$.
Here, we have $v_i \in V(M)$ for all $i \in [l]$ and $s, t \in V(G) \setminus V(M)$.
Since we store all edges $F$ in $G[V(M)]$ as a bit string, we can find a path $v_1, \ldots, v_l$ that is part of an $M$-augmenting path $P$.
Such a path is called a {\em candidate path}, which is formally defined as follows.

\begin{definition} \label{def:candidate_path}
A path $v_1, \ldots, v_l$ is said to be a \emph{candidate path} 
if 
\begin{itemize}
 \item $l$ is even,
 \item $v_i \in V(M)$ for all $i \in [l]$,
 \item $(v_i, v_{i + 1}) \in M$ for all $i \in [l - 1]$ such that $i$ is odd,
 \item $(v_i, v_{i + 1}) \in E(G) \setminus M$ for all $i \in [l - 1]$ such that $i$ is even,
 \item neither $v_1$ nor $v_l$ is \textup{\texttt{type0}}, and
 \item if both $v_1$ and $v_l$ are \textup{\texttt{type1}}, then $N_M(v_1) \neq N_M(v_l)$.
\end{itemize}
\end{definition}

\begin{figure}[t]
\centering
\includegraphics[width=6cm]{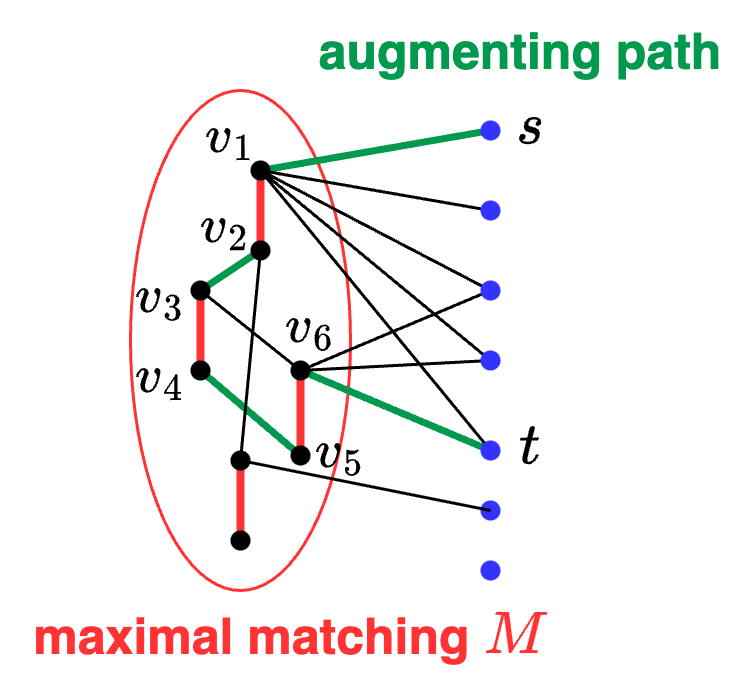}
\caption{Example of an $M$-augmenting path and a candidate path}
\label{fig:candidate_path}
\end{figure}

We give an example of an $M$-augmenting path and a candidate path in Figure \ref{fig:candidate_path}.
In this figure, there are an $M$-augmenting path $s, v_1, \ldots, v_6, t$ and a candidate path $v_1, \ldots, v_6$.
It is obvious that for any $M$-augmenting path $P$, the path $Q$ obtained by deleting the vertices at the both ends of $P$ is a candidate path.

In each step of the augmenting part, we have a maximal matching $M$, the induced subgraph $G[V(M)]$ and the type of vertices in $V(M)$.
Then, we search for candidate paths.
If there is no candidate path, we can conclude that the maximal matching $M$ is a maximum matching since there is no $M$-augmenting path.
If there exists a candidate path $Q$, we check whether $Q$ can be extended to an $M$-augmenting path.


\begin{lemma} \label{lem:one_iteration_algo}
Given a graph $G$ with n vertices in the adjacency matrix model, $\epsilon>0$, a matching $M$, a candidate path $Q$, \textup{\texttt{type}}, and \textup{\texttt{memo}},
the procedure \textup{\texttt{ExtendCandidatePath}}$[G](\epsilon, M, Q, \textup{\texttt{type}}, \textup{\texttt{memo}})$ in Algorithm~\ref{alg:update_matching_or_types} satisfies the following conditions:
\begin{itemize}
\item The algorithm updates the type of a vertex in $V(M)$ of \textup{\texttt{type2}} to \textup{\texttt{type0}} or \textup{\texttt{type1}}, or finds an $M$-augmenting path.
\item The query complexity is $O(\sqrt{n \log (1/\epsilon)})$.
\item The error probability is $\epsilon$.
\end{itemize}
\end{lemma}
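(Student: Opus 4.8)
The plan is to present \texttt{ExtendCandidatePath}$[G](\epsilon,M,Q,\texttt{type},\texttt{memo})$ as a short case analysis on the types of the two endpoints $v_1,v_l$ of the candidate path $Q=v_1,\dots,v_l$. By Definition~\ref{def:candidate_path} neither endpoint is \texttt{type0}, so only three cases arise, and in each the procedure either outputs an $M$-augmenting path obtained by attaching one $M$-unsaturated vertex to each end of $Q$, or changes one \texttt{type2} endpoint to \texttt{type0}/\texttt{type1}. \emph{Case 1: both $v_1$ and $v_l$ are \texttt{type1}.} Here $\texttt{memo}(v_1),\texttt{memo}(v_l)$ are defined, and the last condition of Definition~\ref{def:candidate_path} together with $N_M(v_i)=\{\texttt{memo}(v_i)\}$ gives $\texttt{memo}(v_1)\neq\texttt{memo}(v_l)$, so I would output $P=\texttt{memo}(v_1),v_1,\dots,v_l,\texttt{memo}(v_l)$ with no queries. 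Verifying $P$ is genuinely $M$-augmenting is routine: the internal edges alternate out of/in $M$ because $Q$ is a candidate path (which starts and ends with an $M$-edge), the two attached end-edges lie outside $M$ since their new endpoints are $M$-unsaturated, all vertices are distinct (the new endpoints lie outside $V(M)$ and differ), and $|E(P)|=l+1$ is odd.

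\emph{Cases 2 and 3: at least one endpoint, say $v_1$, is \texttt{type2}.} I would first run Grover's search (Lemmas~\ref{lem:grover_expected_find} and~\ref{lem:grover_stop_approx}) with error parameter $\epsilon/3$ and $O(\sqrt{n\log(1/\epsilon)})$ queries for a neighbor of $v_1$ in the set $S_1:=V(G)\setminus V(M)$ when $v_l$ is \texttt{type2}, or in $S_1:=V(G)\setminus V(M)\setminus\{\texttt{memo}(v_l)\}$ when $v_l$ is \texttt{type1}; if a candidate $u$ is returned, I confirm $(v_1,u)\in E(G)$ with one direct query. If no confirmed neighbor is found, then $N_M(v_1)$ is contained in $(V(G)\setminus V(M))\setminus S_1$, which is empty when $v_l$ is \texttt{type2} and equals $\{\texttt{memo}(v_l)\}$ when $v_l$ is \texttt{type1}; in the latter case one extra direct query on the pair $(v_1,\texttt{memo}(v_l))$ decides whether $v_1$ becomes \texttt{type0} or \texttt{type1} with the appropriate \texttt{memo}, and the procedure returns. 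If instead a confirmed neighbor $s\in S_1$ of $v_1$ is found and $v_l$ is \texttt{type1}, I output $s,v_1,\dots,v_l,\texttt{memo}(v_l)$ (its endpoints are distinct because $S_1$ excludes $\texttt{memo}(v_l)$). If $s$ is found and $v_l$ is \texttt{type2}, I run a second Grover's search, again with error $\epsilon/3$ and $O(\sqrt{n\log(1/\epsilon)})$ queries plus one confirmation query, for a neighbor of $v_l$ in $V(G)\setminus V(M)\setminus\{s\}$: if some $t$ is found, output $s,v_1,\dots,v_l,t$; otherwise $N_M(v_l)\subseteq\{s\}$, so one direct query on $(v_l,s)$ decides whether $v_l$ becomes \texttt{type0} or \texttt{type1}.

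Correctness follows by conditioning on all (at most two) Grover searches answering correctly: then every ``no neighbor found'' conclusion is sound, so each type/memo update faithfully records $N_M(\cdot)$ and the updated vertex was \texttt{type2} on entry, and every path the procedure outputs is $M$-augmenting by the verification used in Case~1. The query bound is immediate since we run at most two Grover searches of cost $O(\sqrt{n\log(1/\epsilon)})$ and $O(1)$ exact queries, and the error probability is at most the union bound $2\cdot(\epsilon/3)\le\epsilon$ over the Grover searches, the extra classical queries being exact.

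I expect the only delicate point to be the ``no neighbor found'' branch: because the search deliberately omits the one vertex $\texttt{memo}(v_l)$ already known to be a neighbor of $v_l$, a negative answer does not by itself certify $N_M(v_1)=\emptyset$, and it is precisely the single extra classical query on that omitted pair that lets us distinguish \texttt{type0} from \texttt{type1}. A closely related subtlety is restricting the second search to $V(G)\setminus V(M)\setminus\{s\}$ so that the two attached endpoints are guaranteed distinct, which mirrors why the condition $N_M(v_1)\neq N_M(v_l)$ in Definition~\ref{def:candidate_path} is exactly what Case~1 requires; everything else is bookkeeping.
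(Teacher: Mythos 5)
Your proposal is correct and follows essentially the same route as the paper's proof: the same three-way case analysis on the endpoint types, the same restricted Grover searches (excluding the already-known or just-found outside neighbor so the two attached endpoints are distinct), the same single classical query on the omitted pair to settle \texttt{type0} versus \texttt{type1}, and the same union bound over at most two searches for the error and query bounds. The only cosmetic differences are that you merge the paper's Cases 2 and 3 into one symmetric description and add an explicit confirmation query after each Grover call, neither of which changes the argument.
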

\begin{proof}

We denote by $Q = v_1, v_2, \dots , v_{l - 1}, v_l$.
Then, we can check whether $Q$ can be extended to an $M$-augmenting path by an appropriate application of Grover's search according to the types of $v_1$ and $v_l$.

\begin{itemize}
\item Case 1: Both $v_1$ and $v_l$ are \textup{\texttt{type1}}.\\
Let $N_M(v_1) = \{s \}$ and $N_M(v_l) = \{ t \}$.
By the definition of a candidate path, we have $s \neq t$.
Thus, a path $s, v_1, \ldots , v_l, t$ is an $M$-augmenting path.

\item Case 2: Both $v_1$ and $v_l$ are \textup{\texttt{type2}}.\\
We apply Grover's search algorithm, given in Lemma~\ref{lem:grover_stop_approx}, that requires $O(\sqrt{n \log (1/\epsilon)})$ queries to find a vertex $s$ in $V(G) \setminus V(M)$ that is adjacent to $v_1$.
If we  do not find a such vertex $s$, then we update the type of $v_1$ to \texttt{type0}.
Otherwise, we apply Grover's search algorithm that requires $O(\sqrt{n \log (1/\epsilon)})$ queries to find a vertex $t$ in $V(G) \setminus (V(M) \cup \{ s \})$ that is adjacent to $v_l$.
If we find such a vertex $t$, a path $s, v_1, \ldots , v_l, t$ is an $M$-augmenting path.
Otherwise, we check whether $(v_l, s) \in E(G)$.
Then, we update the type of $v_l$ to \texttt{type1} and let $\mathtt{memo}(v_l)=s$ if $(v_l, s) \in E(G)$, and update the type of $v_l$ to \texttt{type0} otherwise.

\item Case 3: The other case. \\
By the definition of a candidate path, one of $v_1$ and $v_l$ is \texttt{type1}, and the other is \texttt{type2}.
Without loss of generality, we assume $v_1$ is \texttt{type1} and $v_l$ is \texttt{type2}.
Let $N_M(v_1) = \{s \}$.
We apply Grover's search algorithm that requires $O(\sqrt{n \log (1/\epsilon)})$ queries to find a vertex $t$ in $V(G) \setminus (V(M) \cup \{ s \})$ that is adjacent to $v_l$.
If we find such a vertex $t$, a path $s, v_1, \ldots , v_l, t$ is an $M$-augmenting path.
Otherwise, we check whether $(v_l, s) \in E(G)$.
Then, we update the type of $v_l$ to \texttt{type1} and let $\mathtt{memo}(v_l)=s$ if $(v_l, s) \in E(G)$, and update the type of $v_l$ to \texttt{type0} otherwise.
\end{itemize}

By Lemma \ref{lem:grover_stop_approx}, the error probability of the procedure \textup{\texttt{ExtendCandidatePath}} is $O(\epsilon)$, which completes the proof.
\end{proof}
\begin{algorithm}[H]
    \KwInput{Oracle access to a graph $G$, $\epsilon>0$, a matching $M$, a candidate path $Q$, \texttt{type}, and \texttt{memo}.}
    \KwOutput{An $M$-augmenting path $P$ if $Q$ can be extended to $P$ or ``TYPE UPDATED" otherwise.}
    \tcp{Denote by $Q = v_1, v_2, \dots , v_{l - 1}, v_l$}
    \If{both $v_1$ and $v_l$ are \textup{\texttt{type1}}} {
        $s \gets \texttt{memo}(v_1)$, $t \gets \texttt{memo}(v_l)$ \\
        \Return the $M$-augmenting path $P = s, v_1, \ldots, v_l, t$
    } \ElseIf {both $v_1$ and $v_l$ are \textup{\texttt{type2}}} {
        Apply Grover's search algorithm that uses $O(\sqrt{n \log (1/\epsilon)})$ queries to find a vertex $s$ in $V(G) \setminus V(M)$ that is adjacent to $v_1$ \\
        \If{there does not exist such a vertex $s$} {
            \texttt{type}$(v_1) \gets 0$ \\
            \Return "TYPE UPDATED"
        }
        Apply Grover's search algorithm that uses $O(\sqrt{n \log (1/\epsilon)})$ queries to find a vertex $t$ in $V(G) \setminus (V(M) \cup \{ s \})$ that is adjacent to $v_l$ \\
        \If{there does not exist such a vertex $t$} {
            \lIf{$(v_l, s) \in E(G)$ is confirmed by a query}{
                \texttt{type}$(v_l) \gets 1$, \texttt{memo}$(v_l) \gets s$
            }
            \lElse{
                \texttt{type}$(v_l) \gets 0$
            }
            \Return "TYPE UPDATED"
        }
        \Return the $M$-augmenting path $P = s, v_1, \ldots, v_l, t$
    } \Else {
        Without loss of generality, we assume $v_1$ is \texttt{type1} and $v_l$ is \texttt{type2} \\
        $s \gets \texttt{memo}(v_1)$ \\
        Apply Grover's search algorithm that uses $O(\sqrt{n \log (1/\epsilon)})$ queries to find a vertex $t$ in $V(G) \setminus (V(M) \cup \{ s \})$ that is adjacent to $v_l$ \\
        \If{there does not exist such a vertex $t$} {
            \lIf{$(v_l, s) \in E(G)$ is confirmed by a query}{
                \texttt{type}$(v_l) \gets 1$, \texttt{memo}$(v_l) \gets s$
            }
            \lElse{
                \texttt{type}$(v_l) \gets 0$
            }
            \Return "TYPE UPDATED"
        }
        \Return the $M$-augmenting path $P = s, v_1, \ldots, v_l, t$
    }
    \caption{\texttt{ExtendCandidatePath}$[G](\epsilon, M, Q, \texttt{type}, \texttt{memo})$}\label{alg:update_matching_or_types}
\end{algorithm}

The quantum algorithm for the {\sc $k$-matching} problem is described as \texttt{Quantum $k$-Matching}$[G](k)$ in Algorithm~\ref{alg:quantum_matching}.
We show the following theorem, which implies Theorem \ref{thm:main_theorem_parameterized_matching_matrix_model}.

\begin{theorem} \label{thm:upper_bound_of_k_matching}
Given a graph $G$ with n vertices in the adjacency matrix model and an integer $k$, with probability at least $2/3$, the procedure \textup{\texttt{Quantum $k$-Matching}}$[G](k)$ in Algorithm~\ref{alg:quantum_matching} finds a matching of size at least $k$ or 
 otherwise determines that there does not exist such a matching.
The query complexity of this procedure is $O(\sqrt{k}n + k^{2})$.
\end{theorem}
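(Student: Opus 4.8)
The plan is to analyze the three phases of \texttt{Quantum $k$-Matching}$[G](k)$ separately and then combine them: (1) the call to \texttt{QuantumThresholdMaximalMatching}$[G](k-1)$; (2) the initial classical collection of all edges of $G[V(M)]$ into the bit string $F$; and (3) the augmenting loop driven by \texttt{ExtendCandidatePath}. For each phase I track both its contribution to the query count and the probability that it behaves as intended, and then take a union bound over the failure events.

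First I would establish correctness under the assumption that every subroutine behaves correctly. By Lemma~\ref{lem:find_k_restricted_maximal_matching}, phase (1) either returns a matching of size $\ge k$, in which case YES is correct, or a maximal matching $M$ with $|M| \le k-1$. In the latter case I claim two invariants hold throughout the loop. The first is that $M$ stays a maximal matching of $G$: augmenting along an $M$-augmenting path $s, v_1, \dots, v_l, t$ yields a matching covering $V(M) \cup \{s, t\}$, and since $M$ was maximal $V(G) \setminus V(M)$ was independent, hence so is the smaller set $V(G) \setminus (V(M) \cup \{s, t\})$. The second is that the stored graph $G[V(M)]$ and all type labels remain correct: Lemma~\ref{lem:one_iteration_algo} guarantees each label update performed by \texttt{ExtendCandidatePath} is correct, and after an augmentation the algorithm requeries every new pair in $V(M)$ incident to $s$ or $t$ and relabels from \texttt{type1} to \texttt{type0} every vertex whose stored neighbor is $s$ or $t$ (such a vertex, having had its unique neighbor outside $V(M)$ absorbed into $V(M)$, now has none), while the two new vertices $s,t$ are \texttt{type0} since all of their neighbors lay in the old $V(M)$. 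Given these invariants, the loop can only halt with $|M| = k$ (correct YES) or with no candidate path; in the latter case there is no $M$-augmenting path, since deleting the two ends of any such path yields a candidate path, so $M$ is a maximum matching with $|M| < k$ and NO is correct.

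Next I would bound the number of iterations by the potential $\Phi = k - |M| + N_2$, where $N_2$ is the number of \texttt{type2} vertices. Since $N_2 \le |V(M)| \le 2|M|$ and $|M| \le k$, we have $0 \le \Phi \le 2k$; each iteration either finds an augmenting path, raising $|M|$ by one while leaving $N_2$ unchanged (the two new vertices are \texttt{type0} and the only further relabelings are \texttt{type1} to \texttt{type0}), or relabels one \texttt{type2} vertex, lowering $N_2$ by one. Thus $\Phi$ strictly decreases at every iteration, even when \texttt{ExtendCandidatePath} errs (an erroneous relabeling still removes a \texttt{type2} vertex), so there are at most $2k$ iterations and the query bound below is deterministic. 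Phase (1) costs $O(\sqrt{k}\,n)$ queries; phase (2) costs $O(|V(M)|^2) = O(k^2)$ classical queries; and each of the $\le 2k$ iterations costs $O(\sqrt{n\log(1/\epsilon)})$ queries inside \texttt{ExtendCandidatePath} plus $O(|V(M)|) = O(k)$ classical queries to update $F$ after an augmentation. Taking $\epsilon = \Theta(1/k)$ gives $\log(1/\epsilon) = O(\log n)$ since $k$ is polynomially bounded in $n$, so the loop costs $O(k\sqrt{n\log n} + k^2)$; an elementary case split on whether $k \le n/\log^2 n$ shows $k\sqrt{n\log n} = O(\sqrt{k}\,n + k^2)$, giving the claimed total $O(\sqrt{k}\,n + k^2)$.

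Finally, for the error probability, phase (1) fails with probability at most $1/6$ by Lemma~\ref{lem:find_k_restricted_maximal_matching}, and with $\epsilon = 1/(12k)$ a union bound over the at most $2k$ invocations of \texttt{ExtendCandidatePath} bounds by $1/6$ the probability that any of them errs; conditioning on neither event, the correctness argument above applies, so \texttt{Quantum $k$-Matching} succeeds with probability at least $2/3$. I expect the main obstacle to be the bookkeeping behind the second invariant, namely checking that every type label and every stored edge remains correct across an augmentation and that the potential still decreases in the presence of errors, rather than any single hard estimate; the absorption $k\sqrt{n\log n} = O(\sqrt{k}\,n + k^2)$ is the only mildly non-obvious arithmetic.
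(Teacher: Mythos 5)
Your proposal is correct and follows essentially the same route as the paper: the same three-phase decomposition, the same potential $\Phi = k - |M| + N_2$ bounding the loop at $2k$ iterations, and the same $1/6 + 1/6$ error budget with $\epsilon = 1/(12k)$. You supply somewhat more detail than the paper on the correctness invariants (which the paper treats as obvious) and on the absorption $k\sqrt{n\log k} = o(\sqrt{k}\,n + k^2)$, but these are refinements rather than a different argument.
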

\begin{proof}
In the procedure \texttt{Quantum $k$-Matching}, we first apply the procedure \\ \texttt{QuantumThresholdMaximalMatching}$[G](k - 1)$, which, by Lemma \ref{lem:find_k_restricted_maximal_matching}, uses $O(\sqrt{k} n)$ queries.

We  need to store all edges $F$ in $G[V(M)]$ as a bit string.
In the whole run of \texttt{Quantum $k$-Matching}, we require $O(k^2)$ queries to obtain all edges in $G[V(M)]$ in Lines 7 and 22 of Algorithm~\ref{alg:quantum_matching}.
This is because the number of all pairs $(v, u) \in \binom{V(M)}{2}$ is $O(k^2)$ since $|V(M)| \leq 2k$.

Next, we repeatedly apply the procedure \texttt{ExtendCandidatePath} until there is no $M$-augmenting path or we have $|M| \geq k$.
By Lemmas \ref{lem:one_iteration_algo}, the query complexity of a single call of \texttt{ExtendCandidatePath} is $O(\sqrt{n \log k})$.
We will show that the number of repetitions is $O(k)$.
We define the potential function
\begin{align*}
\Phi &:= k-|M| + N_2
\end{align*}
where $N_2$ is the number of vertices in $V(M)$ of \texttt{type2}.
Then, $0\le\Phi\le k-|M|+2|M|\le k+|M|\le 2k$.
By a single call of \texttt{ExtendCandidatePath}, the value of the potential function decreases by one.
Hence, the number of calls of \texttt{ExtendCandidatePath} is at most $2k$.
Hence, the total query complexities due to \texttt{ExtendCandidatePath} is $O(k\sqrt{n\log k})$

Note that $k \sqrt{n \log (k)} = o(\sqrt{k}n + k^2)$.
Therefore, the query complexity of \texttt{Quantum $k$-Matching} is $O(\sqrt{k}n + k^2)$.

Now, it remains to bound the error probability of \texttt{Quantum $k$-Matching}.
By Lemma \ref{lem:find_k_restricted_maximal_matching}, the error probability of \texttt{QuantumThresholdMaximalMatching} is at most $1/6$.
By Lemma \ref{lem:one_iteration_algo}, the error probability of one execution of \texttt{ExtendCandidatePath} is $1/(12k)$.
The number of executions of \texttt{ExtendCandidatePath} is at most $2k$.
Then, by taking the union bound over the whole run of \texttt{Quantum $k$-Matching}, the error probability by \texttt{ExtendCandidatePath} is at most $1/6$.
Therefore, the procedure \texttt{Quantum $k$-Matching} fails with probability at most $1/3$, which completes the proof.
\end{proof}

\begin{algorithm}[H]
    \KwInput{Oracle access to a graph $G$, and an integer $k$.}
    \KwOutput{Find a matching of size at least $k$, or conclude that $G$ does not have a matching of size $k$.}
    $M \gets$ \texttt{QuantumThresholdMaximalMatching}$[G](k - 1)$ \label{line:quantum_matching_0} \\
    \If{$|M| \geq k$} {
        \Return {$M$}
    }
    $F \gets \emptyset$ \\
    \For{$v \in V(M)$} {
        \For{$u \in V(M)$} {
            \If{$(v, u) \in E(G)$ is confirmed by a query} { \label{line:quantum_maching_1}
                $F \gets F \cup \{(v, u)\}$ \\
            }
        }
    }
    For all $v \in V(M)$ let $\texttt{type}(v) \gets 2$ \\
    \While {$|M| < k$} { \label{line:qunatum_matching_while_loop}
        $Q\gets$ a candidate path computed by a classical algorithm\\
        \If{$Q$ does not exist} {
            \Return ``$G$ does not have a matching of size $k$''.
        }
        $P \gets$ \texttt{ExtendCandidatePath}$[G](1/(12k), M, Q, \texttt{type}, \texttt{memo})$ (Algorithm \ref{alg:update_matching_or_types}) \\
        \If{$P =$ \textup{"TYPE UPDATED"}} {
            \Continue
        }
        $M \gets M \bigtriangleup E(P)$ \label{line:quantum_matching_6} \\
        $s,\, t\gets \text{The vertices at the ends of }P$\\
        $\texttt{type}(s) \gets 0$, $\texttt{type}(t) \gets 0$ \\
        \For{$v \in \{s, t \}$} {
            \For{$u \in V(M) \setminus \{s, t\}$} { \label{line:quantum_matching_update_start}
                \If{$(v, u) \in E(G)$ is confirmed by a query} { \label{line:quantum_maching_3}
                    $F \gets F \cup \{(v, u)\}$ \\
                }
            }
        }
        \For{$v \in V(M)$ with \textup{\texttt{type}}$(v) = 1$} {
            \If{\textup{\texttt{memo}}$(v) = s$ or $t$ } {
                \texttt{type}$(v) \gets 0$ \\
            }
        }
    }
    \Return {$M$}
    \caption{\texttt{Quantum $k$-Matching}$[G](k)$}\label{alg:quantum_matching}
\end{algorithm}

\if0
In our algorithm, we repeatedly apply the procedure \texttt{ExtendCandidatePath} until there is no $M$-augmenting path or we have $|M| \geq k$.

\begin{lemma} \label{lem:the_number_of_iterations}
The number of \textup{\texttt{ExtendCandidatePath}} calls in our algorithm is $O(k)$.
\end{lemma}

\begin{proof}
We define the {\em cost} of a vertex $v \in V(M)$

\begin{align*}
c(v) = \left\{
\begin{array}{ll}
0  \quad\text{if $v$ is \textup{\texttt{type0}} or \textup{\texttt{type1}},}\\
1  \quad\text{if $v$ is \textup{\texttt{type2}}.}\
\end{array}
\right.
\end{align*}

Consider the following two potential functions:

\begin{equation*}
 \Phi_{\text{matching}} := k - |M|, \quad \Phi_{\text{types}} := \sum_{v \in V(M)} c(v).
\end{equation*}

Here, we have $0 \leq \Phi_{\text{matching}} + \Phi_{\text{types}} \leq k - |M| + 2 |M| \leq 2 k$.

Each execution of \texttt{ExtendCandidatePath} strictly decreases $\Phi_{\text{matching}} + \Phi_{\text{types}}$.
Except when \texttt{ExtendCandidatePath} concludes that there is no $M$-augmenting path, it decreases $\Phi_{\text{matching}} + \Phi_{\text{types}}$ by $1$.
This is because, by Lemma \ref{lem:one_iteration_algo}, in this case, \texttt{ExtendCandidatePath} updates the type of a vertex from \texttt{type2} to \texttt{type0} or \texttt{type1}, or finds an $M$-augmenting path.

In the case where \texttt{ExtendCandidatePath} updates the type of a vertex from \texttt{type2} to \texttt{type0} or \texttt{type1}, $\Phi_{\text{types}}$ obviously decreases by $1$.
In the case where \texttt{ExtendCandidatePath} finds an $M$-augmenting path, we can update the current matching $M$.
Then, $\Phi_{\text{matching}}$ decreases by $1$.
After updating the current matching $M$ by an $M$-augmenting path $s, v_1, \ldots, v_l, t$, both vertices $s$ and $t$ become \texttt{type0}.
This is because there is no edge between $s$ and $u \in V(G) \setminus V(M)$ since $M$ is maximal.
The same holds for $t$.
Thus, after updating the current matching $M$ by an $M$-augmenting path, the value of $\Phi_{\text{types}}$ does not change.
As such, $\Phi_{\text{matching}} + \Phi_{\text{types}}$ decreases by $1$ after each execution of \texttt{ExtendCandidatePath}.

Therefore, the number of \textup{\texttt{ExtendCandidatePath}} calls in our algorithm is bounded by $2 k$, which completes the proof.
\end{proof}
\fi

\subsection{Quantum Query Algorithm for the Maximum Matching Problem} \label{subsec:quantum query_algo_for_max_matching}

In this subsection, we provide our quantum query algorithm for the maximum matching problem.
Let $p$ denote the size of the maximum matching in the input graph $G$.
To find a maximum matching of size $p$, we set $k = 2, 4, \ldots, 2^{\lceil \log p \rceil}$ and apply the algorithm for the {\sc  $k$-matching} problem.
Recall that the procedure \texttt{Quantum $k$-Matching} (Algorithm \ref{alg:quantum_matching}) finds a matching of size at least $k$ or concludes that there is no such a matching.
Furthermore, when this procedure concludes that there is no such a matching, it finds a maximum matching of size at most $k - 1$.
In this method, we need to use $O((\sqrt{p} n + p^2) \cdot \log(\log(p)))$ queries.
This upper bound contains a factor polynomial in $\log n$ to bound the error probability.
To obtain an upper bound without a factor polynomial in $\log n$, we use the algorithm given in Lemma \ref{lem:quantum_maximal_matching_via_decision_tree_method} (see Section \ref{appendix:desicion_tree_method} for this lemma).
This algorithm can find a maximum matching with $O(\sqrt{p}n)$ queries without knowing the value of $p$ ahead.

\begin{proof}[Proof of Theorem \ref{thm:main_theorem_matching_matrix_model}]
By Lemma \ref{lem:quantum_maximal_matching_via_decision_tree_method}, we can find a maximal matching with $O(\sqrt{p}n)$ queries.
Here, let $\bar{p}$ denote the size of the obtained maximal matching.
We have $p \leq 2 \bar{p}$, since any maximal matching is at least half the size $1/2$ of a maximum matching.
Then, we apply the procedure \texttt{Quantum $k$-Matching} for $k = 2\bar{p}$, and we can find a maximum matching.
By Theorem \ref{thm:upper_bound_of_k_matching}, the total number of queries is $O(\sqrt{p}n + p^2)$, which completes the proof.
\end{proof}

\subsection{Quantum Query Algorithm for Maximal Matching via Decision Tree Method} \label{appendix:desicion_tree_method}


To prove Theorem \ref{thm:main_theorem_matching_matrix_model}, we need to show the following lemma.
Let $p$ be the size of the maximum matching in the input graph $G$.
Note that D{\"o}rn~\cite{dorn2009quantum} and Beigi--Taghavi--Tajdini~\cite{beigi2022time} show that the quantum query complexity of the maximal matching problem is $O(n^{3/2})$.

\begin{lemma} \label{lem:quantum_maximal_matching_via_decision_tree_method}
There is a bounded error quantum algorithm that finds a maximal matching with $O(\sqrt{p} n)$ queries without knowing the value of $p$ ahead.
\end{lemma}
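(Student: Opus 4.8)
The plan is to adapt the "decision-tree" / guessing-the-right-threshold trick that has become standard for turning a parameter-aware quantum algorithm into a parameter-oblivious one, the same idea used by D{\"u}rr--Heiligman--H{\o}yer--Mhalla for connectivity. First I would recall that the procedure \texttt{QuantumThresholdMaximalMatching}$[G](k)$ from Section~\ref{sec:algorithm_for_threshold_matching} runs in $O(\sqrt{k}\,n)$ queries and, with probability at least $5/6$, returns either a maximal matching or a matching of size at least $k+1$. The subtlety is that we do not know $p$ (equivalently, we do not know how large a maximal matching can get), so we cannot simply plug in the right value of $k$. Instead I would run the threshold procedure with a doubling sequence of guesses $k = 1, 2, 4, 8, \dots$. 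For each guess $k = 2^i$ we run \texttt{QuantumThresholdMaximalMatching}$[G](2^i)$, boosting its success probability to, say, $1 - \delta_i$ for a suitably summable sequence $\delta_i$ (e.g.\ $\delta_i = \Theta(1/i^2)$ or even a fixed small constant amplified by $O(\log i)$ repetitions) by standard amplification; this only costs an extra $\mathrm{polylog}$ factor which I will need to be careful to absorb.

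The key observation driving termination is monotonicity in $k$: if the procedure returns a maximal matching $M$ when called with parameter $k$, and $|M| \le k$, then $M$ is genuinely maximal in $G$, so we are done; conversely, if it returns a matching of size $\ge k+1$ we have learned that $p \ge k+1$ and must increase the guess. So the algorithm is: for $i = 0, 1, 2, \dots$, call the (amplified) threshold procedure with $k = 2^i$; if it reports a maximal matching, output it and halt; otherwise continue to $i+1$. Since the true maximal matching produced by a successful run has size at most $p$, the loop halts at the first $i$ with $2^i \ge p$, i.e.\ after $O(\log p)$ iterations. The total query count is then $\sum_{i=0}^{\lceil \log p\rceil} O(\sqrt{2^i}\,n) \cdot (\text{amplification factor})$; the geometric sum over $\sqrt{2^i}$ is dominated by its last term $O(\sqrt{p})$, giving $O(\sqrt{p}\,n)$ up to the amplification overhead.

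The main obstacle — and the one place the argument needs genuine care rather than bookkeeping — is making the query bound come out to $O(\sqrt{p}\,n)$ with \emph{no} extra $\mathrm{polylog}(n)$ or $\mathrm{polylog}(p)$ factor, which is the whole point of the lemma (the naive "amplify each of the $O(\log p)$ calls to failure probability $1/\mathrm{poly}$" approach loses a $\log$). I would handle this by not amplifying uniformly: give the $i$-th call a failure probability budget $\delta_i$ that decays geometrically, say $\delta_i = 2^{-(i+c)}$ for a constant $c$, so that $\sum_i \delta_i$ is a small constant by a union bound, while the per-call amplification cost is only $O(i) = O(\log(2^i))$ repetitions — but crucially the repetition cost $O(i)\cdot\sqrt{2^i}\,n$ still sums geometrically to $O(\sqrt{p}\,\log p\,n)$, which is \emph{not} quite clean. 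To truly remove the log one instead uses the Las-Vegas flavor of Grover's search (Lemma~\ref{lem:grover_expected_find}) directly inside a single run that keeps extending the matching with no a-priori query cap, stopping it by a global Markov-type argument: run the non-halting Algorithm~\ref{alg:quantum_threshold_matching_run_forever}, whose expected number of queries to reach a maximal matching is $O(\sqrt{p}\,n)$ by the same integral bound $\sum_{i=1}^{p} 8\sqrt{n^2/(p-i+1)} \le 16 n\sqrt{p}$ as in the proof of Lemma~\ref{lem:find_k_restricted_maximal_matching}, and truncate at $96\,n\sqrt{p}$ queries — except we do not know $p$. The resolution is to run it with an \emph{exponentially growing} query budget: attempt $t = 1, 2, 4, \dots$ and abort attempt $t$ after $C\,n\sqrt{2^t}$ queries; by Markov, a given attempt succeeds in producing a maximal matching with constant probability once $2^t \ge p$, the budgets again sum geometrically to $O(\sqrt{p}\,n)$, and the probability of needing many attempts past the threshold decays geometrically, so the \emph{expected} total is $O(\sqrt{p}\,n)$ and a final Markov bound converts this to a bounded-error algorithm with the stated query complexity. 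I expect the write-up to consist mainly of verifying this geometric-sum-plus-Markov argument and checking that "the procedure returned something of size $\le$ its own budget-threshold implies it is maximal" holds for the truncated runs.
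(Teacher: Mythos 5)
Your plan diverges completely from the paper's proof, and it contains one genuine gap at exactly the point you defer to ``the write-up.'' You never specify a stopping rule that the algorithm can actually evaluate. The natural one --- halt at attempt $t$ if the truncated run returns a matching of size at most $2^t$ --- does \emph{not} deterministically certify maximality: a truncated run can return a small, non-maximal matching simply because the Las Vegas Grover searches were unlucky within the budget $Cn\sqrt{2^t}$. Lemma~\ref{lem:find_k_restricted_maximal_matching} only guarantees ``maximal or large'' with probability $5/6$ per call, so the statement you propose to ``check,'' namely that output size at most the threshold implies maximality, is false as stated; it holds only up to a constant failure probability per attempt, and summing a constant over the $\Theta(\log p)$ attempts with $2^t<p$ destroys bounded error --- the very logarithm you set out to avoid. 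Your observation that ``the probability of needing many attempts past the threshold decays geometrically'' controls the query count, not this error. The missing ingredient is a bound showing the premature-halt probability itself decays with $t$: since every maximal matching of $G$ has size at least $p/2$, the expected number of queries for the non-halting algorithm to acquire $2^t+1$ edges is $O(n2^t/\sqrt{p})$ whenever $2^t\le p/4$, so by Markov the probability that a budget of $Cn\sqrt{2^t}$ leaves the matching at size at most $2^t$ is $O\bigl(\sqrt{2^t/p}\bigr)$, and these quantities sum geometrically to a constant that can be made small by choosing $C$ large. With that estimate (plus a constant-probability bound for the $O(1)$ attempts with $2^t\ge p/4$) your scheme does work, but without it the proof is incomplete.

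You should also know that the paper's argument is far shorter and uses a genuinely different tool, which the lemma's label already hints at: the Lin--Lin / Beigi--Taghavi guessing-scheme theorem (Theorem~\ref{thm:guessing_tree_algorithm}). Take the classical greedy algorithm that scans all vertex pairs and queries a pair only when both endpoints are still unmatched; its decision tree has depth $T\le n^2$, and under the guessing scheme that always guesses ``no edge,'' the number of incorrect guesses on any root-to-leaf path equals the number of matching edges found, hence $I\le p$. Theorem~\ref{thm:guessing_tree_algorithm} immediately gives query complexity $O(\sqrt{TI})=O(\sqrt{p}\,n)$, with no knowledge of $p$, no doubling, and no error accumulation to manage. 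Your opening sentence invokes ``the decision-tree trick,'' but what you then build is the unrelated exponential-search trick; the actual decision-tree method is what makes this lemma a two-line corollary.
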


To show this lemma, we use the decision tree method developed by Lin--Lin \cite{lin2015upper} and Beigi--Taghavi \cite{beigi2020quantum}.
Lin--Lin developed a remarkable tool to obtain an upper bound on the quantum query complexity of computing a function $F \colon \{0, 1\}^n \rightarrow [m]$ from a classical query algorithm of computing $F$.
Here, the input of the function $F$ can be accessed via queries to its bits.
Beigi--Taghavi generalized Lin--Lin's result for a function $F \colon [l]^n \rightarrow [m]$ with non-binary input.


A classical query algorithm can be represented as a decision tree.
In a decision tree,
each internal vertex represents a query, and each outgoing edge from a vertex represents a possible outcome of the query.
Multiple query outcomes can be combined into a single edge if the algorithm's future decisions are not dependent on which query outcome occurred.
Each leaf vertex is labeled by an element in $[m]$, which corresponds to an output of the algorithm.
Given such a decision tree, we can construct a guessing scheme.
In a guessing scheme, we choose exactly one outgoing edge from each vertex as the guess.
If the guess matches the outcome of the query, we call it a correct guess. 
Otherwise, we call it an incorrect guess.

\begin{theorem}[Decision tree method {\cite[Theorem 4]{beigi2020quantum}}] \label{thm:guessing_tree_algorithm}
Given a decision tree $\T$, for a function $F \colon D_F \rightarrow [m]$ with $D_F \subseteq [l]^n$, and a guessing scheme of $\T$.
Let $T$ be the depth of $\T$.
Let $I$ be the maximum number of incorrect guesses in any path from the root to a leaf of $\T$.
Then, the quantum query complexity of computing the function $F$  with bounded error is $O(\sqrt{T I})$.
\end{theorem}

\begin{proof}[Proof of Lemma \ref{lem:quantum_maximal_matching_via_decision_tree_method}]
To create a decision tree, we use a simple greedy algorithm described as Algorithm \ref{alg:classical_maximal_matching}.

\begin{algorithm}[H]
    \KwInput{An unweighted graph $G$, an integer $k$}
    \KwOutput{A maximal matching $M$}
    $M \gets \emptyset$ \\
    $L \leftarrow $ all $0$ array of size $n$ \\
    \For{$v \in V(G)$} {
        \For{$u \in V(G)$} {
            \lIf{$L(v) = 1$ or $L(u) = 1$} {
                \Continue
            }
            \If{$(v, u) \in E(G)$ confirmed by a query} {
                $M \gets M \cup \{(v, u)\}$ \\
                $L(i) \gets 1$, $L(j) \gets 1$ \\
            }
        }
    }
    \Return {the maximal matching $M$}
    \caption{\texttt{ClassicalMaximalMatching}}\label{alg:classical_maximal_matching}
\end{algorithm}

The depth of the decision tree is $T \leq n^2$.
This algorithm returns a maximal matching of size at most $p$.
Thus, if we always guess that there is no edge between any two vertices, we make at most $I \leq p$ mistakes.
Then, Theorem \ref{thm:guessing_tree_algorithm} implies that the quantum query complexity of the maximal matching problem is $O(\sqrt{TI}) = O(\sqrt{p} n)$, which completes the proof.
\end{proof}

\section{Lower Bounds} \label{sec:lower_bound}



In Section \ref{subsec:tools_lower_bound}, we provide tools to obtain lower bounds on the quantum and randomized query complexities, and
in Sections \ref{subsec:lower_bound_of_vertex_cover} and \ref{subsec:lower_bound_of_matching}, we present our lower bounds for the {\sc $k$-vertex cover} and {\sc $k$-matching} problems in the adjacency matrix model, respectively.


\subsection{Tools for Lower Bounds} \label{subsec:tools_lower_bound}
There are several methods to show lower bounds on the quantum query complexity, such as 
the polynomial method of Beals--Buhrman--Cleve--Mosca--Wolf \cite{beals2001quantum} and
the adversary method of Ambainis \cite{ambainis2000quantum}.
To obtain our lower bounds in this paper, we use the adversary method of Ambainis.

\begin{theorem}[Adversary method {\cite[Theorem 6]{ambainis2000quantum}}] \label{thm:ambainis_quantum_lower_bound}
Let $f \colon \{0, 1\}^N \rightarrow \{0, 1\}$ be a function, and
let $X$ be a set of elements $x \in \{0, 1\}^N$ such that $f(x) = 1$ and $Y$ be a set of elements $y \in \{0, 1\}^N$ such that $f(y) = 0$.
Let $R \subseteq X \times Y$ be a relation.
Let the values $m, m', l_{x, i}, l'_{y, i}$ for $x \in X, y \in Y$ and $i \in [N]$ be such that the following properties hold.
\begin{itemize}
 \item For every $x \in X$, there are at least $m$ different $y \in Y$ such that $(x, y) \in R$.
 \item For every $y \in Y$, there are at least $m'$ different $x \in X$ such that $(x, y) \in R$.
 \item For every $x \in X$ and $i \in [N]$, there are at most $l_{x, i}$ different $y \in Y$ such that $(x, y) \in R$ and $x_i \neq y_i$.
 \item For every $y \in Y$ and $i \in [N]$, there are at most $l'_{y, i}$ different $x \in X$ such that $(x, y) \in R$ and $x_i \neq y_i$.
\end{itemize}
Then, the query complexity of any quantum algorithm computing $f$ with probability at least $2/3$ is $\Omega\left(\sqrt{\frac{mm'}{l_{max}}}\right)$, where $l_{max}$ is the maximum of $l_{x, i}l'_{y, i}$ over all $(x, y) \in R$ and $i \in [N]$ such that $x_i \neq y_i$.
\end{theorem}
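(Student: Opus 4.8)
This is Ambainis's adversary theorem, so the plan is to reproduce the standard \emph{adversary argument}. Fix a quantum algorithm computing $f$ with error at most $1/3$ using $T$ queries, and for an input $z\in\{0,1\}^N$ let $\ket{\psi_z^t}$ be the algorithm's state just before its $(t+1)$-st oracle call; since the pre-processing unitary $U_0$ is input-independent, $\ket{\psi_z^0}$ is the same unit vector for all $z$. The central object is the progress measure
\[
  W_t \;=\; \sum_{(x,y)\in R}\braket{\psi_x^t|\psi_y^t}.
\]
First I would pin down the two endpoints. At $t=0$ we get $W_0=|R|$. At $t=T$, every $(x,y)\in R$ has $f(x)=1\ne 0=f(y)$, so the final measurement distinguishes $\ket{\psi_x^T}$ from $\ket{\psi_y^T}$ with error at most $1/3$; the pure-state relation between trace distance and fidelity then forces $|\braket{\psi_x^T|\psi_y^T}|\le 2\sqrt{2}/3=:c_0<1$, hence $|W_T|\le c_0|R|$. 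So the progress must change by at least $(1-c_0)|R|$ in total, and it suffices to prove a per-step bound $|W_{t+1}-W_t|=O(\sqrt{l_{\max}\,|X|\,|Y|})$. Combined with $|R|\ge m|X|$ and $|R|\ge m'|Y|$ (so $|R|\ge\sqrt{mm'}\sqrt{|X||Y|}$), this yields $T=\Omega\big(\sqrt{mm'/l_{\max}}\big)$.

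The per-step bound is the heart of the argument. I would decompose each state by its query register, $\ket{\psi_z^t}=\sum_i\ket{i}\ket{\phi_{z,i}}$, with $a_{z,i}:=\|\ket{\phi_{z,i}}\|$ and $\sum_i a_{z,i}^2=1$. Since $\braket{\psi_x^{t+1}|\psi_y^{t+1}}=\braket{O_x\psi_x^t|O_y\psi_y^t}$ and the input oracles $O_x,O_y$ agree on every coordinate $i$ with $x_i=y_i$, only the coordinates with $x_i\ne y_i$ contribute, and a short calculation bounds the contribution of each by $2a_{x,i}a_{y,i}$; thus $|W_{t+1}-W_t|\le 2\sum_{(x,y)\in R}\sum_{i:\,x_i\ne y_i}a_{x,i}a_{y,i}$. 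Now apply the weighted AM--GM inequality $a_{x,i}a_{y,i}\le\frac12\big(\gamma\sqrt{l'_{y,i}/l_{x,i}}\,a_{x,i}^2+\gamma^{-1}\sqrt{l_{x,i}/l'_{y,i}}\,a_{y,i}^2\big)$ with a single global parameter $\gamma>0$, and double-count: for fixed $x,i$ there are at most $l_{x,i}$ related $y$ with $x_i\ne y_i$, each with $l'_{y,i}\le l_{\max}/l_{x,i}$, so the $a_{x,i}^2$-part contributes at most $\gamma\sqrt{l_{\max}}\,|X|$ and, symmetrically, the $a_{y,i}^2$-part at most $\gamma^{-1}\sqrt{l_{\max}}\,|Y|$. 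Optimizing $\gamma=\sqrt{|Y|/|X|}$ gives $|W_{t+1}-W_t|=O(\sqrt{l_{\max}|X||Y|})$, completing the chain.

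The only genuine subtlety is this per-step estimate: a per-pair choice of the AM--GM split forces $(|X|+|Y|)$ rather than $\sqrt{|X||Y|}$ into the denominator, which destroys the geometric-mean form of $\sqrt{mm'}$ when $|X|$ and $|Y|$ are badly unbalanced; using the one global balancing constant $\gamma$ (equivalently, rescaling the $x$- and $y$-sides of the relation uniformly) is exactly what repairs this. Everything else is routine bookkeeping: the endpoint computations, the reduction to coordinates with $x_i\ne y_i$, and tracking the two counting hypotheses through the double sum. (The degenerate cases $X=\emptyset$ or $Y=\emptyset$, where the claimed bound is vacuous, are handled trivially.)
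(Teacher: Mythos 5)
The paper does not prove this statement; it is imported verbatim as Theorem~6 of Ambainis's adversary-method paper, so there is no in-paper proof to compare against. Your argument is a correct reproduction of Ambainis's original progress-function proof --- the endpoints $W_0=|R|$ and $|W_T|\le\frac{2\sqrt 2}{3}|R|$, the reduction of the per-step change to coordinates with $x_i\ne y_i$, the bound $2a_{x,i}a_{y,i}$ per such coordinate, and the double counting against $l_{x,i}$ and $l'_{y,i}$ are all as in the standard proof --- and you correctly identify and resolve the one real subtlety, namely that a global balancing parameter $\gamma=\sqrt{|Y|/|X|}$ is needed to get $\sqrt{|X||Y|}$ rather than $|X|+|Y|$ in the per-step bound.
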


For our lower bounds on the randomized query complexity, we use the following theorem of Aaronson \cite{aaronson2004lower}.

\begin{theorem}[{\cite[Theorem 5]{aaronson2004lower}}] \label{thm:aaronson_classical_lower_bound}
Let $f, X, Y, R, m, m', l_{x, i}, l'_{y, i}$ be as in Theorem \ref{thm:ambainis_quantum_lower_bound}.
Then, the query complexity of any randomized algorithm computing $f$ with probability at least $2/3$ is $\Omega(1/v)$,
where $v$ is the maximum of $\min\{l_{x, i}/m, l'_{y, i}/m'\}$ over all $(x, y) \in R$ and $i \in [N]$ such that $x_i \neq y_i$.
\end{theorem}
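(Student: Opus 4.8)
The plan is to prove the bound by Yao's minimax principle combined with a weighted progress argument that is the classical analogue of the adversary argument behind Theorem~\ref{thm:ambainis_quantum_lower_bound}. By Yao's principle, to lower bound the bounded-error randomized query complexity of $f$ it suffices to fix a hard input distribution $\mathcal{D}$ and show that every deterministic decision tree of depth $T$ that is correct with probability at least $2/3$ under $\mathcal{D}$ satisfies $T = \Omega(1/v)$. I would take $\mathcal{D}$ to put half of its mass on $X$ (labelled $1$) and half on $Y$ (labelled $0$), with the conditional laws on $X$ and on $Y$ determined by a weight scheme $w \colon R \to \mathbb{R}_{\ge 0}$ on the related pairs; the existence of a useful $w$ is guaranteed by the hypotheses that every $x \in X$ has at least $m$ partners and every $y \in Y$ has at least $m'$ partners in $R$.

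The combinatorial core is the observation that a pair $(x, y) \in R$ is \emph{unresolved} by a deterministic tree $A$ if $x$ and $y$ follow the same root-to-leaf path; since $f(x) = 1 \ne 0 = f(y)$, any such pair forces $A$ to answer incorrectly on $x$ or on $y$. Consequently, if the $w$-weight of unresolved pairs is a constant fraction of the total weight $W = \sum_{(x,y) \in R} w(x,y)$, then $A$ errs under $\mathcal{D}$ with constant probability, which for appropriately chosen constants exceeds $1/3$. So it remains to show that a tree of depth $o(1/v)$ cannot resolve most of the weight.

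For that I would run a potential argument along the computation: let $\Phi_t$ be the $w$-weight of pairs $(x, y) \in R$ whose two computation paths still agree after the first $t$ query levels, so $\Phi_0 = W$ and $\Phi_T$ is the weight of unresolved pairs. A pair is eliminated between levels $t$ and $t + 1$ only at a query to a bit $i$ with $x_i \ne y_i$; such an elimination at a given node can be charged either to $x$ — where the eliminated partners number at most $l_{x,i}$ out of at least $m$, contributing a fraction at most $l_{x,i}/m$ of $x$'s mass — or to $y$, contributing at most $l'_{y,i}/m'$ of $y$'s mass. Charging each elimination to its cheaper side and summing the surviving $X$- and $Y$-masses over the nodes at level $t$ (which total at most $W$) yields $\Phi_t - \Phi_{t+1} \le v \cdot W$ for every $t$, since $\min\{l_{x,i}/m,\, l'_{y,i}/m'\} \le v$ for every related pair and every incidence bit. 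Iterating gives $\Phi_T \ge W(1 - Tv)$, so whenever $T \le 1/(2v)$ at least half of the weight remains unresolved; combined with the previous paragraph this contradicts the error bound, and therefore $T = \Omega(1/v)$, which is the claimed lower bound on the randomized query complexity of $f$.

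The step I expect to be the main obstacle is the design of the weight scheme $w$ together with the charging rule, so that the per-query loss comes out as the \emph{minimum} of $l_{x,i}/m$ and $l'_{y,i}/m'$ — matching the form of $v$ in the statement — rather than as their sum or maximum; getting this requires balancing an $X$-side and a $Y$-side fractional assignment on $R$ and ensuring that the two ``at most'' estimates can be applied, node by node, to the same surviving mass, which is exactly the bookkeeping encapsulated by the weighted adversary machinery. Only minor care is needed elsewhere: one should check that $m, m'$ enter the inequalities only as lower bounds and $l_{x,i}, l'_{y,i}$ only as upper bounds (so every estimate is used in the favourable direction), and that the Yao reduction's $1/3$-versus-$2/3$ constants close up with the constants produced by the potential argument.
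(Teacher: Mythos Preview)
This theorem is quoted from \cite{aaronson2004lower} and the paper does not supply its own proof, so there is nothing in the paper to compare your proposal against.

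Your sketch is the standard argument and is correct in outline: Yao's principle reduces to deterministic trees, and the potential $\Phi_t$ (the surviving $R$-weight) drops by at most $O(vW)$ per level, forcing $T=\Omega(1/v)$. The step you flag as the main obstacle --- obtaining the $\min$ rather than a sum or max --- is in fact routine and does not require any nonuniform weight scheme or fractional balancing. Take $w\equiv 1$ on $R$, so $W=|R|$ and each $x$ has weight $W_X(x)\ge m$, each $y$ has $W_Y(y)\ge m'$. At a node querying bit $i$, for each separated pair $(x,y)$ the definition of $v$ guarantees $l_{x,i}\le vm$ or $l'_{y,i}\le vm'$; charge the pair to $x$ in the first case and to $y$ otherwise. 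The total charge to a fixed $x$ at that node is then either $0$ (if $l_{x,i}>vm$) or at most $l_{x,i}\le vm\le v\,W_X(x)$, and symmetrically for $y$. Summing over the unique depth-$t$ node each input visits gives $\Phi_t-\Phi_{t+1}\le v\sum_x W_X(x)+v\sum_y W_Y(y)=2vW$, and the error-probability conversion you outline (unseparated weight $\le 2W\cdot\Pr[\text{error}]$ under the distribution $D_X(x)\propto W_X(x)$, $D_Y(y)\propto W_Y(y)$) closes the argument with the right constants.
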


\subsection{Lower Bounds for Vertex Cover} \label{subsec:lower_bound_of_vertex_cover}



In this subsection, we first present an $\Omega(\sqrt{k} n)$ lower bound on the quantum query complexity of the {\sc $k$-vertex cover} problem when $k \leq n / 3$.
Next, by extending the proof for the case of $k \leq n / 3$ to the case of $k \leq (1 - \epsilon) n$, we obtain an $\Omega(\sqrt{k} n)$ lower bound on the quantum query complexity for any $k \leq (1 - \epsilon) n$.
Finally, we present an $\Omega(n^2)$ lower bound on the randomized query complexity of the {\sc $k$-vertex cover} problem for any $k < n - 1$.

\begin{lemma} \label{thm:lower_bound_of_vertex_cover_small_k}
For any integer $k \leq n / 3$, 
given a graph $G$ with n vertices in the adjacency matrix model,
the quantum query complexity of deciding whether $G$ has a vertex cover of size at most $k$ with bounded error is $\Omega(\sqrt{k} n)$.
\end{lemma}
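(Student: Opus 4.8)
The plan is to apply Ambainis' adversary method (Theorem~\ref{thm:ambainis_quantum_lower_bound}) to a restricted family of inputs on which any correct algorithm for the {\sc $k$-vertex cover} problem must already succeed. Concretely, I would let $X$ be the set of graphs on the vertex set $[n]$ that form a matching with exactly $k$ edges; each such graph has a vertex cover of size $k$ (take one endpoint per edge), so these are yes-instances, and $f(x)=1$. I would let $Y$ be the set of graphs that form a matching with exactly $k+1$ edges; covering $k+1$ disjoint edges requires $k+1$ vertices, so these are no-instances and $f(y)=0$. In the adjacency matrix model an input is a bit string of length $N=\binom{n}{2}$ indexed by unordered pairs, and a lower bound for distinguishing $X$ from $Y$ is a lower bound for the {\sc $k$-vertex cover} problem.

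For the adversary relation I would take $(x,y)\in R$ iff the edge set of $y$ is obtained from that of $x$ by adding a single edge disjoint from all edges of $x$; equivalently $x\subseteq y$ as edge sets with $|E(y)|=|E(x)|+1$. The counting parameters are then immediate. For a fixed $k$-matching $x$, a valid $y$ corresponds to choosing a new edge among the $n-2k$ vertices not covered by $x$, so $m=\binom{n-2k}{2}$. For a fixed $(k+1)$-matching $y$, a valid $x$ corresponds to deleting one of its $k+1$ edges, so $m'=k+1$. For the per-coordinate parameters, if $i=\{u,v\}$ and $x_i\neq y_i$ then necessarily $x_i=0$, $y_i=1$, and $i$ is precisely the swing edge, so $y$ is forced to equal $x\cup\{i\}$ and $x$ is forced to equal $y\setminus\{i\}$; hence $l_{x,i}\le 1$ and $l'_{y,i}\le 1$ for all $x,y,i$, giving $l_{\max}\le 1$.

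Plugging these into Theorem~\ref{thm:ambainis_quantum_lower_bound} yields a lower bound of $\Omega\bigl(\sqrt{mm'/l_{\max}}\bigr)=\Omega\bigl(\sqrt{(k+1)\binom{n-2k}{2}}\bigr)$. Using the hypothesis $k\le n/3$ we have $n-2k\ge n/3$, so $\binom{n-2k}{2}=\Omega(n^2)$ (for $n$ larger than an absolute constant; for smaller $n$ the claim is vacuous, and the case $k=0$ is trivial), and therefore the bound becomes $\Omega(\sqrt{k}\,n)$, as claimed. I do not expect a genuine obstacle here; the only care needed is the bookkeeping that $x_i\neq y_i$ can occur only at the unique swing edge, so the $l$-parameters are truly at most $1$, and noticing that the restriction $k\le n/3$ is exactly what keeps $\binom{n-2k}{2}$ of order $n^2$ — the estimate degrades as $k$ approaches $n/2$, which is why the wider range $k\le(1-\epsilon)n$ in Theorem~\ref{thm:main_theorem_matrix_model_lower_bound} requires the separate clique-based construction sketched in the overview.
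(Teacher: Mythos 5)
Your proposal is correct and matches the paper's own proof essentially verbatim: the same sets $X$ (matchings of size $k$) and $Y$ (matchings of size $k+1$), the same containment relation $R$, the same parameter counts $m=\Omega(n^2)$, $m'=\Omega(k)$, $l_{x,i},l'_{y,i}\le 1$, and the same application of the adversary method. No differences worth noting.
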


\begin{proof}
Let $X$ and $Y$ be the sets of graphs on $n$ vertices defined as follows:
\begin{align*}
X = & \left\{(V, M) \, \middle| \, M \subseteq \binom{V}{2} \text{ is a matching with } |M| = k \right\}, \\
Y = & \left\{(V, M) \, \middle| \, M \subseteq \binom{V}{2} \text{ is a matching with } |M| = k + 1 \right\}.
\end{align*}
Clearly, all graphs in $X$ have a vertex cover of size $k$, and all graphs in $Y$ do not have a vertex cover of size at most $k$.
Let $(x, y) \in R$ if $E(x) \subset E(y)$.
Now, we compute the relevant quantities from Theorem \ref{thm:ambainis_quantum_lower_bound}.
Recall that $m$ is the minimum number of graphs $y \in Y$ that each graph $x \in X$ is related to.
Each graph in $X$ can be transformed to a related graph in $Y$ by adding an edge. 
This yields $m = (n - 2 k)(n - 2k - 1) / 2 = \Omega(n^2)$, where we recall that $k \leq n / 3$.
Similarly, we obtain $m' = \Omega(k)$ since
each graph in $Y$ can be transformed to a related graph in $X$ by deleting an edge. 
For any related pairs $(x, y) \in R$,
there is no edge that is present in $x$ and absent in $y$.
Then, $l_{x, i}$ is the maximum number of graphs $y \in Y$ that are related to $x \in X$ where the $i$-th edge is absent in $x$ and present in $y$.
This yields $l_{x, i} \leq 1$, and similarly we can obtain $l'_{y, i} \leq 1$.
Therefore, by Theorem \ref{thm:ambainis_quantum_lower_bound}, we obtain a lower bound of $\Omega(\sqrt{n^2 k}) = \Omega(\sqrt{k}n)$ when $k \leq n / 3$, which completes the proof.
\end{proof}

Next, by extending the proof for the case of $k \leq n / 3$ to the case of $k \leq (1 - \epsilon) n$, we present an $\Omega(\sqrt{k} n)$ lower bound on the quantum query complexity for any $k \leq (1 - \epsilon) n$.\footnote{In the case where $k = n - 2$, we only have to check whether the input graph $G$ is a complete graph. Thus, when $k = n - 2$, we can solve the {\sc $k$-vertex cover} problem using $O(n)$ queries by simply using Grover search. This implies that the $\Omega(\sqrt{k} n)$ lower bound is not true when $k$ is  sufficiently close to $n$.}
In the proof of Lemma \ref{thm:lower_bound_of_vertex_cover_small_k}, we consider graphs that consist of only isolated edges in the construction of the set of graphs $X$ and $Y$.
In this construction, if we try to construct a graph that does not contain a vertex cover of size at most $k$, the graph must have at least $2 (k + 1)$ vertices.
Instead, in the proof for the case of $k \leq (1 - \epsilon) n$, we consider graphs that consist of only cliques of size $t$.
In this construction, if we try to construct a graph that does not contain a vertex cover of size at most $k$, the graph must have at least $\lceil \frac{k + 1}{t - 1} \rceil \cdot t$ vertices.
This is because a clique of size $t$ does not contain a vertex cover of size less than $t - 1$.
Now, we prove Theorem \ref{thm:main_theorem_matrix_model_lower_bound}, which is restated below.


\quantumvclb*

\begin{proof}
By Lemma \ref{thm:lower_bound_of_vertex_cover_small_k}, it remains to prove the claim for the case of $0 < \epsilon < 2/3$.
Now, fix a constant $0 < \epsilon < 2/3$.
Let $\displaystyle t = \biggl\lceil \frac{3}{\epsilon} \biggr\rceil$, and $\displaystyle c = \biggl\lfloor \frac{k}{t - 1} \biggr\rfloor$.
We note that $\displaystyle \biggl\lceil \frac{k + 1}{t - 1} \biggr\rceil = c + 1$.
Here, $c \cdot t \leq \frac{k}{t - 1} \cdot t = k + \frac{k}{t - 1} \leq \left(1 - \epsilon\right) n + \frac{\epsilon n}{2} = \left(1 - \frac{\epsilon}{2}\right) n$.
Then, we have $n - c \cdot t \geq \frac{\epsilon n}{2} = \Omega(n)$, where we recall that $\epsilon = \Theta(1)$.
We also have $(c + 1) \cdot t \leq \left(1 - \frac{\epsilon}{2}\right) n + \lceil \frac{3}{\epsilon} \rceil \leq n$.
Let $X$ and $Y$ be the sets of graphs on $n$ vertices defined as follows:
\begin{align*}
X = & \left\{ \left(V,\, \bigcup_{i = 1}^{c} \binom{W_i}{2} \right) \, \middle| \, \text{$\{W_i\subseteq V\}_{i\in[c]}$ is a family of disjoint subsets of size $t$} \right\}, \\
Y = & \left\{ \left(V,\, \bigcup_{i = 1}^{c + 1} \binom{W_i}{2} \right) \, \middle| \, \text{$\{W_i\subseteq V\}_{i\in[c + 1]}$ is a family of disjoint subsets of size $t$} \right\}.
\end{align*}
Note that all graphs in $X$ have a vertex cover of size at most $k$, and all graphs in $Y$ do not.
Let $(x, y) \in R$ if $E(x) \subset E(y)$.
Now, we compute the relevant quantities from Theorem \ref{thm:ambainis_quantum_lower_bound}.
Recall that $m$ is the minimum number of graphs $y \in Y$ that each graph $x \in X$ is related to.
Each graph in $X$ can be transformed into a related graph in $Y$ by adding a clique of size $t$. 
This yields $m = \binom{n - c \cdot t}{t}$.
Similarly, we obtain $m' = c + 1 = \Omega(k)$ since
each graph in $Y$ can be transformed into a related graph in $X$ by deleting a clique of size $t$. 
For any related pairs $(x, y) \in R$,
there is no edge that is present in $x$ and absent in $y$.
Then, $l_{x, i}$ is the maximum number of graphs $y \in Y$ that are related to $x \in X$ where the $i$--th edge is absent in $x$ and is present in $y$.
This yields $l_{x, i} \leq \binom{n - c \cdot t - 2}{t - 2}$, and similarly we can obtain $l'_{y, i} \leq 1$.
Therefore, by Theorem \ref{thm:ambainis_quantum_lower_bound}, we obtain a lower bound of $\Omega(\sqrt{n^2 k}) = \Omega(\sqrt{k}n)$, which completes the proof.
\end{proof}

By using Theorem \ref{thm:aaronson_classical_lower_bound}, we can also obtain the lower bound on the randomized query complexity of the {\sc $k$-vertex cover} problem.
Now, we prove Proposition \ref{thm:main_theorem_classical_lower_bound}, which is restated below.


\classicalvclb*

\begin{proof}
The proof will be divided into two steps.
We first give the proof in the case where $k \leq n / 3$, and next, we give the proof in the case where $k \geq n / 3$.

Now, we present an $\Omega(n^2)$ lower bound in the case where $k \leq n / 3$.
We apply, for Theorem \ref{thm:aaronson_classical_lower_bound}, $X, Y, R \subseteq X \times Y$ from the proof of Lemma \ref{thm:lower_bound_of_vertex_cover_small_k}.
By the arguments in the proof of Lemma \ref{thm:lower_bound_of_vertex_cover_small_k},
the relevant quantity $v$ from Theorem \ref{thm:aaronson_classical_lower_bound} is $O(1 / n^2)$.
Thus, by Theorem \ref{thm:aaronson_classical_lower_bound}, the lower bound on the randomized query complexity of the {\sc $k$-vertex cover} problem is $\Omega(n^2)$ when $k \leq n / 3$.

Next we present an $\Omega(n^2)$ lower bound in the case where $k \geq n / 3$.
Let $X$ and $Y$ be the sets of graphs on $n$ vertices defined as follows:
\begin{align*}
X = & \left\{\left(V, \binom{W}{2} \setminus \{ (v, u)\}\right) \,\middle|\, W \subseteq V,\,  |W|= k+2, \text{$v \in W$, and $u \in W$} \right\}, \\
Y = & \left\{\left(V, \binom{W}{2} \right) \,\middle|\, W \subseteq V,\,  |W|= k+2 \right\}.
\end{align*}
Note that all graphs in $X$ have a vertex cover of size $k$, and all graphs in $Y$ do not have a vertex cover of size at most $k$.
Let $(x, y) \in R$ if $E(x) \subset E(y)$.
Now, we compute the relevant quantities from Theorem \ref{thm:ambainis_quantum_lower_bound}.
Recall that $m'$ is the minimum number of graphs $x \in X$ that each graph $y \in Y$ is related to.
Each graph in $Y$ can be transformed into a related graph in $X$ by deleting an edge. 
This yields $m' = (k + 2)(k + 1) / 2 = \Omega(n^2)$, where we recall that $k \geq n / 3$.
Similarly, we obtain $m = 1$ since
each graph in $X$ can be transformed to a related graph in $Y$ by adding the edge $(v, u)$. 
For any related pairs $(x, y) \in R$,
there is no edge that is present in $x$ and absent in $y$.
Then, $l_{x, i}$ is the maximum number of graphs $y \in Y$ that are related to $x \in X$ where the $i$--th edge is absent in $x$ and present in $y$.
This yields $l_{x, i} \leq 1$, and similarly we can obtain $l'_{y, i} \leq 1$.
Here, the relevant quantity $v$ from Theorem \ref{thm:aaronson_classical_lower_bound} is $O(1 / n^2)$.
Thus, by Theorem \ref{thm:aaronson_classical_lower_bound}, the lower bound on the randomized query complexity for the {\sc $k$-vertex cover} problem is $\Omega(n^2)$ in the case where $k \geq n / 3$, which completes the proof.
\end{proof}

\subsection{Lower Bounds for Matching} \label{subsec:lower_bound_of_matching}

In this subsection, we first present an $\Omega(\sqrt{k} n)$ lower bound on the quantum query complexity of the {\sc $k$-matching} problem.
Next, we present an $\Omega(n^2)$ lower bound on the randomized query complexity of the {\sc $k$-matching} problem.


First, we prove Theorem \ref{thm:main_theorem_matrix_model_lower_bound_matching}, which is restated below.

\quantummatchinglb*

\begin{proof}
The proof will be divided into two steps.
We first give the proof of an $\Omega(\sqrt{k}n)$ lower bound for the case of $k \leq n / 3$, which is almost the same as the proof of Lemma \ref{thm:lower_bound_of_vertex_cover_small_k}.
Next we give the proof of an $\Omega(k^{3/2})$ lower bound for any $k \leq n / 2$,
which is almost the same as the proof of the $\Omega(n^{3/2})$ lower bound for the bipartite perfect matching problem by Zhang \cite{zhang2004power}.

Now, we present an $\Omega(\sqrt{k}n)$ lower bound in the case of $k \leq n / 3$.
We define $X, Y, R \subseteq X \times Y$ in almost the same way as in the proof of Lemma \ref{thm:lower_bound_of_vertex_cover_small_k}.
Let $X$ and $Y$ be the sets of graphs on $n$ vertices defined as follows:
\begin{align*}
X = & \left\{(V, M) \, \middle| \, M \subseteq \binom{V}{2} \text{ is a matching with } |M| = k - 1 \right\}, \\
Y = & \left\{(V, M) \, \middle| \, M \subseteq \binom{V}{2} \text{ is a matching with } |M| = k \right\}.
\end{align*}
Note that all graphs in $X$ do not have a matching of size at least $k$, and all graphs in $Y$ have a matching of size $k$.
Let $(x, y) \in R$ if $E(x) \subset E(y)$.
By the same argument as the proof of Theorem \ref{thm:lower_bound_of_vertex_cover_small_k}, we obtain an $\Omega(\sqrt{k}n)$ lower bound when $k \leq n / 3$.

Next, we show an $\Omega(k^{3/2})$ lower bound.
We define $X', Y', R' \subseteq X' \times Y'$ in almost the same way as in the proof of the $\Omega(n^{3/2})$ lower bound for the bipartite perfect matching problem by Zhang \cite{zhang2004power}.
Let $X'$ and $Y'$ be the sets of graphs on $n$ vertices defined as follows:
\begin{equation*}
X' = \left\{(V, C) \, \middle| \, C \subseteq \binom{V}{2} \text{ is a cycle such that  } |C| = 2 k \right\}, 
\end{equation*}

\begin{equation*}
Y' = \left\{(V, C_1 \cup C_2) \, \middle| \, 
\begin{gathered}
 \text{$C_1, C_2 \subseteq \binom{V}{2}$ are two disjoint cycles such that } |C_1| + |C_2| = 2 k \text{,} \\ |C_1| \geq \frac{k}{2} \text{, } |C_2| \geq \frac{k}{2} \text{, and $|C_1|$ is odd}
\end{gathered}
 \right\}.
\end{equation*}
Note that all graphs in $X'$ have a matching of size $k$, and all graphs in $Y'$ do not have a matching of size at least $k$.
Let $(x', y') \in R'$ if there exists 4 vertices $a, b, c, d \in V$ such that $(a, b), (c, d) \in E(x')$, and $E(x') \setminus \{(a, b), (c, d)\} \cup \{(a, c),(b, d)\} = E(y')$. 
By the same argument of the proof of the $\Omega(n^{3/2})$ lower bound for the bipartite perfect matching problem by Zhang \cite{zhang2004power}, we obtain an $\Omega(k^{3/2})$ lower bound.

Therefore, we obtain an $\Omega(\sqrt{k}n)$ lower bound for any $k \leq n / 2$, which completes the proof.
\end{proof}


Next, we prove Proposition \ref{thm:main_theorem_matrix_model_classical_lower_bound_matching}, which is restated below.

\classicalmatchinglb*

\begin{proof}
We apply, for Theorem \ref{thm:aaronson_classical_lower_bound}, $X, Y, R \subseteq X \times Y$ and $X', Y', R' \subseteq X' \times Y'$ from the proof of Theorem \ref{thm:main_theorem_matrix_model_lower_bound_matching}.
By the arguments in the proof of lemma \ref{thm:lower_bound_of_vertex_cover_small_k},
the relevant quantity $v$ from Theorem \ref{thm:aaronson_classical_lower_bound} is $O(1 / n^2)$.
Thus, by Theorem \ref{thm:aaronson_classical_lower_bound}, the lower bound on the randomized query complexity of the {\sc $k$-matching} problem is $\Omega(n^2)$.
\end{proof}

\section{Concluding Remarks} \label{sec:conclusion}

In this paper, we consider the parameterized quantum query complexities of the {\sc $k$-vertex cover} and {\sc $k$-matching} problems.
We obtain the optimal query complexities for the {\sc $k$-vertex cover} and {\sc $k$-matching} problems for small $k$.

For the {\sc $k$-vertex cover} problem, our key observation in this paper is that the method of kernelization is useful for designing quantum query algorithms for parameterized problems.
We emphasize that this paper is the first work to demonstrate the usefulness of kernelization for the parameterized quantum query complexity.
On the other hand, there is a limitation of the techniques based on the method of kernelization.
We defined a quantum query kernelization algorithm as an algorithm that converts an input graph given as a quantum oracle into a kernel as a bit string.
Here, we observe that on the basis of our quantum query kernelization approach, it is impossible to show that the quantum query complexity of the {\sc $k$-vertex cover} problem is $k^{2 - \Omega(1)}$
since there is no kernel $(G', k')$ consisting of $O(k^{2-\epsilon})$ edges unless the polynomial hierarchy collapses \cite{dell2014satisfiability}.

It would be interesting to consider quantum-to-quantum kernelization algorithms, which convert an instance given as a quantum oracle into a smaller instance expressed by a quantum oracle.

For the {\sc $k$-matching} problem, we designed a quantum query algorithm, which iteratively finds an augmenting path or a vertex with at most one neighbor outside of the current maximal matching.
While our $k$-matching algorithm is optimal for small $k$, its query complexity is $\Theta(n^2)$ for $k = \Omega(n)$.
On the other hand, the currently known best non-parameterized upper bound is $O(n^{7/4})$ \cite{kimmel2021query}. 
We would be able to improve the parameterized quantum query complexity for the {\sc $k$-matching} problem for large $k$ by combining our approach with existing or new approaches.

\bibliography{bib2doi}

\end{document}